\newcommand{\N}{\mathbb{N}}
\newcommand{\Oh}{\mathcal{O}}
\newcommand{\kparam}[1]{\ensuremath{k_\mathrm{#1}}\xspace}
\newcommand{\klies}{\kparam{lies}}
\newcommand{\kfaults}{\kparam{faults}}
\newcommand{\ksum}{\kparam{sum}}
\newcommand{\kmax}{\kparam{max}}
\newcommand{\krep}{\kparam{rep}}
\newcommand{\kbswap}{\kparam{bswap}}
\newcommand{\krbswap}{\kparam{rbswap}}
\newcommand{\kbmov}{\kparam{bmov}}
\newcommand{\kseq}{\kparam{seq}}
\newcommand{\kswap}{\kparam{swap}}
\newcommand{\kaswap}{\kparam{aswap}}
\newcommand{\kinv}{\kparam{inv}}
\newcommand{\kainv}{\kparam{ainv}}
\newcommand{\kmov}{\kparam{mov}}
\theoremstyle{plain}
\newtheorem{lemma}{Lemma}
\newtheorem{theorem}{Theorem}
\newtheorem{proposition}{Proposition}
\newtheorem{corollary}{Corollary}
\newtheorem{claim}{Claim}
\theoremstyle{definition}
\theoremstyle{remark}
\DeclareMathOperator{\rank}{rank}
\DeclareMathOperator{\pos}{pos}
\DeclareMathOperator{\prev}{prev}
\DeclareMathOperator{\nxt}{next}
\DeclareMathOperator{\scc}{succ}
\DeclareMathOperator{\query}{query}
\def\@xfootnote[#1]{%
  \protected@xdef\@thefnmark{#1}%
  \@footnotemark\@footnotetext}
\newcommand\footnoteref[1]{\protected@xdef\@thefnmark{\ref{#1}}\@footnotemark}
\title{Robust and adaptive search}
\author{Yann Disser\footnote{Main work done while at TU Berlin. Current address: TU Darmstadt, \texttt{disser@mathematik.tu-darmstadt.de}} \and Stefan Kratsch\footnote{University of Bonn, \texttt{kratsch@cs.uni-bonn.de}}}
\begin{document}

\maketitle

\begin{abstract}
Binary search finds a given element in a sorted array with an optimal number of~$\log n$ queries. 
However, binary search fails even when the array is only slightly disordered or access to its elements is subject to errors. 
We study the worst-case query complexity of search algorithms that are robust to imprecise queries and that adapt to perturbations of the order of the elements.
We give (almost) tight results for various parameters that quantify query errors and that measure array disorder.
In particular, we exhibit settings where query complexities of~$\log n + ck$, $(1+\varepsilon)\log n + ck$, and $\sqrt{cnk}+o(nk)$ are best-possible for parameter value~$k$, any~$\varepsilon>0$, and constant~$c$.  
\end{abstract}


\section{Introduction}

Imagine a large register with $n$ files from which you wish to extract a particular file. 
All files are indexed by some key and the files are sorted by key value. 
Not knowing the distribution of the keys, you probably use binary search since looking at $\log n$ keys is best possible in the worst case. 
Unfortunately, however, other users have accessed files before you and have only returned the files to approximately the right place.
As a result, the register is unsorted, but at least each file is within some small number $k$ of positions of where it should be. 
How should you proceed? 
If you knew $k$ and $n$, at what ratio of $k$ vs.\ $n$ should you resort to a linear search of the register? 
If you do not know $k$, can you still do reasonably well? 
What if the register was recently moved, by packing the files into boxes, but in the process the order of the boxes got mixed up, and now there are large blocks of files that are far away from their correct locations? 
What if you misread some of the keys?
Situations like these are close to searching in a sorted register and there are plenty of parameters that measure closeness to a sorted array, e.g., maximum displacement or minimum block moves to sort, respectively persistent or temporary read errors. We give (almost) optimal algorithms for a large variety of these measures, and thereby establish for each of them exact regimes in which we can outperform a linear search of all elements, or even be almost as good as binary search.

More formally, we study the fundamental topic of comparison-based search, which is central to many algorithms and data structures~\cite{Knuth73BookV3,Mehlhorn84Book,Sedgewick1998Book}. In its most basic form, the search problem can be phrased in terms of locating an element~$e$ within a given array~$A$.
In order to search~$A$ efficiently, we need structure in the ordering of its elements:
In general, we cannot hope to avoid querying all entries to find~$e$.
The most prominent example of an efficient search algorithm that exploits special structure is \emph{binary search} for sorted arrays.
Binary search is best-possible for this case.
It needs only logarithmically many queries and is thus very well suited for searching extremely large collections of data. 
However, it heavily relies on perfect order and reliable access to the data.
For large and dynamically changing collections of data, both requirements may be difficult to ensure, but it may be reasonable to assume the number of imperfections to be bounded.
Accordingly, we ask: 
\emph{What is the best-possible search algorithm if the data may be disordered or we cannot access it reliably?}
\emph{In what regime of the considered measure is it better than linear search?}

We provide (almost) tight bounds on the query complexity of searching an array~$A$ with~$n$ entries for an element~$e$ in a variety of settings.
Each setting is characterized by bounding a different parameter $k$ that quantifies the imperfections regarding either our access to array elements or regarding the overall disorder of the data.
Note that one can always resort to linear search, which rules out lower bounds stronger than~$n$ comparisons.%
\footnote[\textdagger]{Accordingly, all (lower) bounds of the form~$f(n,k)$ throughout the paper are to be understood as~$\min\{f(n,k), n\}$. A naive bound of~$n$ can easily be obtained by scanning the whole array.\label{foot:trivial_bound}}
Table~\ref{tab:results} gives an overview of the parameters we analyze and our respective results.
Qualitatively, our results can be grouped into three groups of settings leading to different query complexities, and we briefly highlight each group in the following.

The first group contains the parameters~$\ksum$, $\kmax$, and~$\kinv$, which quantify the summed/maximum distance of each element from its position in sorted order and the number of element pairs in the wrong relative order, respectively (detailed definitions can be found the the corresponding sections).
For all of these parameters we are able to show that~$\log n + ck$ queries are necessary and sufficient, for constant~$c$.
Intuitively, this is the best complexity we can hope for: We cannot do better than~$\log(n)$ queries, and the impact of~$k$ on the query complexity is linear and can be isolated.

The second group of results is with respect to the parameters~$\klies$, $\kfaults$, as well as multiple parameters for edit distances that measure the number of element operations needed to sort~$A$.
The parameter~$\klies$ limits the number of queries that yield the wrong result, and $\kfaults$ limits the number of array positions that yield wrong query outcomes.
For bounded values of~\klies and~\kfaults we show that $e$ cannot be found with $\log n + ck$ queries using any binary-search-like algorithm.%
\footnote{We interpret the array as a binary tree (rooted at entry $n/2$, with the two children $n/4$, $3n/4$, etc.), and call an algorithm ``binary-search-like'' if it never queries a node (other than the root) before querying its parent.}
On the other hand, we provide an algorithm that needs~$(1+1/c)\log n + ck$ queries, for any~$c \geq 1$.
For bounded edit distances, it is easy to see that we need~$n$ queries if~$e$ need not be at its correct position relative to sorted order, since~$e$ can be moved anywhere with just 2 edits, forcing us to scan the whole array.
If we assume~$e$ to be at its correct location, we can carry over the results for \klies and \kfaults to obtain the same bounds for the edit-distance related parameters \krep, \kseq, \kmov, and \kswap. 

Lastly, we consider the parameter~$\kainv$ that counts the number of adjacent elements that are in the wrong relative order, as well as several parameters measuring the number of block operations needed to sort~$A$.
Intuitively, these settings are much more difficult for a search algorithm, as it takes relatively small parameter values to introduce considerable disorder.
For the case that~$e$ is guaranteed to be at the correct position, we show that~$\sqrt{cnk}+o(nk)$ queries are necessary and sufficient to locate~$e$.

The algorithms for~$\kainv$ and related parameters assume that the parameter value is known to the algorithm a priori.
In contrast, all our other algorithms are oblivious to the parameter, in the sense that they do not require knowledge of the parameter value as long as the target element~$e$ is guaranteed to be present in the array.
Note that if~$e$ need not be present and we have no bound on the disorder, we generally need to inspect every entry of the array in case we cannot find~$e$.
For the parameter \klies, we do not even know how long we need to continue querying the same elements until we may conclude that~$e$ is not part of the array.
Any of our oblivious algorithms can trade the guarantee that $e \in A$ against knowledge of the parameter value $k$: Compute from $k$ the maximum number $m$ of queries that it would take without knowing $k$ when $e\in A$. If the algorithm does not stop within $m$ queries then it is safe to answer that $e$ is not in $A$.

Overall, our results point out several parameters for which a fairly large regime of $k$ (as a function of $n$) allows search algorithms that are provably better than linear search. For example, while moving only a single element by a lot can lead to bounds of $\Omega(n)$ on the values of several parameters, and hence trivial guarantees, moving many elements by at most~$k$ places gives $\kmax=k$ and yields better bounds than linear search (roughly) for $k<\frac{n}{3}$, and as good as binary search when $k=\Oh(\log n)$. Moving only few elements by an arbitrary number of spaces, in turn, still leads to good bounds via parameters such as~\kmov or~\kswap, as long as the target is in the correct place. Parameters such as \kainv grow even more slowly, for certain types of disorder, but, on the other hand, only a small regime allows for better than trivial guarantees. 
While, for each individual parameter we study, there are ``easily searchable'' instances where the parameter becomes large and makes the corresponding bound trivial, our results often allow for good bounds by resorting to a different parameter.

\newcolumntype{P}[1]{>{\centering\arraybackslash}p{#1}}
\newcommand{\myline}[1]{\raisebox{1.5mm}{\underline{\hspace{#1}}}}
\newcommand{\thmref}[1]{\scriptsize{[Th.~\ref{thm:#1}]}}
\newcommand{\corref}[1]{\scriptsize{[Co.~\ref{cor:#1}]}}
\newcommand{\thmrefs}[1]{\scriptsize{[Th.~\ref{thm:LB_#1},\ref{thm:UB_#1}]}}

\begin{table}[t]
	
\caption{Overview of our results, with main results in boldface.\footnoteref{foot:trivial_bound} %
($^o$: even if oblivious to parameter value; $^c$: for all $c\geq1$; $^t$: for tree-algorithms; $^e$: for $\pos(e)=\rank(e)$)}
\label{tab:results}
	
\begin{minipage}{\textwidth}
\begin{center}
\begin{tabular*}{\linewidth}{
@{\extracolsep{1ex}}l 
@{\extracolsep{\fill}}l 
@{\extracolsep{1ex}}   P{3.6cm}
@{\extracolsep{1ex}}   P{5.1cm}
}
	
\toprule
& & \multicolumn{2}{c}{\myline{3cm} bounds \myline{3cm}}\\
\multicolumn{2}{@{}l}{parameter description} & lower & upper\\

\midrule

\multicolumn{3}{l}{\textbf{Section~\ref{sec:imprecise_queries}} -- number of imprecise queries } \\
$\klies$ & wrong outcomes & $\boldsymbol{\log n + ck}$ \thmref{LB_klies}$^{ct}$ &  $\boldsymbol{(1\!+\!\frac{1}{c})\log n + (2c\!+\!2)k}$\,\thmref{UB_klies}$^{oc}$\\
$\kfaults$ & indices with wrong outcomes & $\log n + ck$ \corref{LB_kfaults}$^{ct}$ & $(1\!+\!\frac{1}{c})\log n + (2c\!+\!2)k$ \thmref{UB_kfaults}$^{oc}$ \\

\midrule

\multicolumn{3}{l}{\textbf{Section~\ref{sub:displacement}} -- displacement of elements} \\
$\ksum$ & total displacement & \multicolumn{2}{c}{$\boldsymbol{\log n/k + 2k + \Oh(1)}$ \thmrefs{ksum}$^o$} \\
$\kmax$ & maximum displacement & \multicolumn{2}{c}{$\boldsymbol{\log n/k + 3k + \Oh(1)}$ \thmrefs{kmax}$^o$} \\

\midrule

\multicolumn{3}{l}{\textbf{Section~\ref{sub:inversions}} -- number of inversions} \\
$\kinv$ & all inversions & $\log n/k\!+\!2k\!+\!\Oh(1)$ \corref{LB_kinv} & $\log n/k + 4k + \Oh(1)$ \corref{LB_kinv}$^o$ \\
$\kainv$ & adjacent inversions & \multicolumn{2}{c}{$\boldsymbol{\sqrt{8nk} + o(\sqrt{nk})}$ \thmrefs{kainv}$^e$}  \\

\midrule

\multicolumn{3}{l}{\textbf{Section~\ref{sub:edit_distances}} -- element operations needed to sort the array} \\
$\krep$ & element replacements & $\log n + ck$ \corref{LB_krep}$^{cte}$ & $(1\!+\!\frac{1}{c})\log n + (4c\!+\!4)k$ \thmref{UB_krep}$^{oe}$ \\

$\kseq$ & $n - |\textrm{max ordered subseq.}|$ & $\log n + ck$ \corref{LB_kseq}$^{cte}$ & $(1\!+\!\frac{1}{c})\log n + (4c\!+\!4)k$ \thmref{UB_kseq}$^{oe}$ \\

$\kmov$ & element moves & $\log n + ck$ \corref{LB_kmov}$^{cte}$ & $(1\!+\!\frac{1}{c})\log n + (4c\!+\!4)k$ \thmref{UB_kmov}$^{oe}$ \\

$\kswap$ & element swaps & $\log n + ck$ \corref{LB_kswap}$^{cte}$ & $(1\!+\!\frac{1}{c})\log n + (8c\!+\!8)k$ \thmref{UB_kswap}$^{oe}$ \\

$\kaswap$ & adj. element swaps & $\log n/k\!+\!2k\!+\!\Oh(1)$ \corref{LB_kaswap} & $\log n/k + 4k + \Oh(1)$ \corref{UB_kaswap}$^o$ \\

\midrule

\multicolumn{3}{l}{\textbf{Section~\ref{sub:block_distances}} -- block operations needed to sort the array} \\
$\kbswap$ & block swaps & \multicolumn{2}{c}{$4\sqrt{nk} + o(\sqrt{nk})$ \corref{UB_kbswap} \thmref{LB_kbswap}$^e$} \\
$\krbswap\,$ & equal size block swaps & $2\sqrt{2nk} + o(\sqrt{nk})$ \corref{UB_krbswap}$^e$ & $4\sqrt{nk} + o(\sqrt{nk})$ \thmref{LB_krbswap}$^e$ \\
$\kbmov$ & block moves & \multicolumn{2}{c}{$2\sqrt{2nk} + o(\sqrt{nk})$ \thmref{UB_kbmov}$^e$ \corref{LB_kbmov}} \\

\bottomrule
	
\end{tabular*}
\end{center}
\end{minipage}
\end{table}


\subsection{Related Work}

Our work falls into the area of \emph{adaptive analysis of algorithms}, which aims at a fine-grained analysis of polynomial-time algorithms with respect to structural parameters of the input. 
An objective of this field is to find algorithms whose running-time dependence on input size and the structural parameters interpolates smoothly between known (good) bounds for special cases and the worst-case bound for general inputs. 
The topic of adaptive sorting, i.e., sorting arrays that are presorted in some sense, has attracted a lot of attention, see, e.g.,~\cite{BarbayN13, Estivill-CastroW92, Mehlhorn79b, PeterssonM95}. 

We now discuss results that are specific to searching in arrays. 
Several authors addressed the question of how much preprocessing, i.e., sorting, helps for searching, if we take into account the total time investment~\cite{BorodinGLY81, Mairson85, PetrankR04}. 
Fredman~\cite{Fredman03} gave lower bounds on searching regarding both queries and memory accesses. 
A classic work of Yao~\cite{Yao81a} established that the best way of storing~$n$ elements in a table such as to minimize number of queries for accessing an element is by keeping the elements sorted, which requires $\log n$ queries, provided that the key space is large enough.
Regarding searching in (partially) unordered arrays, there is a nice result of Biedl et al.~\cite{BiedlCDFGKM04} about insertion sort based on repeated binary searches. 

Under appropriate assumptions, namely that array is sorted and its elements are drawn from a known distribution (e.g., searching for a name in a telephone book), one can do much better than binary search, since the distribution allows a good prediction of where the target should be located. 
In this case $\Oh(\log\log n)$ queries suffice on average (cf.\ \cite{Sedgewick1998Book}); to avoid having to query the entire array, previous work suggests combinations of algorithms that perform no worse than binary search in the worst case~\cite{BurtonL80,BonaseraFFPP15}. 
Another interesting branch of study is related to search in arrays of more complicated objects such as (long) strings~\cite{AnderssonHHP00, FranceschiniG08} or abstract objects with nonuniform comparison cost~\cite{GuptaK01, AngelovKM08}.

Many papers have studied searching in the presence of different types of errors, e.g.,~\cite{BorgstromK93, FinocchiGI09, FinocchiI08, Muthukrishnan94}, see~\cite{Cicalese13, Pelc02} for surveys.
A popular error model for searching allows for a linear number of lies~\cite{AslamDhagat91, BorgstromK93, DhagatGW92, FeigeRPU94, Pelc89b}, for which Borgstrom and Kosaraju~\cite{BorgstromK93} gave an~$\Oh(\log n)$ search algorithm.
In constrast, we bound the number of lies separately via the parameter~$\klies$.
Rivest et al.~\cite{RivestMKWS80} gave an upper bound of~$\log n + k \log \log n + \Oh(k \log k)$ queries for this parameter.
Their algorithm is based on a continuous strategy for the (equivalent) problem of finding an unknown value in~$[1,n]$, upto a given precision, using few yes-no questions.
Our algorithm (Theorem~\ref{thm:UB_klies}) uses asymptotically fewer queries if~$\klies = \omega(\log n / \log \log n)$.\footnote{A technical report of Long~\cite{Long92} claims that the actual tight bound of the algorithm of Rivest et al.~\cite{RivestMKWS80} is $\Oh(\log n + k)$, which is consistent with our results.}

The works of Finocchi and Italiano~\cite{FinocchiI08} and Finocchi et al.~\cite{FinocchiGI09} consider a parameter very similar to \kfaults, with the additional assumption that faults may affect also the working memory of the algorithm, except for~$\Oh(1)$ ``safe'' memory words.
Finocchi and Italiano~\cite{FinocchiI08} give a deterministic searching algorithm that needs~$\Oh(\log n + k^2)$ queries.
Brodal et al.~\cite{BrodalFagerbergFinocchiGrandoniItalianoJorgensenMoruzMolhave/07} improve this bound to~$\Oh(\log n + k)$ and Finocchi et al.~\cite{FinocchiGI09} provide a lower bound of~$\Omega(\log n + k)$ even for randomized algorithms. 
Our results are incomparable as our result for parameter \kfaults uses only $(1+\frac{1}{c})\log n + (2c+2)k$ queries, getting arbitrarily close to $\log n + O(k)$ (cf.~Theorem~\ref{thm:UB_kfaults}), but does not consider faults in the working memory; the high level approach of balancing progress in the search with security queries is the same as in~\cite{BrodalFagerbergFinocchiGrandoniItalianoJorgensenMoruzMolhave/07}, but more careful counting is needed to get small constants.
For parameter \klies we give a simpler algorithm with $2\log n + 4k$ queries and using only $\Oh(1)$ words of working memory, but it is not clear whether the result can be transferred to \kfaults without increasing the memory usage.

Finally, we comment on the measures of disorder we adopt in this paper. 
We study various well-known measures that are mostly folklore. 
Detailed overviews of measures and their relations were given by Petersson and Moffat~\cite{PeterssonM95} and Estivill-Castro and Wood~\cite{Estivill-CastroW92}. 
For the sake of completeness and to get all involved coefficients, we have provided in Appendix~\ref{section:relations:disorder} proofs of all pairwise relations between our parameters, as depicted in Figure~\ref{fig:relations}.


\section{Preliminaries}
In this paper we consider the following problem: Given an array~$A$ of length~$n$ and an element~$e$, find the position of~$e$ in~$A$ or report that~$e \notin A$ with as few \emph{queries} as possible. We use~$A[i]$, $i \in {1, \dots, n}$ to denote the~$i$-th entry of~$A$. We allow access to the entries of~$A$ only via queries to its indices, regarding the relation of the corresponding element to~$e$. We write~$\query(i)$ for the operation of querying~$A$ at index~$i$, and let~$\query(i) =\,$`<' (respectively, `>' or `=') denote the outcome indicating that~$A[i] < e$ (respectively~$A[i]>e$ or $A[i]=e$). Note that in faulty settings the query outcome need not be accurate. 

To keep notation simple, we generally assume the entries of~$A$ to be unique unless explicitly stated otherwise. We emphasize that none of our results relies on this assumption. We can then define~$\pos(a)$ to denote the index of~$a$ in~$A$, by setting $\pos(a) = i$ if and only if~$A[i] = a$. Further, let~$\rank(a) = |\{i:A[i] < a\}| + 1$ be the ``correct'' position of~$a$ with respect to a sorted copy of~$A$, irrespective of whether or not~$a \in A$. We often use an element~$a \in A$ and its index~$\pos(a)$ interchangeably, especially for the target element~$e$. 
Note that, as discussed in the introduction, for oblivious algorithms we generally assume $e\in A$.


\section{Searching with imprecise queries}\label{sec:imprecise_queries}

In this section, we consider the problem of finding the index~$\pos(e)$ of an element~$e$ in a sorted array~$A$ of length~$n=2^d$, $d \in \N$ in a setting where queries may yield erroneous results. 
We say that `<' is a lie (the truth) for index $i$ if $A[i] \geq e$ ($A[i]<e)$, and analogously for `>' and `='.
To quantify the number of lies, we introduce two parameters~$\klies$ and~$\kfaults$.
The first parameter~$\klies$ simply bounds the number of queries with erroneous results, which we interpret as the number of lies allowed to an adversary.
The second parameter~$\kfaults$ bounds the number of indices~$i$ for which~$\query(i)$ (consistently) returns the wrong result, allowing the conclusion that $e \notin A$ in case $\query(e)$ yields the wrong result.
Equivalently, for an unsorted array~$A$, we can require all queries to be truthful and define~$\kfaults(e)$ to be the number of inversions involving~$e$, i.e., $\kfaults(e) = |{i: (i<\pos(e) \wedge A[i] > e) \vee (i>\pos(e) \wedge A[i] < e)}|$.
Observe that both definitions of \kfaults are equivalent.
For clarity, we write~$\kfaults$ when considering the adversarial interpretation, and~$\kfaults(e)$ when considering it as a measure of disorder of an unsorted array.
For both \klies and \kfaults, we only allow queries to $e$ to yield `='.

The algorithms of this section operate on the binary search tree rooted at index $r = n/2$ that contains a path for each possible sequence of queries in a binary search of the array, and identify nodes of the tree with their corresponding indices. 
We write $\nxt_{>}(i)$ and $\nxt_{<}(i)$ to denote the two successors of node $i$, e.g., $\nxt_{>}(r) = n/4$ and $\nxt_{<}(r) = 3n/4$. 
Similarly, we write~$\prev(i)$ to denote the predecessor of $i$ in the binary search tree, and $\prev_{q}(i) = v$ for the last vertex~$v$ on the unique $r$-$i$-path such that $\nxt_{q}(v)$ also lies on the $r$-$i$-path ($\prev_{q}(i)=\emptyset$ if no such node exists). 
Intuitively, $\prev_{q}(i)$ is the last vertex corresponding to an array entry larger (if $q = $``>'') or smaller (if $q = $``<'') than~$A[i]$. 
For convenience, $\query(\emptyset)=\emptyset$, $\prev_{</>}(r)=\emptyset$, and $\nxt_{</>}(i)=i$ if $i$ is a leaf of the tree. 
We further denote by $d(i,j)$ the length of the path from node~$i$ to node~$j$ in the search tree.
We say that an algorithm \emph{operates on the binary search tree} if no index is queried before its predecessor in the tree.

We start by considering the parameter~$\klies$.
If we knew the value of this parameter, we could try a regular binary search, replace every query with~$2\klies + 1$ queries to the same element and use the majority outcome in each step.
However, this would give~$(2\klies+1)\log n$ queries, where ideally we should not use more than~$\log n + f(k)$ queries.
We first give an algorithm that achieves the separation between~$n$ and $\klies$ while being oblivious to the value of~$\klies$.
Importantly, the algorithm only needs~$\Oh(1)$ memory words, which also makes it applicable to settings where ``safe'' memory, that cannot be corrupted during the course of the algorithm, is limited.
This algorithm still needs~$2\log n + f(k)$ queries, but we will show later how to build on the same ideas to (almost) eliminate the factor of~$2$.

\newcommand\mycommfont[1]{\footnotesize\textcolor{gray}{#1}}
\SetCommentSty{mycommfont}

\begin{algorithm}[ht]
\caption{Algorithm with $2 \log n + 4 \klies$ queries}\label{alg:lies}
\DontPrintSemicolon
\emph{the algorithm stops once a query yields} `$=$'\;

 $i \leftarrow n/2$ \tcp*{start at the root}
 
 \While(\tcp*[f]{by definition, `$=$' cannot be a lie}){$(q \leftarrow \query(i)) \neq $`$=$'} 
 {
   $i' \leftarrow \prev_{\neg q}(i)$ \tcp*{$\emptyset$ if all queries on the path from the root yielded $q$} 
   \While(\tcp*[f]{while $\query(i')$ contradicts its previous outcome\ldots}){$i \neq i' \wedge \query(i') = q$}
   {
     $i \leftarrow \prev(i)$ \tcp*{\ldots backtrack towards $i'$}
   }
   \If(\tcp*[f]{if we did not backtrack all the way to $i'$\ldots}){$i \neq i'$}
   { $i \leftarrow \nxt_q(i)$ \tcp*{\ldots proceed according to $q$} }
 }
\end{algorithm}

Intuitively, Algorithm~~\ref{alg:lies} searches the binary search tree defined above, simply proceeding according to the query outcome at each node.
In addition, the algorithm invests queries to double check past decisions.
We distinguish left and right turns, depending on whether the algorithm proceeds with the left or the right child.
In particular, before proceeding, the algorithm queries the last vertex on the path from the root where it decided for a turn in the opposite direction.
While an inconsistency to previous queries is detected, i.e., a query to a vertex where it turned right (or left) gives `>' (or `<'), the algorithm backtracks one step.
In this manner, the algorithm guarantees that it never proceeds along a wrong path without the adversary investing additional lies.
Note that if the algorithm only ever turned right (or left), i.e., there was no previous turn in the opposing direction, it does not double check any past decisions until the query outcome changes.
This is alright since either the algorithm is on the right path or the adversary needs to invest a lie in each step.

\begin{theorem}
We can find $e$ obliviously using $2\log n+4\klies$ queries and~$\Oh(1)$ memory. \label{thm:klies_upper}
\end{theorem}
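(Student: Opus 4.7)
Memory usage is immediate by inspection: the algorithm stores only $i$, $i'$, and the latest query outcome $q$, together $\Oh(1)$ words; the functions $\prev$, $\nxt_q$, and $\prev_{\neg q}$ can be computed on the fly from the bit representation of the current index without persistent storage (at the cost of time, not memory). For correctness, note that the algorithm halts only on the `$=$' outcome, which by assumption is truthful, so any halt occurs at $i = \pos(e)$; it thus suffices to establish the query bound, which itself will imply termination.

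My plan for the query bound is an amortized analysis with potential $\Phi = 2\,\mathrm{depth}(i) + 4L - Q$, where $L$ and $Q$ are the numbers of lies and queries so far. Initially $\Phi = 0$, and at termination $\mathrm{depth}(i) \le \log n$ and $L \le \klies$, so a global bound $\Phi \ge -\Oh(1)$ would yield $Q \le 2\log n + 4\klies + \Oh(1)$. Per-event contributions are $\Delta\Phi = -1$ for a truthful query, $\Delta\Phi = +3$ for a lying one, $\Delta\Phi = +2$ per down-move, and $\Delta\Phi = -2$ per backtrack. I will partition outer iterations into three types: (A) $i' = \emptyset$, giving one query plus one down-move and $\Delta\Phi \ge +1$; (B) $i' \ne \emptyset$ with the inner loop exiting via $\query(i') \ne q$ and a subsequent down-move; (C) the inner loop exiting via $i = i'$ without moving down. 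In types~B and~C, I will further split by whether the truth at $i'$ equals $q$ (so the earlier query that placed $i' = \prev_{\neg q}(i)$ was the lie) or equals $\neg q$ (so the $b_t$ inner-loop queries in the current iteration are lies).

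The easy subcases are those where the backtracks themselves are lies: each contributes $\Delta\Phi = +1$, and the remaining queries in the iteration can only help. The hard subcases are those where the truth at $i'$ equals $q$ and the inner loop performs $b_t \ge 2$ truthful backtracks, producing a local dip in $\Phi$ of $\Theta(-b_t)$. To absorb these dips, I will amortize across ``wrong-subtree excursions'', defined as maximal time intervals during which $i$ is strictly below some common ancestor $v$ lying on the root-to-$\pos(e)$ path $P$. The central observation is that reaching depth $b_t$ inside a wrong subtree requires $b_t$ descending steps after the initial lie at $v$, and each such step is enabled either by a lying outer query (a Type~A iteration contributing $\Delta\Phi = +5$) or by a lying inner query (a Type~B iteration contributing $\Delta\Phi = +4$); these earlier positive contributions accumulate to compensate the later negative contribution from the return. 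Summing over an excursion yields $\Delta\Phi \ge 0$, and combining with the non-negative contributions of on-path iterations and the single $-1$ contribution of the terminating `$=$' query gives $\Phi \ge -\Oh(1)$ throughout, completing the plan. The main technical obstacle I foresee is the bookkeeping for nested or zig-zag explorations inside a wrong subtree; I plan to handle these by induction on the subtree depth, exploiting the fact that inside a wrong subtree the algorithm performs a structurally identical search-with-lies on a smaller tree, which makes an inductive excursion lemma natural.
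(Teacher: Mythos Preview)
Your approach and the paper's differ in one crucial choice of potential: you track $\mathrm{depth}(i)$, whereas the paper tracks the \emph{tree distance} $d(i,e)$ from the current node to the target. Concretely, the paper sets $\Phi = 2\,d(i,e) + 4k$, where $k = \klies - L$ is the adversary's remaining lie budget, and shows that in \emph{every single iteration} of the outer loop $\Phi$ drops by at least the number of queries made in that iteration. Since $\Phi \geq 0$ and initially $\Phi \leq 2\log n + 4\klies$, the exact bound follows immediately, with no excursion bookkeeping at all.

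The reason $d(i,e)$ makes the argument local is that a down-move changes $d(i,e)$ by $-1$ when $i$ is on the root--$e$ path and by $+1$ otherwise. Thus a truthful step on the correct path already pays for itself, while a step into a wrong subtree is automatically charged to a lie in the \emph{same} iteration via the $4k$ term. Your $\mathrm{depth}(i)$ lacks this sign change---every down-move looks identical---so you are forced to recover the on-path/off-path distinction globally through the excursion decomposition and the proposed induction on subtree depth. That programme can probably be completed (the excursion accounting you sketch does balance, though only just: a descent by $b$ outer lies followed by step-by-step truthful backtracks gives $\Delta\Phi = 0$ on the nose), but it is working hard to reconstruct by hand exactly what $d(i,e)$ encodes for free.

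Two smaller points. First, you target only $\Phi \geq -\Oh(1)$ and hence $2\log n + 4\klies + \Oh(1)$, whereas the theorem states the exact constant; the paper's potential delivers it directly. Second, your dichotomy for descent in a wrong subtree (``lying outer in Type~A, or lying inner in Type~B'') omits the case of a Type~B iteration with a lying \emph{outer} query and truthful inner query (still $\Delta\Phi = +4$), and more generally a Type~B iteration may itself contain $z\geq 1$ backtracks before the final $\neg q$ and descent, so the per-iteration contributions are not as clean as your three-way split suggests. None of this is fatal, but it confirms that the excursion route is substantially more delicate than swapping $\mathrm{depth}(i)$ for $d(i,e)$.
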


\begin{proof}
We claim that Algorithm~\ref{alg:lies} achieves the bound of the theorem. 
Note that~$\prev_{\neg q}(i)$ only depends on~$i$ and not on the outcome of previous queries, therefore, we can determine it with~$\Oh(1)$ memory words.
We will show that in each iteration of the outer loop of the algorithm, the potential function $\Phi=2d(i,e)+4k$ 
decreases by at least one for each query, where $k$ is the number of remaining lies the adversary may make. 
This proves the claim, since $\Phi\geq0$ and initially $\Phi\leq2\log n+4\klies$. 
We analyze a single iteration of the outer loop. 

Observe that if $z$ is the number of iterations of the inner loop,
then the total number of queries is $z+2$ if the inner loop terminates
because $\query(i')=\neg q$, and $z+1$ if it terminates
because $i=i'$. If an iteration of the inner loop is caused by $\query(i')$
being a lie, then in this iteration $\Delta\Phi\leq2-4=-2$, and otherwise,
$d(i,e)$ is decreased by one and likewise $\Delta\Phi=-2+0=-2$.
Overall, the change in potential during the inner loop is always $\Delta\Phi=-2z$.
If the inner loop terminates because $i=i'$, then $z\geq1$ and the
total change in potential is $\Delta\Phi\leq-2z\leq-z-1$, enough
to cover all $z+1$ queries. 

Now consider the case that the inner loop terminates because $\query(i')=\neg q$.
If $\neg q$ is a lie for $i'$ or $q$ is a lie for $i$, the adversary
invested an additional lie, and even if the last update to $i$ increases
$d(i,e)$, the total change in potential is bounded by $\Delta\Phi\leq-2z-4+2\leq-2z-2$,
enough to cover all $z+2$ queries. On the other hand, if $\neg q$
is the truth for $i'$ and $q$ is the truth for $i$, then $e\in\{i',\dots,i\}$ and $i$ must lie on the unique $r$-$e$-path in the search tree
(and $i\neq e$). The final update to $i$ thus decreases $d(i,e)$
by 1 and the total change in potential is $\Delta\Phi=-2z-2$, again
enough to cover all $z+2$ queries.
\end{proof}

\begin{algorithm}[ht]
\caption{Algorithm with $(1+\frac{1}{c})\log n + (2c+2)\klies$ queries}\label{alg:lies_eps}
\DontPrintSemicolon
\emph{the algorithm stops once a query yields} `$=$'\;
 $i \leftarrow n/2$ \tcp*{start at the root}
 \While(\tcp*[f]{by definition, `$=$' cannot be a lie}){$(q \leftarrow \query(i)) \neq $`$=$'}
 {
   $i' \leftarrow \prev_{\neg q}(i)$ \tcp*{$\emptyset$ if all queries on the path from the root yielded $q$}  
   \While(\tcp*[f]{while we do not have sufficient support to proceed\dots}){$0 < c \Delta_{i'} < d(i,i')+1$}
   {
     $\query(i')$ \tcp*{\dots query $i'$ for support}
   }
   \uIf(\tcp*[f]{if we ran out of support at $i'$ altogether\dots}){$\Delta_{i'} = 0$}
   { $i \leftarrow i'$ \tcp*{\dots backtrack to $i'$} }
   \Else(\tcp*[f]{if we have sufficient support at $i'$\dots})
   { $i \leftarrow \nxt_q(i)$ \tcp*{\dots proceed according to $q$} }
 }
\end{algorithm}

We now adapt Algorithm~\ref{alg:lies} to minimize the impact of potential lies on the dependency on~$\log n$ in the running time. 
Intuitively, instead of backing up each query~$q \leftarrow \query(i)$ by a query to~$\prev_{\neg q}(i)$, we back only one in~$c$ queries (cf.~Algorithm~\ref{alg:lies_eps}).
During the course of the algorithm and its analysis, we let $n_{q,j}$ denote the number of queries (so far) to node~$j$ that resulted in $q\in\{<,>\}$ and $\Delta_{j}:=\left|n_{<,j}-n_{>,j}\right|$.

\begin{theorem}
For every $c \geq 1$, we can find $e$ obliviously using $(1+\frac{1}{c})\log n+(2c+2)\klies$ queries.\label{thm:UB_klies}
\end{theorem}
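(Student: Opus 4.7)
Our plan is to adapt the potential-function argument of Theorem~\ref{thm:klies_upper}. We take a potential
$\Phi = (1+\tfrac{1}{c})\,d(i,e) + (2c+2)k$,
where $k$ is the number of remaining lies, possibly augmented by a bookkeeping term that banks credit for the current $\Delta$-values at ancestors of $i$. Since $\Phi\geq 0$ and initially $\Phi \leq (1+\tfrac{1}{c})\log n + (2c+2)\klies$, the bound will follow once we show that $\Phi$ drops by at least $1$ per query in an amortized sense.

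The first step is to establish the invariant $c\Delta_{i'} \geq d(i,i')$ at the start of each outer iteration, where $i' = \prev_{\neg q}(i)$ with $q = \query(i)$. This invariant is maintained by induction on descents: the inner-loop threshold $c\Delta_{i'} \geq d(i,i')+1$ is enforced before every descent to a child of $i$, and when a direction change at some ancestor $j$ introduces a fresh $i'=j$ at tree distance $1$, that $j$ has already been queried at least once. We then do a case analysis per outer iteration, splitting on how the inner loop exits (backtrack via $\Delta_{i'}=0$ vs.\ descend via crossing the threshold) and on whether the resulting move is consistent with the true position of~$e$. In the backtrack-correct case, the majority at $i'$ is opposite to truth, so the $\Delta_{i'}^{\text{init}}$ prior lies at $i'$ together with the fresh lies needed in the inner loop to reach $\Delta_{i'}=0$ already pay for the $1+z$ queries, and $d(i,e)$ additionally drops by $d(i,i')$. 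In the backtrack-wrong case, the invariant forces the adversary to invest at least $d(i,i')/c$ fresh lies, and the factor $2c+2$ is chosen precisely to cover the $(1+\tfrac{1}{c})d(i,i')$ rise of $d(i,e)$ together with the $1+z$ queries; the descend cases are analogous and use the invariant to bound $\Delta_{i'}^{\text{final}}-\Delta_{i'}^{\text{init}} \leq 1$.

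The main obstacle will be the descend-correct subcase with $z=1$: a single truthful reinforcement query lifts $\Delta_{i'}$ to threshold, giving $\Delta\Phi = -(1+\tfrac{1}{c})$ against two queries, which falls short of $-2$ when $c>1$. Our plan is to augment $\Phi$ by a credit term along the lines of $\tfrac{1}{c}\max\bigl(0,\, c\Delta_{i'} - d(i,i')\bigr)$, banking roughly $1-\tfrac{1}{c}$ units every time a reinforcement query is invested; this credit is then consumed by the next $c$ descents that skip the inner loop, yielding the amortized cost of $1+\tfrac{1}{c}$ queries per correct descent and $2c+2$ queries per lie. The delicate technical point will be to verify that this credit behaves consistently under direction changes, backtracks, and queries to $i'$ issued from other descendants, which ultimately reduces to a careful accounting of how $\Delta_{i'}$ evolves along a root-to-leaf walk in the search tree.
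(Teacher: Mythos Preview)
Your high-level plan matches the paper's: a potential of the form $(1+\tfrac1c)d(i,e)+(2c+2)k$ plus a credit term, with the same invariant $c\Delta_{i'}\ge d(i,i')$ and the same identification of the problematic ``descend-correct with $z=1$'' case. Where your proposal falls short is in the design of the credit term itself.

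First, a single term $\tfrac1c\max(0,\,c\Delta_{i'}-d(i,i'))$ attached only to the \emph{current} $i'$ is not enough. When the algorithm changes direction or backtracks, the relevant $i'$ changes, and previously accumulated $\Delta$-values at other ancestors become active again. The paper handles this by summing the credit over \emph{all} ``zig-zag pairs'' $(j,j')$ along the root-to-$i$ path, not just the current one; you acknowledge this difficulty at the end but do not give a mechanism.

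Second, and more substantively, one coefficient on the credit cannot balance all cases. The paper needs \emph{two} credit terms with different weights: one term $L$ (weight $2+\tfrac1c$) summing $c\Delta_{j'}-d(j,j')$ over zig-zag pairs where the recorded type at $j'$ is a \emph{lie}, and another term $T$ (weight $\tfrac1c$) summing $d(j,j')-c(\Delta_{j'}-1)$ over pairs where it is the \emph{truth}. The asymmetry is essential: in the descend-correct case the drop in $T$ supplies the missing $1-\tfrac1c$, while in the backtrack-from-a-lying-$i'$ case the large coefficient on $L$ is what absorbs the $(1+\tfrac1c)d(i,i')$ drop in the distance term. A single $\tfrac1c$-weighted term of the form you propose will not make the backtrack-wrong and descend-with-lying-$i'$ cases close simultaneously. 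Your sketch is on the right track, but to complete it you need to split the credit by the truth value of the recorded direction at each ancestor and tune the two coefficients separately.
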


\begin{proof}
We claim that Algorithm~\ref{alg:lies_eps} achieves the bound of the theorem. 
In this algorithm, we intuitively back up every $c$-th query (for integral~$c$).
To capture this in our potential function, we need a term that stores potential for the next~$c$ queries.
We will introduce two such terms $L,T$, representing the case the algorithm's current belief of the relation between~$i'$ and~$e$ is a lie or the truth, respectively.
We need to distinguish these cases, since they lead to different behavior regarding~$d(i,e)$ and the number of remaining lies.

We need the following additional notation. 
For some current value of $i$ during the execution of the algorithm, we define the type of a node $j$ of the search tree on the $r$-$i$-path to be $t_{j}\in\{<,>\}$ if $\nxt_{t_{j}}(j)$ also lies on this path. 
Further, we let $\scc_{q}(j)=j'$ if $j'$ is the first node on the $j$-$i$-path with $t_{j'}=q$ or $\scc_{q}(j)=i$ if no such node exists. 
We set $\scc_{q}(i)=\emptyset$.
To avoid special treatment of leaves, we replace each leaf of the search tree by an infinite binary tree of nodes corresponding to the original leaf, in both algorithm and analysis. 
If $e$ was a leaf, then, for each new node $j$ corresponding to $e$, we set $d(j,e)=d(j,r_{e})$ where $r_{e}$ is the root of the subtree corresponding to~$e$.

Intuitively, the potential of the algorithm needs to depend on~$c\Delta_{i'} - d(i,i')$, since this difference captures the number of steps it can still make before it needs to use a backup query.
To keep track of this difference across iterations of the algorithm, we introduce the notion of a \emph{zig-zag pair}, which we will define formally below.
In particular, $(i,i')$ always forms a zig-zag pair.
Let $j=\nxt_q(i)$ in some iteration after which~$\Delta_{i'} \neq 0$, i.e., $i$ gets updated to~$j$.
If~$j$ has the same type as~$i$ in the next iteration, $i'$ stays the same and we can simply replace the zig-zag pair~$(i,i')$ with~$(j,i')$.
On the other hand, if~$j$ has a different type in the next iteration, we need to introduce a new pair~$(j,i)$.
Since we may backtrack later and continue differently at~$i$, we also need to keep the pair~$(i,i')$.
Conceptually, we need to keep track of all maximal $l$-$l'$-subpaths of the current~$r$-$i$-path with the property that $l'$ has the opposite type than all other nodes on the subpath.
If the algorithm backtracks to node~$l$ at some point, then, in the next iteration, $i=l$ and $i'=l'$, and the difference~$c\Delta_{l'} - d(l,l')$ captures how much potential remains to continue querying without using a back up query to~$i'=l'$.

Formally, we define the set of all zig-zag pairs as 
\[
Z:=\left\{ (j,j') \mid \exists q\in\{<,>\}.j'=\prev_{q}(j)\wedge j=\scc_{q}(j')\right\}.
\]
Note that~$(i,i') \in Z$ throughout the algorithm, and that every node appears at most once as the second element of a zig-zag pair, exactly if it has a different type than its successor on the unique~$r$-$i$-path.
For convenience, we set $\Delta_{\emptyset}$ to be equal to the number
of all ``queries'' to $\emptyset$ in the algorithm. 
We define
\[
L=\sum_{(j,j')\in Z}[c\Delta_{j'}-d(j,j')]\cdot\Lambda_{t_{j'},j'},
\]
where $\Lambda_{q,j}=1$ if $q$ is a lie for $j$ and
$\Lambda_{q,j}=0$ otherwise. 
Similarly, we define 
\[
T=\sum_{(j,j')\in Z}[d(j,j')-c(\Delta_{j'}-1)]\cdot(1-\Lambda_{t_{j'},j'}).
\]

With this notation in place, we introduce the extended potential function
\[
\Phi=(1+\frac{1}{c})d(i,e)+(2+\frac{1}{c})L+\frac{1}{c}T+(2c+2)k,
\]
where $k$ is the number of lies remaining to the adversary.

We claim that~$L,T \geq 0$ and thus $\Phi \geq 0$ holds after each iteration.
More precisely, we show that the contribution of each zig-zag pair in~$Z$ to either~$L$ or~$T$ is non-negative.
To see this, first observe that in each iteration~$Z$ changes exactly by either removing the zig-zag pair~$(i,i')$ (unless~$i=r$), by replacing it with the pair~$(\nxt_q(i),i')$, or by adding a new pair~$(\nxt_q(i),i)$.
Inductively, it thus suffices to show that the contribution of~$(\nxt_q(i),i')$ or~$(\nxt_q(i),i)$ in the latter cases ($\Delta_{i'} \neq 0$) is positive.
First, observe that~$\Delta_{i} = 1$ after the iteration, hence, if $(\nxt_q(i),i)\in Z$, its contribution to~$L$ or~$T$ must be positive.

Now consider the case that $(\nxt_q(i),i')\in Z$ after the iteration.
By definition of the algorithm, the inner loop ensures that
\[
c\Delta_{i'} \geq d(i,i') + 1 = d(\nxt_q(i),i'),
\]
hence the contribution of~$(\nxt_q(i), i')$ to~$L$ is non-negative.
Now consider the last iteration of the outer loop in which~$\Delta_{i'}$ changed, and let~$j,j'$ be the corresponding values of $i$ and $i'$ in that iteration.
Either~$d(i,i') = \Delta_{i'} = 1$, or the last change to~$\Delta_{i'}$ was because~$j' = i'$ and~$c\Delta_{i'} < d(j,i') \leq d(i,i') + 1$.
In the latter case, after the update to~$\Delta_{i'}$, we have~$c\Delta_{i'} \leq d(j,i') + 1 + c$ and thus, in both cases,
\[
c\Delta_{i'} \leq d(i,i') + c < d(\nxt_q(i),i') + c,
\]
hence the contribution of~$(\nxt_q(i), i')$ to~$T$ is non-negative.

Initially, $T=L=0$ since $Z=\emptyset$, and $\Phi=(1+\frac{1}{c})\mathrm{log}n+(2c+2)\klies$.
Since $\Phi \geq 0$ throughout, it thus suffices to show that in each iteration of the outer loop $\Phi$ decreases by at least one for each query. 
We consider a single iteration of the outer loop for fixed $i,i'$.

First consider the case where the inner loop is not executed. In this
case, the algorithm makes only a single query, and we need to show
that the potential decreases by at least 1. Observe that $L$ cannot
increase during the update to $i$ and $T$ may increase by at most
1. If $q$ is a lie for $i$, then this update increases $d(i,e)$
by 1 and the adversary invested an additional lie, for a change in
potential of at most $\Delta\Phi\leq(1+\frac{1}{c})+0+\frac{1}{c}-(2c+2)=-2c-1+\frac{2}{c}\leq-1$ (since~$c \geq 1$).
If $\neg q$ is a lie for $i'$, then $d(i,e)$ increases by 1 and
$L$ decreases by 1, for a change in potential of $\Delta\Phi\leq(1+\frac{1}{c})-(2+\frac{1}{c})+0+0=-1$.
If $q$ is the truth for $i$ and $\neg q$ is the truth for $i'$,
then $e\in \{i,\dots,i'\}$ and $i$ must lie on the unique $r$-$e$-path.
The update to $i$ then decreases $d(i,e)$ and changes the potential
by $\Delta\Phi=-(1+\frac{1}{c})+0+\frac{1}{c}-0=-1$.

Now consider the case where the inner loop is executed until~$\Delta_{i'} = 0$, and fix the value of $\Delta_{i'}$ before the inner loop.
We may assume that no query to $i'$ yielded $\neg q$, otherwise
we can balance each such query with a query that yielded $q$, one
the two being a lie, for a change in potential of $\Delta\Phi=-(2c+2)\leq-4$,
which pays for both these queries. With this assumption, we have exactly
$\Delta_{i'}\geq1$ queries in the loop, all of which yielded $q$.
Note that the previous iteration of the outer loop ensured that $c\Delta_{i'} \geq d(i,i')$.
If~$\neg q$ is a lie for $i'$, the eventual update to $i$ decreases
$d(i,e)$ by $d(i,i')$ and decreases $L$ by $c\Delta_{i'}-d(i,i')$ (since $(i,i')$ is eliminated from $Z$). 
The overall change in potential
then is
\begin{eqnarray*}
\Delta\Phi & \leq & -(1+\frac{1}{c})\cdot d(i,i')-(2+\frac{1}{c})[c\Delta_{i'}-d(i,i')]+0+0\\
 & = & d(i,i')-c\Delta_{i'}-(c+1)\Delta_{i'}\\
 & \leq & -(c+1)\Delta_{i'}\\
 & \overset{c \geq 1}{\leq} & -1-\Delta_{i'},
\end{eqnarray*}
which is enough to cover all $1+\Delta_{i'}$ queries. On the other
hand, if $\neg q$ is the truth for $i'$, the eventual update may
increase $d(i,e)$ by at most $d(i,i')$ and it eliminates the contribution 
of $(i,i')$ to $T$ (since $i=i'$ and, hence, $(i,i')\notin Z$). The
adversary invested $\Delta_{i'}$ additional lies, and the change
in potential is
\begin{eqnarray*}
\Delta\Phi & \leq & +(1+\frac{1}{c})d(i,i')+0+0-(2c+2)\Delta_{i'}\\
 & \overset{c \geq 1}{\leq} & 2d(i,i')-4\Delta_{i'}\\
 & \leq & -2\Delta_{i'}\\
 & \leq & -1-\Delta_{i'},
\end{eqnarray*}
which is again enough to cover all $1+\Delta_{i'}$ queries.

Finally, consider the case where the inner loop is executed until
$c\Delta_{i'}\geq d(i,i')+1$. As before, $c\Delta_{i'} \geq d(i,i') \geq 1$,
and, again, we may assume that no query to $i'$ yielded $q$. Hence,
as $c\geq1$, we made a single query to $i'$ that yielded $\neg q$,
i.e., 2 queries overall. We need to show that $\Delta\Phi\leq-2$.
Assume first that $\neg q$ is the truth for $i'$ and we thus decreased
$T$ by $(c-1)$. If $q$ is the truth for $i$, we have $\Delta\Phi\leq-(1+\frac{1}{c})+0-\frac{1}{c}(c-1)-0=-2$.
If $q$ is a lie for $i$, we have $\Delta\Phi\leq(1+\frac{1}{c})+0-\frac{1}{c}(c-1)-(2c+2)=-2-2c+\frac{2}{c}\leq-2$.
Now assume that $\neg q$ is a lie for $i'$ and we thus increased
$L$ by $(c-1)$. Since the adversary invested an additional lie,
we have $\Delta\Phi\leq(1+\frac{1}{c})+(2+\frac{1}{c})(c-1)+0-(2c+2)=-2$.
\end{proof}

To provide a strong lower bound, we restrict ourselves to algorithms that operate on the binary search tree.
Such algorithms interpret the array as a binary tree (rooted at entry n/2, with the two children n/4, 3n/4, etc.), and never query a node (other than the root) before querying its parent.

\begin{theorem}\label{thm:LB_klies}
For every $c\in\mathbb{N}$, no algorithm operating on the search tree can find $e$ with less than $\log n+c\klies$ queries in general.\footnoteref{foot:trivial_bound}
\end{theorem}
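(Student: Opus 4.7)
The plan is to prove the lower bound by an adversary argument that exploits the BST-operating restriction. Fix $c \in \mathbb{N}$ and $k = \klies$; I will design an adversary that, against any BST-operating algorithm $A$, answers queries in such a way that at least $\log n + ck$ queries are needed before the set of leaves still consistent with the history (up to at most $k$ lies) is reduced to a singleton.

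The adversary tracks, for each leaf $v$ of the search tree, the minimum number $\ell(v)$ of lies needed to make $v$ compatible with the queries and answers given so far; a leaf is \emph{alive} when $\ell(v) \leq k$. Initially all $n$ leaves are alive, and only when exactly one remains alive can $A$ correctly commit to an answer. I would then introduce a potential function $\Phi$ combining (a) a ``depth'' term that exploits the BST-operating restriction---since the algorithm must query every ancestor of the eventually output leaf at least once, this term accounts for the $\log n$ summand---and (b) a ``lie-weight'' term of Berlekamp type, e.g.\ $\log\bigl(\sum_{v\text{ alive}} w_c(\ell(v))\bigr)$ with a decay rate $w_c$ calibrated to $c$. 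The adversary's response rule is to answer each query so as to maximize $\Phi$ after the query, picking the side (``<'' or ``>'') whose subtree carries the larger weighted mass of alive candidates. The target is to show that $\Phi$ starts at $\log n + ck$, decreases by at most $1$ per query under the adversary's best answer, and satisfies $\Phi > 0$ precisely while at least two leaves are alive. Since the algorithm cannot terminate correctly until only one leaf is alive, this yields the claimed bound of $\log n + ck$ queries (trivially capped at $n$).

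The main obstacle will be the design of $w_c$ and the verification of the ``decrease by at most $1$'' invariant. In particular, the adversary has to handle repeated queries to the same node---where it may deliberately give an inconsistent answer to force the algorithm to ``burn'' additional queries without yet committing to a candidate---and it has to keep its answers globally consistent with \emph{some} target together with at most $k$ lies at termination. Calibrating $w_c$ so that the $ck$ summand is attained exactly for every integer $c$ requires a careful trade-off between how aggressively the adversary spends its lie budget and how much information each query reveals. This dually mirrors the careful potential analysis carried out for the upper bound in Theorem~\ref{thm:UB_klies}, which is encouraging since the constants match up to the desired factor of $c$.
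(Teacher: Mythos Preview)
Your proposal is a plan rather than a proof: you explicitly defer the crucial step---designing $w_c$ and verifying that $\Phi$ drops by at most one per query---and there is a real obstacle hiding there. A Berlekamp-type weight argument is an information-theoretic lower bound that applies to \emph{all} comparison strategies, not only BST-operating ones. But Theorem~\ref{thm:UB_klies} shows that unrestricted algorithms achieve $(1+\tfrac{1}{c})\log n+(2c+2)\klies$ queries; taking $c=1$ gives $2\log n+4\klies$, which is strictly smaller than $\log n+c'\klies$ whenever $c'\geq 5$ and $\klies>\log n$. Hence no weight function $w_c$ can make the Berlekamp term alone account for the $ck$ summand for all $c$. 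In your decomposition the BST restriction is used only to supply the $\log n$ term, so the $ck$ term would have to come from the (restriction-free) Berlekamp part---which is impossible by the above. Any successful argument must exploit the BST restriction to generate the $ck$ term, not just the $\log n$ term.

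The paper's proof does exactly this, and by an entirely different mechanism: it partitions the run into \emph{phases} of $c{+}1$ consecutive depth levels. At the start of each phase the adversary answers the first depth-$(c{+}1)p$ query with `$<$' and thereafter answers consistently with the target being in a specific depth-$(c{+}1)(p{+}1)$ subtree on the \emph{wrong} side. Because the algorithm is BST-operating, it cannot reach that subtree without first traversing $c$ additional ancestors, so a single lie costs $c$ extra queries beyond the $c{+}1$ needed to descend one phase. If instead the algorithm is cautious (re-querying or exploring the left subtree early), the phase costs $c{+}2$ queries without any lie. Summing over phases yields $\log n+c\klies$. The key idea---that one lie forces $c$ levels of backtracking precisely because of the parent-before-child constraint---has no analogue in a Berlekamp potential, and your plan does not supply a substitute for it.
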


\begin{proof}
We consider the behavior of the algorithm on the search tree for large
values of $n$ that are powers of two. We split the queries of the
algorithm into phases, where phase $p$ starts as soon as a node of
depth $(c+1)\cdot p$ is queried for the first time, starting with
phase~$0$. We take the perspective of an adversary and specify the outcome
to each query, ensuring that at most one lie is invested in each phase,
and at most $\klies$ lies overall. Note that we do not
have to decide immediately whether a query outcome is truthful and
neither do we have to fix the position of $e$ a priori.

Consider a fixed phase $p$. The first query of the phase to node
$i$ of depth $(c+1)\cdot p$ yields `$<$', all subsequent queries
to positions smaller (larger) than $i$ yield `$>$' (`$<$'). If the
algorithm queries more than once a node of depth $(c+1)\cdot p$ in
the first $c+1$ queries of the phase or any node in the left subtree
of $i$, then the phase needs at least $c+2$ queries and we do not
lie, i.e., $e$ is in the subtree rooted at the leftmost node of depth
$(c+1)(p+1)$ in the right subtree of $i$. Otherwise, we lied for
the query to node $i$ and $e$ is in the subtree rooted at the rightmost
node $i'$ of depth $(c+1)(p+1)$ in the left subtree of $i$. Since
no node in the left subtree of $i$ has been queried yet, the algorithm
needs an additional $c$ queries to reach~$i'$, for a total of $2c+1$
queries in the phase. Once all lies have been used up, we continue
answering queries as before, and each phase trivially needs at least
$c+1$ queries.

Observe that querying every node on the path from node $n/2$ to $e$
requires exactly $\log n$ queries, or $c+1$ queries per phase (except
maybe a last, partial phase). Now if we use up all $\klies$
lies, then there are $\klies$ phases that need $c$ additional
queries each, for a total of $\log n+c\klies$, as claimed.
Otherwise, let $P(n)>\left\lfloor \frac{\log n}{c+1}\right\rfloor -\klies$
be the number of phases in which we did not lie. Each such phase needed
$c+2$ queries instead of $c+1$. Overall, we have more than $\log n+P(n)$
queries. Since $P(n)$ is unbounded with growing $n$ while $\klies$
and $c$ are constant, we have $\log n+P(n)\geq\log n+c\klies$
for $n$ large enough, as claimed.
\end{proof}

Note that the construction in the proof of Theorem~\ref{thm:LB_klies} can be applied without change to~$\kfaults$, since the adversary never gives conflicting replies. As a consequence, we immediately obtain a lower bound for~$\kfaults$.

\begin{corollary}
For every $c\in\mathbb{N}$, no algorithm operating on the search tree can find $e$ with less than $\log n+c\kfaults$ queries in general.\label{cor:LB_kfaults}
\end{corollary}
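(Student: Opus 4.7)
The plan is to reuse the adversary from the proof of Theorem~\ref{thm:LB_klies} essentially verbatim, and to verify that it certifies the same lower bound for $\kfaults$ rather than just $\klies$.

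The key observation is that in that construction the adversary's reply to a query depends only on the index queried and on the current phase, and never retracts an earlier answer. Consequently, the sequence of replies can be realised by a fixed (possibly unsorted) array together with a set $F$ of faulty indices, in the sense of the alternative definition of $\kfaults(e)$ as the number of inversions involving $e$ in an unsorted array. The first step of the proof will be to verify this non-conflict property: within a phase $p$ the replies are consistent by construction, since they are a pure function of an index's position relative to the phase's reference node $i_p$; and across phases the commitments only nest further, because after phase $p$ the adversary has committed $e$ to one of the two subtrees of $i_p$, and the reference $i_{p+1}$ of the next phase lies strictly inside that committed subtree. In particular, the single lying reply of a case-B phase sits at $i_p$ itself and is never contradicted later.

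Given the non-conflict property, every index in $F$ must receive a lying reply at least once, so $|F|$ is bounded by the total number of lies the adversary invests, which by the analysis in Theorem~\ref{thm:LB_klies} is at most $\klies$. The phase-counting argument from that proof then goes through unchanged, with $\klies$ replaced by $\kfaults = |F|$, yielding the claimed lower bound of $\log n + c \cdot \kfaults$ queries.

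The main (and essentially only) obstacle is the non-conflict verification; once this is in place, the rest of the argument is a pure relabelling of the proof of Theorem~\ref{thm:LB_klies}, since the query-counting in that proof never actually uses that the underlying array is sorted beyond the fact that queries yield consistent answers.
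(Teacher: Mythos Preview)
Your proposal is correct and follows exactly the paper's own route: the paper derives the corollary in a single sentence preceding it, observing that the adversary of Theorem~\ref{thm:LB_klies} ``never gives conflicting replies'' and hence the same construction applies verbatim to~$\kfaults$. Your write-up simply spells out this non-conflict property in more detail (within-phase and across-phase consistency) and then notes that the phase-counting argument carries over with $\kfaults$ in place of $\klies$; this is precisely the paper's one-line justification, expanded.
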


We show how to translate any algorithm with a performance guarantee with respect to~\klies to an algorithm with the same guarantee for~\kfaults.

\begin{theorem} \label{thm:UB_kfaults}
Let $f:\mathbb{N}^{2}\to\mathbb{N}$. If we can find $e$ with $f(n,\klies)$
queries, then we can find $e$ with $f(n,\kfaults)$ queries.
\end{theorem}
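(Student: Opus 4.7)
The plan is to build the $\kfaults$-algorithm $\mathcal{A}'$ from the given $\klies$-algorithm $\mathcal{A}$ by wrapping it with a response cache. First, $\mathcal{A}'$ maintains a table that records, for every index $i$ that has been queried before, the response returned at its first query. Whenever the simulated $\mathcal{A}$ issues a query at index $i$, $\mathcal{A}'$ looks into the table: if $i$ appears there, the stored response is handed to $\mathcal{A}$ without touching the array; otherwise, $\mathcal{A}'$ makes a single real query to $A$, stores the answer in the table, and forwards it to $\mathcal{A}$. By construction, the number of queries that $\mathcal{A}'$ actually issues to the array equals the number of distinct indices that $\mathcal{A}$ ever queries.

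From $\mathcal{A}$'s perspective, the cache turns the oracle into a consistent one: each index always returns the same response, regardless of how often it is queried. In a $\kfaults$-instance, at most $\kfaults$ indices return a wrong answer, and therefore at most $\kfaults$ of the entries ever stored in $\mathcal{A}'$'s cache are wrong. The next step of the argument is to view $\mathcal{A}$'s execution against this cached oracle as an execution in the $\klies$ model against an adversary that commits to a fixed, consistent lying strategy on exactly the at most $\kfaults$ faulty indices. Under this interpretation, each distinct faulty index contributes one persistent lie, and the $\klies$-guarantee of $\mathcal{A}$ — that it needs at most $f(n,\klies)$ queries when $\klies$ lies are made — translates into a bound of $f(n,\kfaults)$ on the distinct indices queried, and hence on the real queries $\mathcal{A}'$ issues.

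The main obstacle is reconciling the two different counting conventions: $\klies$ counts query-occurrences of wrong outcomes while $\kfaults$ counts wrong indices, so a priori $\mathcal{A}$ could query the same faulty index many times and inflate the effective $\klies$ budget well beyond $\kfaults$. The caching wrapper is precisely what resolves this tension: repeated queries to an already-queried index are absorbed by the cache and never reach the array, so only distinct faulty indices contribute to the real query count. Care is needed to argue that, because repeat cached responses give $\mathcal{A}$ no new information, they cannot force $\mathcal{A}$ to issue more \emph{distinct} queries than it would against a $\klies$-adversary with budget $\kfaults$; this is the step where the correspondence between the consistency-constrained oracle and the standard $\klies$-adversary must be formalized. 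Once that correspondence is in place, the chain $(\text{real queries of }\mathcal{A}') \leq (\text{distinct queries of }\mathcal{A}) \leq f(n,\kfaults)$ yields the theorem.
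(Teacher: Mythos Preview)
Your caching approach has a genuine gap at precisely the step you flag as needing formalization. The issue is not that the correspondence is hard to state formally---it is that it is false.

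Consider the $\klies$-algorithm that performs binary search but, knowing $\klies=k$, queries each node $2k+1$ times and follows the majority response; this achieves $f(n,k)=(2k+1)\log n$. Wrap it with your cache against a $\kfaults$-instance with a single faulty index on the root-to-$e$ path. The cache returns the same wrong answer $2k+1$ times at that index, so the majority is the lie, and the simulated algorithm walks away from $e$ and terminates at a wrong leaf. Your wrapper has made only $\log n$ real queries, but it outputs the wrong element. More generally, from $\mathcal{A}$'s viewpoint the cached oracle may present far more than $\kfaults$ lie-\emph{occurrences}, so the guarantee $f(n,\klies)$ applies only with a possibly much larger $\klies$; the claimed inequality ``distinct queries of $\mathcal{A}\leq f(n,\kfaults)$'' does not follow, and the wrapped algorithm need not even be correct.

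The paper avoids this problem by a different simulation. When the $\klies$-algorithm wants to re-query an index $i$, the wrapper instead queries the nearest \emph{unqueried} index $i'$ to the left or right of $i$. Every index between $i$ and $i'$ has already been queried and none returned `$=$', so $i$ and $i'$ lie on the same side of $e$; hence the answer at $i'$ is a legitimate (fresh) answer for $i$ in the $\klies$ sense. Now every simulated query corresponds to a fresh real query, each faulty index is hit at most once, and the simulated execution sees at most $\kfaults$ lies. The bound $f(n,\kfaults)$ then applies directly to the total number of queries. The crucial difference is that the paper supplies $\mathcal{A}$ with genuinely independent information on repeat queries---exactly what a $\klies$-adversary is forced to pay for---whereas your cache does not.
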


\begin{proof}
Assume we have an algorithm that needs $f(n,\klies)$ queries. 
The difficulty when applying this algorithm for $\kfaults$ is that, in the faulty setting, there is no benefit in querying the same elements again. 
However, we can simulate repeated queries to the same element
as follows. 
Say the algorithm needs to query a previously queried element $i$ with the understanding that the adversary has to pay for lying repeatedly. 
Let $i'$ be the first unqueried index to the left or to the right of $i$. If no such index exists, we already queried all elements and found $e$, since $\query(e)$ is guaranteed
to return the correct result. 
We query $i'$ instead of $i$. If $i'=e$, we are done. 
Otherwise, we know that no index in $[i,i']$ contains $e$, and, hence, all these elements are left of $e$ or all of them are right of $e$. 
Therefore the query to $i'$ is equivalent to another query to $i$ when the adversary has to pay for repeated lies. 
Every fault can be treated as a lie by the adversary, and we get the claimed bound.
\end{proof}


\section{Searching disordered arrays}\label{sec:disorder}

In this section, we consider the problem of finding the index~$\pos(e)$ of an element~$e$ in  array~$A$ of length~$n=2^d$, $d \in \N$.
In contrast to Section~\ref{sec:imprecise_queries}, we do not assume~$A$ to be sorted but expect all queries to yield correct results.
We study a variety of parameters that quantify the disorder of~$A$ and provide algorithms and lower bounds with respect to the different parameters.
Figure~\ref{fig:relations} gives the relationship between every pair of parameters.
The proofs of these relationships can be found in Appendix~\ref{sec:relations}.

\begin{figure}
	\centering
	
\begin{tikzpicture}
 \node[draw, rectangle] (kainv) at (0,1.3) {\kainv};
 \node[draw, rectangle] (kbswap) at (2,0) {\kbswap};
 \node[draw, rectangle] (krbswap) at (4,0) {\krbswap};
 \node[draw, rectangle] (kswap) at (6,0) {\kswap};
 \node[draw, rectangle] (kaswap) at (8.4,0) {\kaswap=\kinv};
 \node[draw, rectangle] (ksum) at (10.8,0) {\ksum};
 \node[draw, rectangle] (kbmov) at (3.8,1.3) {\kbmov};
 \node[draw, rectangle] (krep) at (5.3,2.6) {\krep=\kmov=\kseq};
 \node[draw, rectangle] (kmax) at (8.4,2.6) {\kmax};
 
 \tikzstyle{solid}=[->, thick]
 
 \def \off {1.3}
 \path[draw, solid] (kainv.-18) -> node [above] {2} (kbswap);
 \path[draw, solid] (kbswap) -> node {} (krbswap);
 \path[draw, solid] (krbswap) -> node {} (kswap);
 \path[draw, solid] (kswap) -> node {} (kaswap);
 \path[draw, solid] ([yshift=\off]kaswap.east) -> node {} ([yshift=\off]ksum.west);
 \path[draw, solid] (kainv) -> node {} (krep.west);
 \path[draw, solid] (krep) -> node {} (kaswap);
 \path[draw, solid] (kbswap) -> node {} (kbmov);
 \path[draw, solid] (kbmov) -> node {} (krep);
 \path[draw, solid] (krep) -> node [right] {2} (kswap);
 \path[draw, solid] (kbmov.200) -> node [above] {2} (kbswap.70);
 \path[draw, solid] ([yshift=-\off]ksum.west) -> node [above] {2} ([yshift=-\off]kaswap.east);
 \path[draw, solid] (kmax) -> node {} (kaswap);
 
 \tikzstyle{red}=[->, thick, color=red, dashed]
 \path[draw] (kainv.90) edge[bend left = 30, red] node {} (kmax);
 \path[draw, red] (kmax) -> (krep);
 \path[draw, red] (kmax) -> (kswap);
 \path[draw, red] (krep) -> (krbswap);
 \path[draw, red] (kbswap.156) -> (kainv);
 \path[draw, red] (krbswap) -> (krep);
\end{tikzpicture}

  \caption{Overview of relations between measures of disorder. A solid black path from~$k$ to~$k'$ means that $k \leq ck'$, where~$c$ is the product of the edge labels along the path ($c=1$ for unlabeled paths). If there is no solid black path from~$k$ to~$k'$, then~$k$ cannot be bounded by~$ck'$ for any constant~$c$. Every arc is proved explicitly in Appendix~\ref{sec:relations} (dashed red arcs correspond to unboundedness results), and \emph{all} other relationships are implied.}
  \label{fig:relations}

\end{figure}
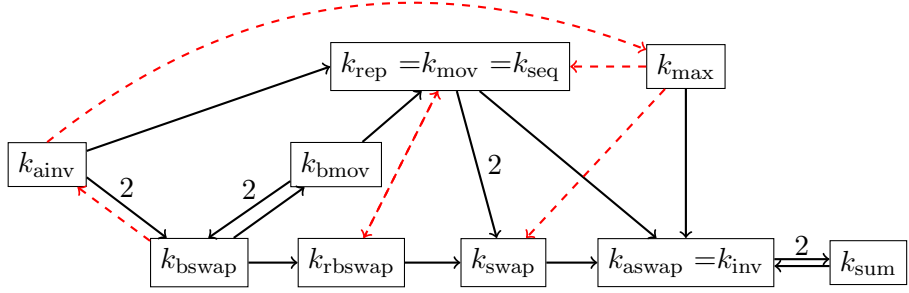

\subsection{Bounded displacement}\label{sub:displacement}

We now consider the two parameters~$\ksum$ and~$\kmax$ that quantify the displacement of elements between~$A$ and~$A^\star$.
More precisely, we define~$\ksum := \sum_{x\in A} |\pos(x) - \rank(x)|$ and~$\kmax := \max_{x\in A} |\pos(x) - \rank(x)|$.
We first derive bounds in terms of~\ksum.

\begin{theorem}
	Every search algorithm needs at least $\lfloor \log (n/2\ksum) \rfloor + 2\ksum + 1$ queries, even if the elements other than~$e$ are in the correct relative order.\footnoteref{foot:trivial_bound}\label{thm:LB_ksum}
\end{theorem}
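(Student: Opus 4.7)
The plan is to construct an adversary strategy that forces any search algorithm to use the claimed number of queries. Writing $k := \ksum$ for brevity, I would restrict attention to the hard instances explicitly allowed by the theorem statement: arrays in which all elements other than $e$ appear in correct relative order. Any such instance is fully specified by a pair $(p, r)$, where $p$ is the position of $e$ in $A$ and $r = \rank(e)$; a direct computation shows that the total displacement equals $2|p - r|$, so the constraint $\ksum \le k$ becomes the diagonal-strip constraint $|p - r| \le k/2$.

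The adversary maintains the set $S$ of pairs $(p, r)$ with $|p - r| \le k/2$ that are consistent with all answers given so far, and on each query at index $q$ replies with whichever of $<$ or $>$ keeps the larger projection of $S$ onto its first coordinate (the set of positions of $e$ not yet ruled out). Since the algorithm has identified $e$ only when this projection has size one, the key quantity to track is its size. I would split the analysis into a \emph{binary phase}, during which the projection is a contiguous interval of more than $2k + 1$ positions, and a \emph{scan phase}, once it has shrunk to $2k + 1$ positions. In the binary phase, a centered query can always be answered so that the projection decreases by at most a factor of two, yielding at least $\lfloor \log(n / (2k)) \rfloor$ queries before the projection first reaches size $2k + 1$. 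In the scan phase, I would argue that each query eliminates at most the queried position itself from the projection: for every other candidate $p \neq q$, one can exhibit a rank $r \in [p - k/2, p + k/2]$ on the same side of $q$ as $p$, making $(p, r)$ consistent with the chosen answer. Thus $2k$ further queries are required to reduce the projection to a single position, plus one more to confirm, for $2k + 1$ queries in this phase; summing the two phases gives the claimed lower bound.

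The main obstacle is the scan phase. I need to show inductively that after each scan-phase query the adversary can still maintain a projection of size one less than before. The delicate point is that the rank restrictions accumulated from previous queries interact with the diagonal-strip constraint, so it must be verified that together they still allow a valid $r$ for every surviving candidate position. Careful bookkeeping of the interval of allowable $r$-values as a function of past $<$ and $>$ answers, together with the freedom to choose $r$ on the appropriate side of $q$ for each surviving candidate, should suffice to establish the invariant and close the proof.
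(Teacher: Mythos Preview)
Your two-phase adversary framework matches the paper's approach, and your parametrization of the restricted instances by pairs $(p,r)$ with $|p-r|\le k/2$ is correct. But the scan-phase claim --- that each query eliminates at most the queried position from the projection --- is false, and the obstacle you flag cannot be overcome by bookkeeping alone.

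Here is the problem. Suppose at the start of the scan phase the viable positions are $\{1,\dots,2k+1\}$ and no rank constraints have yet accumulated. The algorithm queries the midpoint $q=k+1$. If the adversary answers $<$, then every consistent pair $(p,r)$ must satisfy $r\ge q$ (up to $\pm 1$), and combined with your strip constraint $|p-r|\le k/2$ this forces $p\ge q-k/2$; all candidates $p\le k/2$ are eliminated at once. The answer $>$ is symmetric. Either way roughly $k/2$ positions disappear after a single query, not one.

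This is not merely a leak in the invariant: the restricted instance class you work with actually admits an \emph{upper} bound of $\log n + k + O(1)$ queries. Binary search locates the unique $<$/$>$ boundary among the sorted non-$e$ elements, which determines $r=\rank(e)$ to within $\pm 1$; a linear scan of the $k+O(1)$ positions within distance $k/2$ of that boundary then finds $e$. Hence no adversary confined to this class can force a $2k$ term in the query count, and your scan-phase target of $2k+1$ queries is unattainable. (The paper's own proof does not explicitly verify that its final instance satisfies $\ksum\le k$ rather than $\ksum\le 2k$, and the ``even if'' strengthening appears to suffer from the same factor-of-two issue.) To obtain the full $2k$ term one needs instances outside the restricted class, where the displacement budget is split between $e$ and at least one other element so that the binary-search boundary no longer pins down $\rank(e)$.
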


\begin{proof}
	We give a strategy for an adversary to position the elements of the array adaptively, depending on the queries of the search algorithm. 
	The strategy maintains a range $\{l, \dots, r\}$ of candidate indices for the searched element $e$ that never grows during the course of the strategy. 
	In the beginning, we set $l = 1$ and $r = n$.  
	
	In the first phase of the strategy, we maintain the invariant that all queries to indices $i < l$ yield (and yielded) the result $A[i] < e$, and all queries to indices $j > r$ yield $A[j] > e$. 
	Whenever an index $i \in \{l, \dots, r\}$ is queried, the result depends on whether $\{l,\dots,i\}$ is larger than $\{i,\dots,r\}$ or not. 
	In the former case, the query yields $A[i] > e$ and we set $r = i - 1$. In the latter case, it yields $A[i] < e$ and we set $l = i + 1$. The first phase ends after $\lfloor \log (n/2\ksum) \rfloor - 1 = \lfloor \log (n/4\ksum) \rfloor \leq \lfloor \log (n/(2\ksum+2)) \rfloor$ queries. 
	At this point, we have $r - l + 1 \geq 2\ksum + 2$, hence there are still at least $2\ksum + 2$ positions left that may contain $e$.
	
In the second phase of the adversarial strategy, we answer the next $2\ksum + 1$ queries to indices $i \in \{l, \dots, \lfloor(l+r)/2\rfloor\}$ with $A[i] < e$ and all queries to $i \in \{\lfloor(l+r)/2\rfloor + 1, \dots, r\}$ with $A[i] > e$. 
Afterwards, at least one unqueried index in $\{l,\dots, r\}$ remains. 
It is easy to see that $e$ being in this position is consistent with all queries so far. 
Overall, the position of $e$ cannot be found with fewer than $\lfloor \log (n/2\ksum) \rfloor + 2\ksum + 1$ queries, as claimed. Moreover, all $<$ answers are left of $>$ answers, allowing all elements other than $e$ to be in correct relative order.
\end{proof}

We extract the following corollary from the proof of Theorem~\ref{thm:LB_ksum}.

\begin{corollary}\label{cor:candidate_window}
    There is a constant $c \in \mathbb {N}$, such that for every $l>0$, the adversary can ensure that after $\log n/l + c$ queries, an unqueried subarray of length~$l$ remains, such that all elements to the left of the subarray are smaller than~$e$, while all elements to the right of it are larger than~$e$.
\end{corollary}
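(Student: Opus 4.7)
My plan is to lift the construction directly out of phase~1 of the proof of Theorem~\ref{thm:LB_ksum}. I reuse that adversary, maintaining a candidate range $\{L,\ldots,R\}$ initialized to $\{1,\ldots,n\}$, together with the invariant that every already-queried index below~$L$ was answered ``$<$'' and every already-queried index above~$R$ was answered ``$>$''. On an in-range query to index $i \in \{L,\ldots,R\}$, the adversary keeps whichever of $\{L,\ldots,i\}$ and $\{i,\ldots,R\}$ is larger, which cuts the range roughly in half. Queries outside the current range are answered in the unique way consistent with the invariant and leave the range unchanged.

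Iterating the halving yields $R-L+1 \geq \lfloor n/2^{t} \rfloor$ after $t$ in-range queries, so the range still has size at least~$l$ for every $t \leq \log(n/l)$ up to a rounding slack of a constant number of queries. Out-of-range queries never shrink the range, hence the bound transfers to the total number of queries up to an additive constant~$c$ that absorbs the rounding. By construction, every index in the current range is unqueried, providing the desired unqueried subarray of length at least~$l$.

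It remains to check that this subarray has the ``left-$<e$, right-$>e$'' property. The invariant already pins down every queried position outside $\{L,\ldots,R\}$, and the adversary may additionally commit each still-unqueried outside index to hold an element smaller than~$e$ (if below~$L$) or larger than~$e$ (if above~$R$); these commitments involve cells that no past answer has constrained, so the resulting instance is realizable and matches the exact wording of the statement.

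The only real obstacle I anticipate is bookkeeping, namely choosing $c$ large enough to swallow the rounding in the halving recursion together with any off-by-one in the initial range size. No new idea beyond the phase~1 adversary of Theorem~\ref{thm:LB_ksum} is needed.
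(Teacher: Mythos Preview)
Your proposal is correct and follows exactly the route the paper intends: the corollary is stated as being ``extracted from the proof of Theorem~\ref{thm:LB_ksum}'', and your argument reproduces phase~1 of that proof with the same adversary, invariant, and halving step. Your additional remark that the adversary may commit the still-unqueried positions outside $\{L,\ldots,R\}$ to small/large elements is exactly what is needed to match the wording ``all elements'' rather than ``all queried elements'', and it is consistent with how the paper instantiates the array at the end of Theorem~\ref{thm:LB_ksum}.
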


We give an algorithm that achieves a optimal number of queries up to an additive up to an additive gap of~$\log\ksum + \Oh(1)$, while being oblivious of the value of~\ksum.

\begin{theorem}
	We can find $e$ obliviously using $\log n/\ksum + 2\ksum + \mathcal O(1)$ queries.\label{thm:UB_ksum}
\end{theorem}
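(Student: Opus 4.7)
The plan is a simple two-phase algorithm. Phase~1 is a textbook binary search on $[1,n]$: maintain $[l,r]$ starting at $[1,n]$, query $m = \lfloor(l+r)/2\rfloor$, and set $l \leftarrow m+1$ or $r \leftarrow m-1$ according to the response, terminating early if `=' is returned. Phase~1 uses $\log n$ queries and converges to an endpoint $p$ (say $p = l$ after the interval has shrunk to size at most one). Phase~2 then scans outward from $p$, querying $p{+}1, p{-}1, p{+}2, p{-}2,\dots$ while skipping any position already touched in Phase~1, stopping on `='. The algorithm is patently oblivious to $\ksum$ since Phase~2 simply expands until $e$ is found.

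The analysis rests on two claims. Let $\delta_x = |\pos(x) - \rank(x)|$, so $\sum_x \delta_x = \ksum$, and let $m_L, m_R$ denote the positions of the last `<' and last `>' queries in Phase~1 (using sentinels $0, n+1$ if absent). Because $l$ only advances on `<' and $r$ only retreats on `>', we have $m_L = p-1$ and $m_R \in \{p, p+1\}$. First, I would prove $|\pos(e) - p| \leq \ksum + O(1)$: from $A[m_L] < e$ one gets $\rank(e) > \rank(A[m_L]) \geq m_L - \delta_{A[m_L]}$, from $A[m_R] > e$ one gets $\rank(e) < \rank(A[m_R]) \leq m_R + \delta_{A[m_R]}$, and $|\pos(e) - \rank(e)| \leq \delta_e$; summing the three individual displacements gives $|\pos(e) - p| \leq \delta_{A[m_L]} + \delta_{A[m_R]} + \delta_e + O(1) \leq \ksum + O(1)$. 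Second, I would show that at least $\log\ksum + O(1)$ Phase~1 queries already lie in $[p-\ksum, p+\ksum]$: at step $i$ the search interval has size at most $n/2^{i-1}$ and contains $p$, so $|m_i - p| \leq n/2^i \leq \ksum$ whenever $i \geq \log(n/\ksum)$, i.e., during the final $\log\ksum + O(1)$ steps.

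Combining, Phase~2 finds $e$ within the at most $2\ksum + 1 + O(1)$ positions of $[p - \ksum, p + \ksum]$, of which $\log\ksum + O(1)$ are already queried, so it issues at most $2\ksum - \log\ksum + O(1)$ new queries. The total is $\log n + 2\ksum - \log\ksum + O(1) = \log(n/\ksum) + 2\ksum + O(1)$. The main obstacle is the first claim: a naive argument that bounds $|m_L - \rank(e)|$ and $|\rank(e) - \pos(e)|$ each by the global $\ksum$ would give $|\pos(e) - p| \leq 2\ksum$, producing only a $\log n + 4\ksum$ bound; the fix is to split the global budget $\ksum$ into individual $\delta_x$ and observe that the three relevant displacements $\delta_{A[m_L]}, \delta_{A[m_R]}, \delta_e$ occur at distinct elements and hence sum to at most $\ksum$. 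This reuse of Phase~1 queries as scan queries is what saves the $\log\ksum$ term needed to match the lower bound from Theorem~\ref{thm:LB_ksum}.
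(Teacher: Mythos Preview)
Your proposal is correct and follows essentially the same approach as the paper's proof: run an ordinary binary search to locate an index with $A[i]<e<A[i+1]$ (your $m_L,m_R$ play the role of the paper's $i,i+1$), use that $\delta_{A[m_L]}+\delta_{A[m_R]}+\delta_e\le\ksum$ for three distinct elements to confine $\pos(e)$ to a window of width $2\ksum+O(1)$, scan that window outward, and credit the final $\log\ksum+O(1)$ binary-search queries that already fall inside it. The paper's argument and yours differ only in bookkeeping (you track $p=l$ and sentinels, the paper tracks the adjacent pair $i,i+1$), not in substance.
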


\begin{proof}
	We first perform a regular binary search for $e$, ignoring the fact that we may be misguided by elements being displaced.
    In $\log n + \mathcal{O}(1)$ steps, we find $e$ or an index~$i$ with~$A[i] < e$ and $A[i+1] > e$.
    Let~$\Delta_i := |\pos(A[i]) - \rank(A[i])|$. 
    We have $\rank(e) > \rank(A[i]) \geq i - \Delta_{i}$ and $\rank(e) < \rank(A[i+1]) \leq i+1+\Delta_{i+1}$.
    With~$\pos(e) \in \{\rank(e) - \Delta_e, \dots, \rank(e) + \Delta_e\}$ and~$\Delta_e + \Delta_{i} + \Delta_{i+1} \leq \ksum$, we get
    $$ \pos(e) \in \{ i - \ksum + 1, \dots, i + \ksum \}. $$
    We can search this range obliviously by querying the elements $i, i+1, i-1, i+2, i-2, i+3, \dots$ in this order, until we find $e$.
	During the initial binary search, we already queried $\log \ksum + \mathcal{O}(1)$ of these elements, hence we need a total number of queries equal to
	$$ \log n + 2\ksum - \log \ksum + \mathcal O(1) = \log n/\ksum + 2\ksum + \mathcal O(1). $$
\end{proof}

We now turn our attention to the parameter~\kmax.

\begin{theorem}
    Every search algorithm needs at least $\log n/\kmax + 3\kmax + \mathcal{O}(1)$ queries.\footnoteref{foot:trivial_bound}\label{thm:LB_kmax}
\end{theorem}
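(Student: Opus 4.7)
The plan is to mirror the two-phase adversary of Theorem~\ref{thm:LB_ksum}, choosing a larger window and tailoring phase~2 to the fact that, under $\kmax$, many elements can simultaneously be displaced by up to~$\kmax$. In phase~1 I would apply Corollary~\ref{cor:candidate_window} with $l=3\kmax+1$, spending $\log(n/(3\kmax+1))+\Oh(1)=\log(n/\kmax)+\Oh(1)$ queries and leaving an unqueried subarray $W$ of length $3\kmax+1$ with all elements to its left smaller than~$e$ and all to its right greater than~$e$. The remaining task is then to force $3\kmax+\Oh(1)$ additional queries inside~$W$.

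For phase~2 I would maintain a candidate set $C\subseteq W$ of currently unqueried positions $p$ for which some $\kmax$-valid array, consistent with the answers given so far, places $e$ at~$p$; initially $C=W$. On each query inside $W$ the adversary returns whichever of ``$<$'' or ``$>$'' leaves the larger~$C$, and the key claim is that $|C|\geq 2$ is maintained throughout the first $3\kmax-1$ such queries. Combined with phase~1 this yields the claimed lower bound, since the algorithm cannot output $\pos(e)$ with certainty while $|C|\geq 2$.

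To see why the constant~$3$ appears (versus~$2$ for $\ksum$), observe that a ``$<$'' answer at position $q\in W$ forces $\rank(e)\geq q-\kmax+1$ and a ``$>$'' answer forces $\rank(e)\leq q+\kmax-1$; so $W$ effectively splits into a length-$\kmax$ left prefix where only ``$<$'' is viable, a length-$\kmax$ right suffix where only ``$>$'' is viable, and a central length-$(\kmax+1)$ region where both answers are viable. The $\kmax$-slack in each inequality lets the adversary force the algorithm to probe all three regions, contributing~$\kmax$ queries each. The main obstacle is to verify that after any $3\kmax-1$ such queries at least two unqueried positions of~$W$ admit compatible $(\pos(e),\rank(e))$-assignments together with a $\kmax$-bounded rearrangement of the remaining $W$-elements; this amounts to a matching-style argument between positions and ranks inside~$W$, paralleling the analysis for the $\ksum$ case but exploiting that the per-element cap $\kmax$ can be saturated at many positions simultaneously.
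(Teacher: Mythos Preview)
Your Phase~1 is fine and matches the paper. The gap is in Phase~2: the key claim that a greedy adversary keeps $|C|\ge 2$ through the first $3\kmax-1$ queries on a window of size $3\kmax+1$ is false. For $\kmax\ge 2$, consider the algorithm that simply queries the $3\kmax-1$ interior positions $2,3,\ldots,3\kmax$, leaving only positions $1$ and $3\kmax+1$ unqueried. If $s$ is the number of ``$<$'' answers, then placing $e$ at position~$1$ forces $\rank(e)\le \kmax+1$ together with $\rank(e)\in\{s+1,s+2\}$, hence $s\le\kmax$; placing $e$ at position $3\kmax+1$ forces $\rank(e)\ge 2\kmax+1$, hence $s\ge 2\kmax-1$. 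For $\kmax\ge 2$ these ranges for $s$ are disjoint, so no matter how the adversary answers, $|C|\le 1$ after $3\kmax-1$ queries. Your ``three regions of size $\kmax$'' heuristic and the promised ``matching-style argument'' therefore cannot go through on a window of this size; you explicitly flag this step as the main obstacle, and it indeed fails.

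The paper sidesteps the issue by taking a window of size $4\kmax$ and using an explicit, non-greedy adversary rather than an existence argument: first answer ``$<$'' on the left half $L=\{1,\ldots,2\kmax\}$ and ``$>$'' on the right half until each half has received $\kmax-1$ queries (at least $2(\kmax-1)$ queries), then commit $e$ to one half, say $L$, and \emph{flip} to answering ``$>$'' on $L\setminus\{1\}$ for another $\kmax-1$ queries before finally revealing $e$. The extra $\kmax$ of room in the $4\kmax$ window is precisely what lets the subsequent verification succeed---one checks directly that with $\kmax-1$ smaller and $\kmax-1$ larger elements now pinned inside $L$, every element can still be assigned a rank within $\kmax$ of its position. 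With only $3\kmax+1$ positions there is not enough slack to host both the phase-1 ``$<$'' answers and the phase-2 ``$>$'' answers inside one half. To repair your approach you would have to enlarge the window and replace the abstract greedy claim by a concrete strategy whose $\kmax$-feasibility you actually verify, which is essentially what the paper does.
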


\begin{proof}
    By Corollary~\ref{cor:candidate_window}, the adversary can ensure without creating inversions that after using $\log n/4\kmax + \mathcal{O}(1) = \log n/\kmax + \mathcal{O}(1)$ queries the element $e$ may still be at any position of an unqueried subarray of length~$4\kmax$. 
    It is therefore sufficient to show that finding $e$ in an array of length $4\kmax$ and with maximum displacement~$\kmax$ may take~$3\kmax - 2$ queries.
    
    We devise a strategy for the adversary that is split in two phases.
    In the first phase, we maintain that all queried positions in $L=\{1, \dots, 2\kmax\}$ contain elements smaller than~$e$, while the positions in $R=\{2\kmax+1, \dots, 4\kmax\}$ contain elements larger than~$e$.
    The first phase ends when~$\kmax - 1$ positions have been queried in each half of the array.
    
    Without loss of generality, assume that exactly~$\kmax - 1$ positions have been queried in~$L$ at the beginning of the second phase.
    Otherwise, this is true for~$R$ and the argument proceeds analogously.
    We now restrict the position of~$e$ to definitely lie in~$L$.
    The second phase proceeds until another~$\kmax - 1$ positions in~$L$ have been queried.
    All queries to positions in~$R \cup \{1\}$ are answered as before.
    For the queries to positions in~$L \setminus \{1\}$ we return the inverse answer to before.
    
    The third phase proceeds until one more position in~$L$ is queried, which will contain~$e$.
    
    The number of queries up to this point is at least $2(\kmax - 1)$ for the first phase, at least~$\kmax-1$ for the second phase, and at least~$1$ for the third phase.
    Hence, the total number of queries is~$3\kmax-2$ as claimed.
    It remains to argue that we can rearrange the array such that all elements smaller (larger) than $e$ are on the left (right) of $e$ while not moving elements by more than~$\kmax$ positions.
    We fix a final ordering by requiring that the smaller (larger) elements remain in the same relative order.
    
    If position~$1$ was not queried in phase~$2$, then there are~$\kmax-1$ elements smaller than~$e$ and~$\kmax-1$ elements larger than~$e$ in $L$.
    We have~$\mathrm{rank}(e) = \kmax$, thus~$e$ is displaced by at most~$\kmax$ positions.
    All elements smaller than~$e$ are displaced by at most~$\kmax$ positions, since there are at most~$\kmax$ elements that are greater or equal to~$e$ in~$L$.
    Similarly, all elements larger than~$e$ are displaced by at most~$\kmax$ positions, since there are at most~$\kmax - 1$ elements that are smaller or equal to~$e$ in~$L\setminus\{1\}$.
    
    If position~$1$ was queried in phase~$2$, then there are~$\kmax$ elements smaller than~$e$ and~$\kmax-2$ elements larger than~$e$ in $L$.
    The element in position~$1$ has rank 1 and all other elements are displaced by at most~$\kmax$ positions in $L\setminus\{1\}$, as before.
\end{proof}

To obtain a tight upper bound, we need the following observations.

\begin{proposition}\label{proposition:maxinv}
If $A[i]>A[j]$ then $i \geq j - (2\kmax - 1)$.
\end{proposition}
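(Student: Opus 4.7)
The plan is to unpack the definitions of $\rank$ and $\kmax$ and use the fact that the rank function preserves the order of distinct elements. I would set $x := A[i]$ and $y := A[j]$, so that $\pos(x)=i$ and $\pos(y)=j$, and note that the hypothesis $x > y$ together with uniqueness of entries yields $\rank(x) \geq \rank(y) + 1$.

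Next I would apply the definition of $\kmax$ to both $x$ and $y$: since $|\pos(a)-\rank(a)| \leq \kmax$ for every $a \in A$, we have
\[
\rank(x) \leq i + \kmax \qquad \text{and} \qquad \rank(y) \geq j - \kmax.
\]
Chaining these two inequalities through $\rank(x) \geq \rank(y)+1$ gives $i + \kmax \geq j - \kmax + 1$, which rearranges to $i \geq j - (2\kmax - 1)$, exactly the claim.

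There is no real obstacle here; the only thing to be careful about is the strict-vs-nonstrict direction in the rank comparison, which is why distinctness of entries (assumed in the Preliminaries) is used to upgrade $\rank(x) > \rank(y)$ to $\rank(x) \geq \rank(y)+1$. The statement is trivial when $i \geq j$, so the substance of the argument only needs to be run in the inversion case $i < j$, but the derivation above works uniformly without a case split.
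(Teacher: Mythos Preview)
Your proof is correct and essentially identical to the paper's: both derive $\rank(A[i])\geq\rank(A[j])+1$ from the hypothesis and then chain through the two displacement bounds $i\geq\rank(A[i])-\kmax$ and $\rank(A[j])\geq j-\kmax$ to obtain $i\geq j-(2\kmax-1)$.
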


\begin{proof}
If $A[i]>A[j]$ then $\rank(A[i])\geq \rank(A[j])+1$ by definition. Using that $i\geq \rank(A[i])-\kmax$ and $\rank(A[j])\geq j-\kmax$, both by assumption, we derive
\[
i\geq \rank(A[i])-\kmax\geq \rank(A[j])+1-\kmax \geq j-(2\kmax-1),
\]
as claimed.
\end{proof}

\begin{lemma}
	For all $i$ we have $|\{j < i: A[j] > A[i]\}| \leq \kmax$ and symmetrically $|\{j > i: A[j] < A[i]\}| \leq \kmax$.\label{lem:block}
\end{lemma}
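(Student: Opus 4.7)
The plan is to establish the first inequality $|\{j<i : A[j]>A[i]\}|\leq \kmax$ via a direct counting argument; the second inequality will then follow by a symmetric argument (e.g., by applying the same reasoning to the array reversed in order and with the comparison flipped, or by a pigeonhole on the rank-$>r$ elements in the mirrored window). Fix an index $i$ and set $r=\rank(A[i])$. From the bound on $A[i]$'s own displacement we immediately obtain $i\leq r+\kmax$.

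The key observation I would exploit is that the $r$ elements of rank at most $r$ must all be placed into the $r+\kmax$ positions $\{1,\dots,r+\kmax\}$: whenever $\rank(x)\leq r$, we have $\pos(x)\leq \rank(x)+\kmax\leq r+\kmax$. Let $m$ denote the number of these $r$ elements that sit at positions $\leq i$. The remaining $r-m$ of them must lie in $\{i+1,\dots,r+\kmax\}$, a window of at most $r+\kmax-i$ positions, so $r-m\leq r+\kmax-i$ and hence $m\geq i-\kmax$. On the other hand, among the $i$ positions $\{1,\dots,i\}$, exactly $m$ are filled by elements of rank at most $r$ (including $A[i]$ itself, whose rank is $r$), and the remaining $i-m$ positions are precisely the indices $j<i$ with $A[j]>A[i]$. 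Therefore $|\{j<i : A[j]>A[i]\}|=i-m\leq \kmax$, as desired.

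The main point that requires care is the choice of ``window'' to count in. The naive route via Proposition~\ref{proposition:maxinv} only confines $S:=\{j<i : A[j]>A[i]\}$ to positions $\{r+1-\kmax,\dots,i-1\}$, an interval of width up to $2\kmax-1$, and thus yields only $|S|\leq 2\kmax-1$. The improvement to the tight bound $\kmax$ comes from switching to counting the \emph{complement}: the $r$ rank-$\leq r$ elements leave only $(r+\kmax)-r=\kmax$ ``holes'' in the window $\{1,\dots,r+\kmax\}$, and since $i\leq r+\kmax$ every element of $S$ must occupy one of those holes. Getting this reformulation right is the substantive step; the rest is bookkeeping, and the symmetric inequality follows by applying the identical pigeonhole to the $n-r$ elements of rank $>r$, which are all confined to the window $\{r+1-\kmax,\dots,n\}$ and thus leave at most $\kmax$ holes to the right of position $i$.
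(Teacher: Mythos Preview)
Your argument is correct. The pigeonhole you set up---the $r$ elements of rank at most $r=\rank(A[i])$ are confined to positions $\{1,\dots,r+\kmax\}$, leaving at most $\kmax$ slots in this window for elements of higher rank, and $i\leq r+\kmax$ places every offending index $j<i$ inside that window---cleanly yields $|\{j<i:A[j]>A[i]\}|=i-m\leq\kmax$. The symmetric claim follows by the mirrored count you indicate.

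This is a genuinely different route from the paper's. The paper argues by contradiction: assuming $\kmax+1$ indices $j<i$ with $A[j]>A[i]$, it singles out the element of rank $i-\kmax-1$, shows it must sit at some position $r<i$ and be smaller than every $A[j]$ with $j\geq i$, and then counts that too few elements can be smaller than $A[r]$ to be consistent with its rank. Your direct counting avoids the auxiliary element and the contradiction altogether; it is shorter and makes the ``$\kmax$ holes'' intuition explicit. The paper's approach, on the other hand, localizes the obstruction to a single witness element, which is a style that sometimes generalizes more readily when one needs to identify a concrete violating position rather than just bound a cardinality. For the present lemma, your argument is the cleaner of the two.
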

\begin{proof}
	For the sake of contradiction, assume that there are $j_1 < j_2 < \dots < j_{\kmax+1} = i - 1$ with $A[j_l] > A[i]$ for all $l \in \{1,2,\dots,\kmax+1\}$ (the symmetrical case can be proven analogously). 
	Let $A[r]$ be such that $\rank(A[r]) = j_{\kmax+1} - \kmax = i - \kmax - 1$.
	We have $r < i$, since $i - \rank(A[r]) = \kmax + 1 > \kmax$.
	Also $A[r] < A[j]$ for all $j \geq i$, since $\rank(A[j]) \geq j - \kmax \geq i - \kmax > \rank(A[r])$.
	On the other hand, the number of elements $A[j]$ with $j<i$ and $A[j] \leq A[i]$ is at most $j_{\kmax+1} - (\kmax+1) = i - \kmax - 2$, since the number of elements $A[j]$ with $j<i$ and $A[j] > A[i]$ is at least $\kmax+1$ by assumption.
	But then, the total number of elements that are smaller than $A[r] < A[i]$ is at most $i - \kmax - 3$.
	This is a contradiction with $\rank(A[r]) = i - \kmax - 1$.
\end{proof}

We now describe an algorithm that achieves an optimal number of queries up to an additive constant, while being oblivious to the value of~\kmax.

\begin{theorem}
	We can find $e$ obliviously using $\log n/\kmax + 3\kmax + \mathcal O(1)$ queries. \label{thm:UB_kmax}
\end{theorem}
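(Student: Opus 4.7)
The plan is to mirror the two-phase strategy of Theorem~\ref{thm:UB_ksum}: first isolate $\pos(e)$ to a small window using a plain binary search, then scan that window adaptively. I will use $\kmax$-dependent versions of the bounds invoked in the $\ksum$ analysis, supplemented by Proposition~\ref{proposition:maxinv} and Lemma~\ref{lem:block} to pin down the right constants.

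First I would run a standard binary search on $A$ as if it were sorted. Exactly as in Theorem~\ref{thm:UB_ksum}, this uses $\log n + \mathcal{O}(1)$ queries and either finds $e$ or terminates with a pair of adjacent indices $i, i+1$ satisfying $A[i] < e$ and $A[i+1] > e$. The displacement bound then yields $\rank(A[i]) \geq i - \kmax$, $\rank(A[i+1]) \leq i+1+\kmax$, and $|\pos(e)-\rank(e)| \leq \kmax$. Since $\rank(A[i]) < \rank(e) < \rank(A[i+1])$, combining these inequalities gives $\pos(e) \in [i - 2\kmax + 1, i + 2\kmax]$, a candidate window $W$ of size at most $4\kmax$. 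Using Proposition~\ref{proposition:maxinv} and Lemma~\ref{lem:block}, I would refine this window to size $3\kmax + \mathcal{O}(1)$ by ruling out the extreme boundary positions: these can host $e$ only if $A[i]$, $A[i+1]$, and $e$ are all displaced by nearly $\kmax$ in incompatible directions, which conflicts with the limit on inversions incident to a single position given by Lemma~\ref{lem:block}.

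After pinning down $W$, I would scan it obliviously by querying in the expanding order $i, i+1, i-1, i+2, i-2, \dots$, skipping any position already queried during the binary search, and halting as soon as $e$ is encountered. For the query-count accounting I would copy the argument of Theorem~\ref{thm:UB_ksum}: the last $\log \kmax + \mathcal{O}(1)$ queries of the binary search lie inside the final window $W$, so the scan contributes at most $|W| - \log\kmax + \mathcal{O}(1)$ fresh queries. Combined with the $\log n + \mathcal{O}(1)$ queries of the binary search, this yields the claimed total of $\log(n/\kmax) + 3\kmax + \mathcal{O}(1)$.

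The main obstacle is the tightening of the candidate window from the naive $4\kmax$ to $3\kmax + \mathcal{O}(1)$; the raw displacement estimates by themselves are too loose, and one must combine them with Proposition~\ref{proposition:maxinv} and Lemma~\ref{lem:block}, or alternatively exploit during the adaptive scan that only $\kmax$ ``wrong-side'' positions can occur on either side of the binary search's terminal gap. The remaining arguments are direct adaptations of Theorem~\ref{thm:UB_ksum}, with the aggregate constraint $\Delta_e + \Delta_i + \Delta_{i+1} \leq \ksum$ replaced by the per-element constraints coming from $\kmax$.
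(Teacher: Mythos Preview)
Your overall structure (binary search, then local scan with the $\log\kmax$ reuse trick) matches the paper, but the crucial step---getting from the $4\kmax$-wide window down to $3\kmax$ queries---has a genuine gap.

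You propose to shrink the window $W = [i-2\kmax+1,\, i+2\kmax]$ to size $3\kmax + \mathcal{O}(1)$ \emph{before} scanning, arguing that the extreme positions are infeasible. This cannot work: the bound from Proposition~\ref{proposition:maxinv} is tight on each side separately, so $\pos(e)$ may sit as far left as $i-2\kmax+1$ or as far right as $i+2\kmax$, and nothing known at this point distinguishes the two cases. Lemma~\ref{lem:block} bounds the number of wrong-side elements relative to $\pos(e)$, but that is information about the \emph{contents} of the window, not about which end $e$ occupies. Consequently the fixed expanding scan $i,\, i+1,\, i-1,\, i+2,\dots$ you describe (copied from Theorem~\ref{thm:UB_ksum}) can be forced to take $4\kmax - \mathcal{O}(1)$ queries before reaching $e$.

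The paper does \emph{not} shrink the window. It scans the full $4\kmax$-window \emph{adaptively}: starting from the center, it moves left whenever the queried elements larger than $e$ outnumber the smaller ones, and right otherwise. Lemma~\ref{lem:block} then guarantees that at most $\kmax$ elements smaller than $e$ lie to the right of $\pos(e)$, so if $e$ is in the left half the scan can take at most $\kmax+1$ rightward steps before the balance flips and it heads left for good; symmetrically for the right half. This caps the scan at $3\kmax + \mathcal{O}(1)$ queries without ever deciding in advance which half to favor. Your closing paragraph actually gestures at precisely this idea (``exploit during the adaptive scan that only $\kmax$ wrong-side positions can occur''), but the scan you commit to in the body of the proposal is the non-adaptive one, which cannot exploit it.
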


\begin{proof}
    We first use a binary search to find~$e$ or an index~$i$ with~$a_i < e < a_{i+1}$ with~$\log n + \Oh(1)$ queries.
    By Proposition~\ref{proposition:maxinv}, we have~$\pos(e) \in W = \{ i - 2\kmax + 1, \dots, i + 2\kmax \}$.
    We query the positions in~$W$, starting from the center (positions $i$ and $i+1$ don't need to be queried again) and moving to the left whenever the number of queried elements of~$W$ larger than~$e$ exceeds the number of smaller elements, and moving to the right otherwise.
    This can be done obliviously, i.e., without knowing~$\kmax$.

    We claim that we are guaranteed to encounter~$e$ within~$3k + 1$ queries.
    To see this, assume without loss of generality that $e$ is in the left half of $W$.
    In this case, we claim that we do not query any elements in $\{i + \kmax + 2, \dots, i + 2\kmax\}$.
    For the sake of contradiction, assume we query positions~$\{i+1, \dots, i + \kmax + 2\}$, i.e., at least $\kmax + 2$ positions in the right half of~$W$.
    This means that we queried at least~$\kmax + 1$ elements smaller than~$e$ in $W$ before finding $e$, by construction of the algorithm.
    But then~$|\{j > \pos(e): A[j] < e\}| > \kmax$, contradicting Lemma~\ref{lem:block}.
    
    We can refine this analysis by observing that we already queried at least $\log k + \Oh(1)$ positions among~$\{i - \kmax + 1, \dots, i + \kmax\}$ during the initial binary search.
\end{proof}


\subsection{Inversions}\label{sub:inversions}

In this section we consider the number of inversions between elements of the array~$A$.
More precisely, we define the number of inversions to be~$\kinv := |\{i < j:A[i] > A[j]\}|$, and the number of adjacent inversions to be $\kainv := |\{i:A[i] > A[i+1]\}|$.

We have~$\ksum \leq \kinv \leq 2\ksum$ (cf.~Proposition~\ref{prop:ksum_kinv}), therefore the results for~$\ksum$ (Theorems~\ref{thm:LB_ksum} and \ref{thm:UB_ksum}) carry over to~\kinv with a gap of~2.

\begin{corollary}
	Every search algorithm needs at least $\log n/\kinv + 2\kinv + \Oh(1)$ queries\footnoteref{foot:trivial_bound}, and we can find~$e$ obliviously with~$\log n/\kinv + 4\kinv + \Oh(1)$ queries.
	\label{cor:LB_kinv}
	\label{cor:UB_kinv}
\end{corollary}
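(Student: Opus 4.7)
The plan is to reduce both bounds to the corresponding results for $\ksum$ by invoking Proposition~\ref{prop:ksum_kinv}, which pins $\ksum$ down to a factor-of-two window in terms of $\kinv$, concretely $\kinv \leq \ksum \leq 2\kinv$. The whole argument is then just two substitutions in each direction: the upper inequality controls the linear term and the lower inequality controls the logarithmic term, and both appear in opposite orientations when deriving the upper versus the lower bound.

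For the upper bound, I would simply run the oblivious algorithm of Theorem~\ref{thm:UB_ksum} verbatim; it locates $e$ with at most $\log n/\ksum + 2\ksum + \Oh(1)$ queries without ever needing to know $\ksum$. Since $\ksum \geq \kinv$ implies $\log n/\ksum \leq \log n/\kinv$, and $\ksum \leq 2\kinv$ implies $2\ksum \leq 4\kinv$, the same execution uses at most $\log n/\kinv + 4\kinv + \Oh(1)$ queries on any array with inversion count $\kinv$. Obliviousness to $\ksum$ automatically entails obliviousness to $\kinv$, so nothing is lost.

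For the lower bound, I would apply the inequalities of Proposition~\ref{prop:ksum_kinv} in the opposite direction to the adversarial instance of Theorem~\ref{thm:LB_ksum}: $\ksum \leq 2\kinv$ gives $\log n/\ksum \geq \log n/(2\kinv) = \log n/\kinv - 1$, and $\ksum \geq \kinv$ gives $2\ksum \geq 2\kinv$, so the adversarial instance forces at least $\log n/\kinv + 2\kinv + \Oh(1)$ queries on the corresponding $\kinv$-bounded family, with the additive~$1$ loss swept into $\Oh(1)$. There is no real obstacle here; the only point requiring a little care is tracking which direction of Proposition~\ref{prop:ksum_kinv} feeds which term, and the factor-of-two gap between the lower and upper bound of the corollary is exactly the factor-of-two slack between $\kinv$ and $\ksum$.
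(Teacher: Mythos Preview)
Your proposal is correct and matches the paper's one-line argument, which likewise derives both bounds from Theorems~\ref{thm:LB_ksum} and~\ref{thm:UB_ksum} via the sandwich $\kinv \leq \ksum \leq 2\kinv$ of Proposition~\ref{prop:ksum_kinv}. (The paper's inline sentence preceding the corollary actually writes the inequality in the reversed order, but Proposition~\ref{prop:ksum_kinv} confirms the orientation you use, and your tracking of which side feeds the logarithmic versus the linear term is exactly right.)
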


In general, we cannot hope to obtain results of similar quality for the smaller parameter~\kainv. In fact, already for $\kainv=1$ any search algorithm needs to query all $n$ elements.

\begin{proposition}
For $\kainv \geq 1$, no algorithm can find $e$ with less than $n$ queries.
\end{proposition}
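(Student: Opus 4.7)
The plan is to run an adversary that answers ``$>$'' to every query and keeps two correct outputs simultaneously possible until the final position is queried: either $e$ is located at the unique unqueried position $j^*$, or $e\notin A$ altogether. Since any run of at most $n-1$ queries leaves some $j^*$ unqueried, no single output is correct for both completions, so the algorithm is forced into at least $n$ queries.

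Concretely, I would fix $n$ distinct values $v_1<\dots<v_n$, all strictly larger than $e$, and have the adversary reply ``$>$'' to every query. Suppose the algorithm halts after at most $n-1$ queries, and let $j^*$ be any unqueried position. I then split on the algorithm's output. If it outputs a position $p$, necessarily $p=j^*$, because every queried position received ``$>$'' and so cannot hold $e$; the adversary commits to the array obtained by placing $v_1,\dots,v_n$ in sorted order and swapping any two adjacent entries. This array has $\kainv=1$, contains only values exceeding $e$, is consistent with every reply, and does not contain $e$, so the output is wrong. If instead the algorithm outputs ``$e\notin A$'', the adversary places $e$ at position $j^*$ and $v_1<\dots<v_{n-1}$ in sorted order at the other positions. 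For $j^*\geq 2$, the entry immediately left of $j^*$ is some $v_i>e$ and forms an adjacent inversion with $e$, giving $\kainv=1$; for the corner case $j^*=1$ an additional swap of two adjacent $v_i$ (well inside the tail, e.g., at positions $2$ and $3$) introduces an inversion while preserving all ``$>$'' replies. In both subcases $e\in A$, contradicting the output.

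The only delicate step is the $j^*=1$ corner case, where $e$ followed by a sorted tail would otherwise have $\kainv=0$; the small extra swap among the $v_i$ repairs this without disturbing consistency. Everything else is standard adversary indistinguishability, and no counting beyond ``at most $n-1$ queries leaves a witness position'' is required.
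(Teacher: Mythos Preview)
Your argument is correct but follows a different route than the paper's. The paper searches for the \emph{maximum} element $e=n$ within the family of arrays obtained from $[1,\ldots,n]$ by moving $n$ to an arbitrary position; the adversary answers $<$ to the first $n-1$ queries, so that $e$ (which is always present) remains consistent with any unqueried slot, forcing an $n$-th query to actually hit it. Your version instead picks $e$ below all array values, answers $>$, and then splits on the algorithm's output: for a declared position you commit to an all-large array (so $e\notin A$), and for a ``not in $A$'' answer you place $e$ at the unqueried slot. The benefit of your approach is that it works cleanly even in the model where the algorithm may output a position without ever querying it, because you exploit the $e\notin A$ alternative; the paper's argument instead uses the convention that ``finding $e$'' means eliciting an $=$ response. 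The paper's approach buys brevity: a single family with $e$ always present, no case distinction, and no corner case at $j^*=1$.

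Two small remarks. Your claim ``necessarily $p=j^*$'' is inessential and not literally true (the algorithm could output anything), but it is harmless since the $e\notin A$ commitment refutes \emph{every} position. And your $j^*=1$ tweak to force $\kainv=1$ exactly is not needed: the statement ``for $\kainv\geq 1$'' refers to the parameter \emph{budget}, so instances with $\kainv=0$ are admissible for the adversary; the sorted array $[e,v_1,\dots,v_{n-1}]$ already suffices there.
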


\begin{proof}
Consider the family of arrays that are obtained from $[1,\ldots,n]$ by moving $n$ to an arbitrary position (possibly leaving it in place); all arrays of this form have $\kainv\leq 1$ since the only possible adjacent inversion is between $n$ and the succeeding element). An adversary may use this family to force any search algorithm to query all $n$ positions when searching for $e=n$: The adversary will answer the first $n-1$ queries by $<$, maintaining that all arrays where $n$ is in any unqueried position are consistent with the answers given so far. (This is easy to see since all other elements are smaller than $e=n$.)
\end{proof}

Fortunately, we can do much better if the target $e$ is guaranteed to be in the correct position relative to sorted order, i.e., if $\pos(e)=\rank(e)$. Note that this restriction still allows us to prove a lower bound on the necessary number of queries that is much larger than all preceding results. We will complement this lower bound by a search algorithm that matches it tightly (up to lower-order terms). Both upper and lower bound hinge on the question of how efficiently (in terms of queries) an algorithm can find a good estimate of $\rank(e)$ by querying the array.

\begin{theorem}\label{theorem:app:kaiv:lower}\label{thm:LB_kainv}
Every search algorithm needs at least $2\sqrt{2n\kainv}-o(\sqrt{n\kainv})$ queries, even if $\pos(e) = \rank(e)$.\footnoteref{foot:trivial_bound}
\end{theorem}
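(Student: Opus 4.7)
The plan is a two-phase adversarial strategy closely paralleling the one used in Theorem~\ref{thm:LB_ksum}. In the first phase, I would invoke Corollary~\ref{cor:candidate_window} with a parameter $\ell$ to be chosen, spending $\log(n/\ell) + O(1)$ queries to force an unqueried subarray $W$ of length $\ell$ such that every element to the left of $W$ is $<e$ and every element to the right is $>e$. Crucially for this theorem, any value assignment within $W$ with at most $k$ internal adjacent inversions extends to a global array with $\kainv \leq k$, since the monotone boundaries between $W$ and the sorted outside contribute no additional adjacent inversions; and $\pos(e) = \rank(e)$ is compatible as long as $e$ sits at the position of $W$ that matches $L-1$ plus the count of $<e$-entries inside $W$.

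In the second phase, I would analyze the $<$/$>$/$=$-pattern that $A$ induces within $W$. The anchoring constraint $\pos(e) = \rank(e)$ pins the $=$-position to the number of $<$'s inside $W$, while $\kainv \leq k$ caps the number of $>$-to-$<$ descents of this pattern at $k$ (a cap the adversary can make tight by sorting values within each same-type run). Starting from a candidate set consisting of all $\ell$ positions inside $W$ for $\pos(e)$, the adversary answers each query so as to keep as many candidates alive as possible, using the $k$-descent budget to repeatedly delay the halving that a sorted arrangement would allow. Paralleling the candidate-interval maintenance in the second phase of Theorem~\ref{thm:LB_ksum}, I would show that the adversary preserves $\Omega(\ell)$ candidates across $\Omega(\ell)$ queries, for the right choice of $\ell$.

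Choosing $\ell = 2\sqrt{2nk}$ balances the two phases: phase 1 contributes $\log(n/\ell) + O(1) = o(\sqrt{nk})$ queries, and phase 2 contributes $\ell - o(\ell) = 2\sqrt{2nk} - o(\sqrt{nk})$ queries, yielding the claimed lower bound. The main obstacle is the combinatorial core of the second phase: proving that with only $k$ pattern descents, subject to the anchoring constraint on the $=$-position, the adversary can schedule its answers so that each query eliminates at most a vanishing fraction of the candidate set. This requires an amortized analysis tracking, for each query, the descent-budget it consumes against the amount of additional ambiguity about $\pos(e)$ it preserves; tuning this trade-off is precisely what fixes the tight constant $2\sqrt{2}$ and the choice $\ell = 2\sqrt{2nk}$.
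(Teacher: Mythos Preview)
Your two-phase decomposition has a genuine gap: the second phase cannot deliver anything close to $\ell - o(\ell)$ queries for your choice $\ell = 2\sqrt{2nk}$. After phase~1, the sub-problem inside the window $W$ is itself an instance of the same search problem: an array of length $\ell$ with at most $k$ adjacent inversions and the target at its rank. But the matching upper bound (Theorem~\ref{theorem:app:kaiv:upper}) solves such an instance in $2\sqrt{2\ell k}+o(\sqrt{\ell k})$ queries. With $\ell = 2\sqrt{2nk}$ this is $\Theta\bigl((nk^3)^{1/4}\bigr) = o(\sqrt{nk})$ whenever $k = o(n)$, so no adversary can force $\ell - o(\ell)$ queries in phase~2. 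In other words, once you have isolated a window of size $\Theta(\sqrt{nk})$ that still only carries the original budget of $k$ descents, the window is \emph{much easier} than the full problem, and the decomposition throws away almost the entire lower bound. No amortized descent-budget accounting can overcome this, because the upper-bound algorithm is an explicit witness that the window sub-problem is cheap.

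The paper's proof avoids this by not separating ``narrow down'' from ``spend the inversion budget''. Its adversary answers $<$ on the left half and $>$ on the right half, and tracks two quantities per half: the distance $p$ from the center to the nearest unqueried position, and the number $\ell$ of queries scattered farther out. The adversary keeps open the option of placing $e$ at $n/2-p$ (or symmetrically $n/2+1+q$) and balancing the rank constraint via $k$ hidden blocks of large elements among the unqueried positions $1,\ldots,n/2-p-1$. This option dies only when no $k$ unqueried blocks of total size $p$ remain there, which forces the inequality $(\ell+1)p > k(n/2 - p - 1 - \ell)$ and hence $p+\ell \ge \sqrt{2kn} - o(\sqrt{kn})$. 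The crucial point is that the hidden blocks live in the \emph{entire} half of the array, not in a pre-shrunk window, so the $k$-block budget is played off against $\Theta(n)$ positions rather than $\Theta(\sqrt{nk})$ positions; that is exactly what your decomposition loses.
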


\begin{proof}
We describe an adversary that will force any search algorithm to use at least the claimed number of queries, i.e., at least $2\sqrt{2n\kainv}$ minus lower order terms. The adversary will not fix the actual contents of the array beforehand but will guarantee throughout that there exists an $n$-element array with at most $\kainv$ adjacent inversions that is consistent with all queries answered so far. We will consider positions of the underlying array to be numbered $1,\ldots,n$, and assume that $n$ is even for convenience.

At high level, the adversary aims to place the target $e$ as close to the middle of the array as possible. Accordingly, his \emph{standard response} to queries to the first half is $<$ whereas it is $>$ for the second half. Eventually, he will be forced to pick a concrete array and position of $e$ that is consistent with all queries. If $e$ is placed in position $x$ in the first half then previous queries between $x$ and the middle have identified elements that are smaller than the target, and that are now found to the right of it. To adhere to the restriction that $\pos(e)=\rank(e)$, the adversary needs to choose an array that has the same number of larger elements to the left of $e$. He needs to place blocks of such elements between positions left of $x$ that are already queried (and contain smaller elements) without causing too many adjacent inversions; these blocks are called \emph{hidden blocks} in reference to the fact that none of their positions has been queried before. (All of this is symmetric for placement in the second half.)

Let us now give a detailed description of the adversary's strategy. The standard response is $<$ for positions $1,\ldots,\frac{n}{2}$ and $>$ for positions $\frac{n}{2}+1,\ldots,n$. The adversary keeps track of the following values: $p\geq 0$ is the smallest value such that $\frac{n}{2}-p$ has not been queried yet; $q\geq 0$ is the smallest value such that $\frac{n}{2}+1+q$ has not been queried yet; $\ell$ is the number of queries that have been made to positions $1,\ldots,\frac{n}{2}-p-1$; and $r$ is the number of queries that have been made to positions $\frac{n}{2}+q+2,\ldots,n-1$. Initially we have $p=q=\ell=r=0$. Observe that $p$ and $q$ will never decrease, but $\ell$ and $r$ may decrease upon increases of $p$ or $q$, respectively. Note that $p+\ell$ and $q+r$ never decrease: E.g., $p+\ell$ counts the queries on $1,\ldots,\frac{n}{2}-p-1$ and those on $\frac{n}{2}-p+1,\ldots,\frac{n}{2}$; there is never any previous query for position $\frac{n}{2}-p$ by choice of $p$. Thus, we also see that $p+q+\ell+r$ is always equal to the total number of queries made so far by the search algorithm.

Concretely, the adversary plans to put the target~$e$ either in position $\frac{n}{2}-p$ in the first half or in position $\frac{n}{2}+1+q$ in the second half of the array. Queries to the first, respectively second, half of the array may force him to abandon the corresponding option. Abandoning the first, say not putting the target in the first half, allows him to continue answering $<$ in the first half, and to answer $>$ in the second half until he needs to commit to a position $\frac{n}{2}+1+q$ in the second half. In other words, the standard response can be continued. Once a standard response would force to abandon the second option he instead needs to commit to an instantiation that puts the target in the previously still feasible half. At this point, no further queries will be guaranteed, since the current query could in principle be to the position of $e$ in the chosen instantiation. Nevertheless, we show that the number $p+q+\ell+r$ of queries of the search algorithm invested will be sufficiently large at this point. To do so, we give a lower bound on $p+\ell$ for the point when the first half is no longer feasible, and an analog lower bound on $q+r$ for the second half.

Intuitively, the adversary aims to have the target as close as possible to the center of the array. Of course, queries to positions around $\frac{n}{2}$ will eventually increase the values of $p$ and $q$, which are bounds for how close the target $e$ can be to the center. Since the adversary needs to fulfill $\pos(e)=\rank(e)$, the position of $e$ forces the adversary to choose an array with the correct numbers of smaller and larger elements. In particular, if the adversary chooses for example to place the target $e$ in the current position $p$, then all elements in positions $p+1,\ldots,\frac{n}{2}$ are smaller than the target (due to queries that were already answered); these are exactly $p$ elements. Thus, to balance out the numbers the adversary needs to choose an array such that at least $p$ elements larger than the target are in positions $1,\ldots,p-1$ (we will go for exactly $p$ and have no additional smaller elements succeeding the target). This is hindered by the fact that $\ell$ queries were already made on this part, and the fact that every maximal block of larger elements in the first half necessarily ends with an adjacent inversion (either with a smaller element or with the target). This will eventually force the adversary to ``give up'' on one half of the array or (when this happens the second time) to pick a concrete instantiation with the target placed in the second half.

Now, assume that some query is made by the search algorithm. We will discuss in detail what happens for a query to a position in $1,\ldots,\frac{n}{2}$; queries to $\frac{n}{2}+1,\ldots,n$ are treated symmetrically. For a query to a position in $1,\ldots,\frac{n}{2}$ the adversary tests whether he can answer with $<$ and still maintain existence of a consistent instantiation that places $e$ in the first half. To this end, pretend that the query is answered $<$, update $p$ and $\ell$, and check whether there is a consistent instantiation with target in position $\frac{n}{2}-p$ that has at most $\kainv$ adjacent inversions. Note that the adversary does not need to do this optimally. It suffices to have a strategy that causes the claimed number of queries, and he may give up even though there could still be a consistent instantiation.

From the perspective of a sorted array, placing the target in position $\frac{n}{2}-p\in\{1,\ldots,\frac{n}{2}\}$ only conflicts with the queries to positions $\frac{n}{2}-p+1,\ldots,\frac{n}{2}$ that were already answered with $<$, of which there are exactly $p$. To balance out the total numbers of larger and smaller elements, the adversary checks for an instantiation that puts $k$ blocks of larger elements into positions $1,\ldots,\frac{n}{2}-p-1$, surrounded by smaller elements. This causes $k$ adjacent inversions between the last element of a block and the subsequent element. Moreover, one adjacent inversion exists between the target in position $\frac{n}{2}-p$ and its successor, which must be a smaller element due to queries, barring the trivial case of $p=0$ where a fully sorted array without adjacent inversions suffices. Thus, the adversary may choose $k:=\kainv-1$. In the interest of recycling the argument later, we will do the analysis in terms of $k$ and only plug in $\kainv-1$ at the very end.

Thus, it suffices to check whether there are $k$ non-overlapping blocks in positions $1,\ldots,\frac{n}{2}-p-1$ of elements that have not been queried yet, whose total size is at least $p$; placing $p$ larger elements in such blocks will be consistent with queries answered so far. If such a choice of blocks exist then a valid instantiation would be to place $p$ larger elements in these blocks, surrounded by smaller ones, followed by the target in position $\frac{n}{2}-p$, followed by $p$ smaller elements, and finally by $\frac{n}{2}$ larger elements. Observe that the adversary can choose the smaller respectively larger elements freely and, hence, there are no adjacent inversions inside blocks of smaller respectively larger elements. Thus, if the $p$ larger elements that are required can be placed in $k$ blocks in positions $1,\ldots,\frac{n}{2}-p-1$ then there is an instantiation with at most $k+1$ adjacent inversions ($k$ from the blocks and one between the target and its successor). With $k\leq\kainv-1$ this suffices to maintain the invariant of an existing consistent instantiation.

Let us now derive a lower bound for $p+\ell$ for the case that the adversary cannot find $k$ non-overlapping blocks of unqueried elements in $0,\ldots,\frac{n}{2}-p-1$. Clearly, the total size of \emph{any} $k$ maximal unqueried blocks in $0,\ldots,\frac{n}{2}-p-1$ must be less than $p$. We can relate $n$, $p$, and $k$ by counting the number $F$ of unqueried elements in $1,\ldots,\frac{n}{2}-p-1$ in two different ways. Clearly, because $\ell$ is the number of queries made to $1,\ldots,\frac{n}{2}-p-1$ we have
\begin{align}
F=\frac{n}{2}-p-1-\ell.\label{equation:kaiv:lower:f}
\end{align}
On the other hand, letting $B_1,\ldots,B_s$ denote the maximal unqueried blocks in $1,\ldots,\frac{n}{2}-p-1$, we have
\[
F=\sum_{i=1}^s |B_i|.
\]
For convenience, we explicitly include empty blocks between adjacent queried positions, before a position $1$ if it is queried, and after position $\frac{n}{2}-p-1$ if it is queried; in this way, the number of maximal unqueried blocks is exactly $\ell+1$. Since any $k$ of these blocks have total size less than $p$ we get $|B_1|+\ldots+|B_k|<p$, $|B_2|+\ldots+|B_{k+1}|<p$, and so on (wrapping around indices larger than $s$). Summing up these $s=\ell+1$ inequalities we get
\begin{align*}
(\ell+1)\cdot p= s\cdot p > k\cdot \sum_{i=1}^s |B_i| = k\cdot F.
\end{align*}
Together with $(\ref{equation:kaiv:lower:f})$ this yields
\begin{align*}
(\ell+1)\cdot p> k\cdot F=k\cdot \left( \frac{n}{2} -p -1 -\ell \right).
\end{align*}
We bring this inequality into a more convenient form to derive a lower bound for $\ell+p$:
\begin{align}
&&(\ell+1)\cdot p &{}> k\cdot \left( \frac{n}{2} -p -1 -\ell \right)&&\nonumber\\
\Leftrightarrow && (\ell+1)\cdot p +k\cdot (p + \ell) &{}> k\cdot \left( \frac{n}{2}-1 \right)&& \label{equation:kaiv:lower:main}
\end{align}
The left hand side is upper bounded by
\begin{align}
\left( \frac{p+\ell+1}{2} \right)^2 + k\cdot (p+\ell)\geq (\ell+1)\cdot p +k\cdot (p + \ell),\label{equation:kaiv:lower:lhs}
\end{align}
since, for any $x,y\geq 0$ we have
\begin{align*}
&& \left( \frac{x-y}{2} \right)^2 & \geq 0\\
\Rightarrow && \frac{x^2}{4} - \frac{xy}{2} + \frac{y^2}{4} & \geq 0\\
\Rightarrow && \frac{x^2}{4} + \frac{xy}{2} + \frac{y^2}{4} & \geq xy\\
\Rightarrow && \left( \frac{x+y}{2} \right)^2 & \geq xy;
\end{align*}
we use it with $x=\ell+1$ and $y=p$ to obtain $(\ref{equation:kaiv:lower:lhs})$.

Combining $(\ref{equation:kaiv:lower:main})$ with $(\ref{equation:kaiv:lower:lhs})$ yields
\begin{align}
&&\left( \frac{p+\ell+1}{2} \right)^2 + k\cdot (p+\ell){} &{} > k\cdot \left( \frac{n}{2}-1 \right),&&\nonumber\\
\Leftrightarrow && \left( \frac{p+\ell+1}{2} \right)^2 + k\cdot (p+\ell+1) - k - k\cdot \left( \frac{n}{2}-1 \right){} &{} >0,&& \nonumber\\
\Leftrightarrow && \left( \frac{p+\ell+1}{2} \right)^2 + k\cdot (p+\ell+1) - k\cdot \frac{n}{2} &{} >0,&&\\
\Leftrightarrow && \left( p+\ell+1 \right)^2 + 4k\cdot (p+\ell+1) - 2k n &{} >0.&&\label{equation:kaiv:lower:quadratic}
\end{align}
Using $p+\ell+1>1$ and computing the roots of \eqref{equation:kaiv:lower:quadratic} wrt.\ $p+\ell+1$ we arrive at
\begin{align*}
&&p+\ell+1 &{} > - 2k + \sqrt{4k^2+2kn}.&&\\
\Rightarrow && p+\ell &{} \geq \sqrt{4k^2+2kn} -2k.&&
\end{align*}
Thus, if the proposed instantiation is not possible using $k$ blocks of larger elements then we have
\[
p+\ell\geq \sqrt{4k^2+2kn} -2k=\sqrt{2kn}-o(\sqrt{kn}). 
\]

Similarly, if there is no feasible instantiation placing the target at $\frac{n}{2}+1+q$ in the second half of the array then we can prove that $q+r\geq \sqrt{4k^2+2kn}-2k=\sqrt{2kn}-o(\sqrt{kn})$. 
We give the calculations here for completeness.

Let us check first that the adversary can use the same number $k=\kainv-1$ of hidden blocks: His goal is let the second half, i.e., positions $\frac{n}{2}+1,\ldots,n$, contain (in order) $q$ larger elements, the target in position $\frac{n}{2}+1+q$, and larger elements interspersed by up to $k$ blocks of smaller elements. As before, elements within the group of larger/smaller elements can be assumed to be sorted, thus, adjacent inversions are only possible before the target (if preceded by a larger element), or before a smaller element (if preceded by the target or an element larger than the target). Clearly, we get an adjacent inversion at each of the $k$ blocks, a single inversion between the $q$ larger elements and the target, and no further adjacent inversions. Hence, $k=\kainv-1$ hidden blocks are a feasible choice.

If placing at $\frac{n}{2}+1+q$ is infeasible then in particular there are no $k$ blocks of unqueried elements in $\frac{n}{2}+q+2,\ldots,n$ of total size at least $q$. Again, using that the number $F'$ of unqueried elements in $\frac{n}{2}+q+2,\ldots,n$ is equal to $\frac{n}{2}-q-1-r$, but also equal to the total size of the $s'=r+1$ maximal unqueried blocks $B'_1,\ldots,B'_{s'}$ (including size-zero blocks), we get
\[
(r+1)\cdot q=s'\cdot q > k\cdot \sum_{i=1}^{s'} |B'_i| = k\cdot F'= k\cdot \left(\frac{n}{2}-q-1-r\right).
\]
This implies
\[
(r+1)\cdot q + k \cdot (q+r) > k \cdot \left(\frac{n}{2}-1\right).
\]
At this point it is obvious that we arrive at the same lower bound for $q+r$ as we had for $p+\ell$, i.e., as claimed above.

Thus, we conclude that if the adversary gets to use instantiations with $k$ hidden blocks (as above) then he can enforce a total of at least
\[
p+\ell+q+r\geq 2 \sqrt{4k^2+2kn}-4k = 2 \sqrt{2kn} - o(\sqrt{kn})
\]
queries. For the case of $k=\kainv-1$ we get a lower bound of $2\sqrt{2\kainv n}-o(\sqrt{\kainv n})$ as claimed.
\end{proof}

We now describe an algorithm that achieves the optimal number of queries (up to lower order terms).
Note that the algorithm requires knowledge of~\kainv.

\begin{theorem}\label{theorem:app:kaiv:upper}
We can find $e$ using $2\sqrt{2n\kainv}+o(\sqrt{n\kainv})$ queries if $\pos(e) = \rank(e)$.
\label{thm:UB_kainv}
\end{theorem}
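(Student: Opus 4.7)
The plan is a two-phase algorithm that matches the lower bound of Theorem~\ref{thm:LB_kainv} up to lower-order terms. Set $k := \kainv$ (known) and $\Delta := \lceil\sqrt{n/(2k)}\rceil$, so that $\lceil n/\Delta\rceil = \sqrt{2nk}(1+o(1))$. In Phase~1, scan the grid: query the positions $\Delta, 2\Delta, \ldots, \lceil n/\Delta\rceil\cdot\Delta$ in order, halting if any response is `$=$'. Let $L$ count the `$<$'-responses and set $r^\ast := L\Delta + 1$. In Phase~2, query every still-unqueried position in the window $[r^\ast - (k+1)\Delta,\, r^\ast + (k+1)\Delta]$ until $e$ is found; this contributes at most $2(k+1)\Delta + 1 = \sqrt{2nk}(1+o(1))$ queries, for a grand total of $2\sqrt{2nk}+o(\sqrt{nk})$.

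Correctness reduces to showing that the true position $r := \pos(e) = \rank(e)$ lies in the Phase-2 window. Under this hypothesis, the number of ``crossover'' elements satisfies $l' := |\{i<r:A[i]>e\}| = |\{i>r:A[i]<e\}|$, and these elements form $k_L$ maximal runs on the left of $e$ and $k_R$ runs on the right, with $k_L + k_R \leq k$ since each such run terminates with an adjacent inversion (a $(>e,<e)$, $(>e,e)$, or $(e,<e)$ pair) at a position that is distinct from all the other run-induced inversions. Letting $c$ and $b$ be the number of Phase-1 samples landing in left-, respectively right-, crossover runs, each run of length $B$ intersects the grid in between $\lfloor B/\Delta\rfloor$ and $\lceil B/\Delta\rceil$ points; summing over runs yields $|c - l'/\Delta| \leq k_L$ and $|b - l'/\Delta| \leq k_R$, and therefore $|b-c|\leq k$. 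Decomposing $L = \lfloor(r-1)/\Delta\rfloor - c + b$ and multiplying by $\Delta$ gives $|L\Delta - (r-1)| \leq (k+1)\Delta$, so $r$ lies in the Phase-2 window as required.

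The main obstacle is the crossover-inversion count $k_L + k_R \leq k$: one must verify that every maximal crossover run on either side contributes a \emph{distinct} adjacent inversion, and that this identity genuinely relies on $\pos(e) = \rank(e)$, which balances the two sides so that $l'$ is well-defined and anchors $e$ at the transition between the large-value group and the small-value group. Getting the leading constant $2$ in $2\sqrt{2nk}$ (rather than a weaker constant) rests entirely on the tight bound $|b-c|\le k$; the $\Delta$-rounding bookkeeping then only contributes to the $o(\sqrt{nk})$ error. Mild boundary effects (near positions $1$ and $n$, or when $e$ is found already during Phase~1) are handled by the usual case analysis.
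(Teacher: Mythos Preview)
Your proof is correct and achieves the claimed bound, but it takes a genuinely different route from the paper's argument. The paper also samples a grid of spacing $p=\sqrt{n/(2\kainv)}$, but it then classifies each inter-grid block by the pair of adjacent responses ($<<$, $<>$, $><$, $>>$), estimates $\rank(e)$ by $q(<)+N(<<)$ (grid-`$<$' count plus total length of $<<$-blocks), and bounds the error via the per-type inversion counts $\eta(\cdot\cdot)$. Because those $\eta$-values are unknown to the algorithm, the paper inserts an additional \emph{refinement} phase: binary-search inside each $><$-block to locate an adjacent inversion, thereby eliminating the $\eta(><)$ term and making the window endpoints computable. Your estimate $r^\ast=L\Delta+1$ is cruder---it ignores block types entirely---but your crossover-run bookkeeping ($k_L+k_R\leq\kainv$ distinct adjacent inversions, hence $|b-c|\leq\kainv$) bounds the error by $(k+1)\Delta$ using only the known value $\kainv$, so no refinement is needed. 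Both approaches land on the same leading term $2\sqrt{2n\kainv}$; yours is more elementary and strictly shorter (it saves the $\kainv\cdot\log p$ refinement queries, though those are lower-order anyway), while the paper's block-type analysis is what later gets recycled for the $\kbmov$ upper bound in Theorem~\ref{theorem:kbm:upper}.
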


\begin{proof}
For the description of our algorithm it will be convenient to take the array as having positions numbered $0,\ldots,n-1$. As before, a query for some position $i$ will yield $<$, $>$, or $=$ depending on whether $A[i]<e$, $A[i]>e$, or $A[i]=e$. The assumption that $\pos(e)=\rank(e)$ will be crucial; the algorithm will attempt to get a good estimate for $\rank(e)$ and then query a certain range around the estimated position.

The algorithm first fixes a block size of the form $p=c\cdot \sqrt{\frac{n}{\kainv}}$ for some constant~$c$ that we will fix later, and then queries positions $0, p+1, 2(p+1), \ldots, \lceil\frac{n-1}{p}\rceil(p+1),n-1$. 
We refer to these positions as the \emph{grid}. 
In this way, unqueried blocks of size (at most) $p$ remain, with every block sandwiched between two grid positions. We will refer to these blocks according to query outcomes to the adjacent grid positions: $<>$-blocks, $<<$-blocks, $><$-blocks, and $>>$-blocks. We use $\sharp(xy)$ to denote the number of $xy$-blocks for $x,y\in\{<,>\}$. We assume that the target $e$ is not at a grid position, since otherwise the claimed bound holds trivially.

Intuitively, since there are at most~\kainv adjacent inversions, there can only be a limited number of $>>$-blocks containing elements smaller than $e$ and of $<<$-blocks containing larger elements: Either constellation leads to at least one adjacent inversion inside the block. Moreover, every $><$-block must contain an adjacent inversion, which upper bounds their number by $\kainv$ (we will give a better bound later). The number of $<>$-blocks is at most equal to the number of $><$-blocks plus one, because there must be an $><$-block somewhere between any two $<>$-blocks.

The algorithm now tries to estimate the position of $e$ in the array. Crucially, because $\pos(e) = \rank(e)$, there must be exactly $\pos(e)$ elements smaller than~$e$ in the array. We denote by $q(x)$ the number of queries to grid positions that returned $x$, for $x\in\{<,>,=\}$. We further denote by $\eta(xy)$, for $x,y\in\{<,>\}$, the number of adjacent inversions involving an element of a $xy$-blocks, including adjacent inversions between an element of the block and an adjacent grid position. We denote by $N(xy)$, for $x,y\in\{<,>\}$, the total number of positions in $xy$-blocks (not counting the queries adjacent to the block).

Observe that~$\pos(e)$ is equal to the number of elements smaller than~$e$ in the array.
We refer to these elements simply as \emph{small} elements, as opposed to \emph{large} elements that are larger than~$e$.
We want to bound the number of small elements in order to obtain a range of positions that our algorithm needs to search.
By definition, we have~$q(<)$ small elements in grid positions.
We now give upper and lower bounds for the number of small elements in each type of block in terms of $\sharp(xy)$ and $\eta(xy)$. It can be observed that upper and lower bound are attained by blocks not containing $e$, hence we will tacitly ignore this case.

\begin{description}
	
 \item[$<<$-blocks:] The maximum number~$N(<<)$ of smaller elements in $<<$-blocks is attained if there are no adjacent inversions. If there is at least one adjacent inversion in a $<<$-block, it can have anywhere between $0$ and block size (at most $p$) small elements: If the large elements form a single block then there is exactly one adjacent inversion between the last element of the block and the succeeding element (possibly a grid position). Overall, the number of small elements in $<<$-blocks is between $N(<<)-\eta(<<)\cdot p$ and $N(<<)$.

 \item[$>>$-blocks:] The minimum number of small elements in $>>$-blocks is attained if there are no adjacent inversions; in this case there are no smaller elements in these blocks. A $>>$-block with at least one adjacent inversion can contain any number between $0$ and the block size (at most $p$) of small elements: A consecutive block of small elements causes a single adjacent inversion at its start. Overall, the number of small elements in $>>$-blocks is between $0$ and $\eta(>>)\cdot p$.
 
 \item[$><$-blocks:] Every $><$-block contains at least one adjacent inversion. In such a block, even without any further adjacent inversions, we can have between $0$ and the block size (at most $p$) smaller elements: The block may contain all large elements followed by all small ones, and have a single adjacent inversion between the last large and first small element. Overall, the number of small elements in $><$-blocks is between $0$ and $\sharp(><)\cdot p$. Since there is at least one adjacent inversion per $><$-block we have $\sharp(><)\leq\eta(><)$ and the upper bound becomes $\eta(><)\cdot p$.
 
 \item[$<>$-blocks:] In $<>$-blocks we can have any number of small elements followed by large ones (with the total being at most the block size $p$) without any adjacent inversions. We noted already that the number of these blocks is at most $\sharp(><)+1$. Overall, the number of small elements in $<>$-blocks is between $0$ and $\sharp(<>)\cdot p\leq(\sharp(><)+1)\cdot p\leq(\eta(><)+1)\cdot p$.
 
\end{description}

In total, we get that there are at least
\[
q(<) + N(<<) - \eta(<<)\cdot p
\]
and at most
\[
q(<) + N(<<) + \eta(>>)\cdot p + \eta(><)\cdot p + (\eta(><)+1)\cdot p
\]
elements that are smaller than~$e$.

Since~$\pos(e)$ is equal to the number of small elements in the array, the gap (plus one) between these two bounds is the number of positions that the algorithm has to query in order to find $e$ or be sure that it is not present. The gap (difference) is upper-bounded by
\begin{align*}
& \left( q(<) + N(<<) + \eta(>>)\cdot p + \eta(><)\cdot p + (\eta(><)+1)\cdot p \right)\\
& - \left( q(<) + N(<<) - \eta(<<)\cdot p \right) + 1\\
={} & \eta(>>)\cdot p + \eta(><)\cdot p + (\eta(><)+1)\cdot p + \eta(<<)\cdot p +1\\
={} & \eta(>>)\cdot p + \eta(<<)\cdot p + 2\eta(><)\cdot p + p +1.
\end{align*}

Since $\eta(>>)+\eta(<<)+\eta(><)+\eta(<>)\leq \kainv$ and since all values are non-negative, this expression is maximized for $\eta(><) = \kainv$ and $\eta(<>) = \eta(<<) = \eta(>>) = 0$, i.e., if there are no adjacent inversions inside $<<$-blocks, $>>$-blocks, and $<>$-blocks. 
We get a range of at most $(2\kainv+1)\cdot p + 1$ positions to search, which coincides with the claim of the theorem, since~$p = c\sqrt{n/\kainv}$.
Unfortunately, while the algorithm knows~$\kainv$ and can thus compute the \emph{number of elements} in the range it has to search, it does not know exactly \emph{where the range starts or ends} because it does not have access to the values of~$\eta$ for each block and cannot compute the value of the upper or lower bound. 

Since the search range is maximized when all inversions fall in $><$-blocks it makes sense to refine our initial grid in order to get rid of $><$-blocks altogether.
We can do this by running a binary search on each $><$-block to find an adjacent inversion in at most~$1+\log p$ queries.
To do this, we simply query the center element and recurse on the left subblock if it is small and on the right subblock if it is large, until we are left with a subblock containing a large element followed by a small one. 
By extending our initial grid by the additional query positions, we replace each $><$-block by some number of $>>$-blocks, an empty $><$-block (the adjacent inversion), and some number of $<<$-blocks.
We update the values of $\sharp$, $N$, $\eta$ and $q$ accordingly.

Overall, our algorithm spends at most $\sharp(><)\cdot(1+\log p)$ queries on the refinement of the grid, resulting in all $><$-blocks being empty. 
Thus, we get tighter bounds for the candidate range for~$\pos(e)$ by setting the block size of $><$-blocks to 0 (instead of $p$), which eliminates the term $\eta(><)\cdot p$.
Note that while we introduced additional $<<$-blocks and $>>$-blocks and may thus have increased~$\eta(<<)$ and~$\eta(>>)$, the additional blocks contain fewer than~$p$ elements.
Nevertheless, we can use the generous bound of~$p$ for the size of all blocks, since the block size only appears negatively in the lower bound and positively in the upper bound. 
We retain the lower bound of
\[
q(<) + N(<<) - \eta(<<)\cdot p
\]
and get a new upper bound of
\[
q(<) + N(<<) + \eta(>>)\cdot p + (\eta(><)+1)\cdot p,
\]

Using that $-\kainv \leq - \eta(<<) \leq 0$ and $0 \leq \eta(>>) + \eta(><) \leq \kainv$ we conclude that the number of smaller elements is lower bounded by
\[
q(<) + N(<<) - \kainv \cdot p 
\]
and upper bounded by
\[
q(<) + N(<<) + \kainv \cdot p + p.
\]

Since the bounds depend only on values that are known to the algorithm, it can simply query all positions in this range. Since $\pos(e)=\rank(e)$, if the target $e$ is contained in the array then it must be in the position that equals the number of smaller elements. Thus, it suffices to query the $2\kainv p + p + 1$ elements between the above bounds. 
Note that small savings are possible here because some of the positions have been queried previously, but we will not analyze this.

Thus, the overall number of queries needed to establish the initial grid setup, for its refinement, and for the final sweep of the candidate range for~$\pos(e)$ is at most
\begin{align}
\frac{n}{p+1} + 2 + \sharp(><)\cdot(1+\log p) + 2\kainv\cdot p + p + 1 = \frac{n}{p+1} + 2\kainv\cdot p + o(\kainv p),\label{term:adjinv:costupperbound}
\end{align}

where $\frac{n}{p+1} +2$ upper bounds the number queries needed to establish the initial grid that partitions the array into unqueried blocks of size at most $p$ each. 
Rounding up slightly, we are left with choosing $c$ in $p=c\cdot \sqrt{\frac{n}{\kainv}}$ in order to minimize
\begin{align}
\frac{n}{p+1} + 2\kainv\cdot p\leq \frac{n}{p} + 2\kainv\cdot p. \label{term:adjinv:simplifiedcost}
\end{align}

In other words, after plugging in $p=c\cdot \sqrt{\frac{n}{\kainv}}$, we seek $c$ that minimizes
\[
\frac{n}{c\cdot \sqrt{\frac{n}{\kainv}}} + 2\kainv\cdot c\cdot \sqrt{\frac{n}{\kainv}}=\frac{1}{c}{\sqrt{n\kainv}} + 2 c\cdot \sqrt{n\kainv}=\left(\frac{1}{c}+2c \right) \cdot \sqrt{n\kainv}.
\]
Choosing $c=\frac{1}{\sqrt{2}}$ yields $(\frac{1}{c}+2c)=2\sqrt{2}$ and the claimed bound of $2\sqrt{2n\kainv}+o(\sqrt{n\kainv})$ queries for finding the target $e$.
\end{proof}


\subsection{Edit distances}\label{sub:edit_distances}

In this section, we consider parameters that bound the number of elementary array modifications needed to sort the given array~$A$.
More precisely, a \emph{replacement} is the operation of replacing one element with a new element, and we let \krep be the (minimum) number of replacements needed to obtain a sorted array. 
A \emph{swap} is the exchange of the content of two array positions, and \kswap is the number of swaps needed to sort~$A$.
We let \kaswap be the number of swaps of pairs of neighboring elements needed to sort~$A$.
A \emph{move} is the operation of removing an element and re-inserting it after a given position~$i$, shifting all elements between old and new position by one.
We let \kmov be the number of moves needed to sort~$A$.

Clearly, starting from a sorted array, we can move~$e$ to any position, without using more than a single move or swap, or two replacements involving~$e$.
To find~$e$ we then have to query the entire array.

\begin{proposition}
For $\kmov\geq1$, $\kswap\geq1$, or $\krep\geq2$, no algorithm can find $e$ with less than $n$ queries in general.
\end{proposition}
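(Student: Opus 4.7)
The plan is a direct adversary argument that builds on the informal observation already given in the paper: from a sorted array, a single move, a single swap, or two replacements suffice to place $e$ anywhere or even to remove $e$ entirely. Without loss of generality assume $e$ is strictly larger than every other element of the universe (the opposite case is symmetric), and let $A^\star$ be a sorted array of length $n$ with $e$ at position $n$ and $n-1$ strictly smaller elements at positions $1,\dots,n-1$. For each $j\in\{1,\dots,n-1\}$ I construct an array $A_j$ from $A^\star$ by applying the cheap operation appropriate to each case: one move of $e$ from position $n$ to position $j$ gives $\kmov(A_j)\leq 1$; one swap of positions $j$ and $n$ gives $\kswap(A_j)\leq 1$; and two replacements (overwrite $A^\star[n]$ with a smaller element and $A^\star[j]$ with $e$) give $\krep(A_j)\leq 2$. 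In every case $A_j$ holds $e$ in position $j$ and an element strictly smaller than $e$ in every other position. I also include one further candidate $A_0$, a sorted array of $n$ elements all strictly smaller than $e$, for which $e\notin A_0$ and all three parameters equal zero.

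Next, I let the adversary answer every query by `$<$'. This response is consistent with $A_0$, and with each $A_j$ whose distinguished position $j$ has not yet been queried. Suppose the algorithm terminates after strictly fewer than $n$ queries; then there is an unqueried index $j^\star\in\{1,\dots,n-1\}$ and both $A_0$ and $A_{j^\star}$ remain in the adversary's candidate set. If the algorithm outputs $\pos(e)=j^\star$, the adversary commits to $A_0$, in which $e\notin A$, so the answer is wrong; if the algorithm outputs any other position or declares $e\notin A$, the adversary commits to $A_{j^\star}$, in which $\pos(e)=j^\star$, so again the answer is wrong. Correctness therefore forces at least $n$ queries.

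I do not foresee any real obstacle. The only points that need care are (i) verifying that each of the three constructions genuinely meets its parameter bound — note in particular that for $\krep$ we need two replacements, because a single replacement can erase $e$ from position $n$ but cannot also put $e$ at position $j$ — and (ii) including the extra candidate $A_0$, which is what prevents the algorithm from stopping after $n-1$ queries and deducing the remaining position by elimination; this is precisely the situation alluded to in the paper's remark that without a guarantee $e\in A$ one has to scan the whole array.
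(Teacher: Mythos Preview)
Your approach is correct and is essentially a careful formalization of the paper's one-line argument (move $e$ to an arbitrary position with one move/swap or two replacements; the adversary answers `$<$' everywhere). There is, however, one small gap in the write-up that you should repair.

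You let $j$ range only over $\{1,\dots,n-1\}$ and then assert that after fewer than $n$ queries ``there is an unqueried index $j^\star\in\{1,\dots,n-1\}$''. That is not guaranteed: if the algorithm happens to query exactly the positions $1,\dots,n-1$, the sole unqueried index is $n$, and then every $A_j$ with $j\le n-1$ is already eliminated (each has $e$ at a queried position). From your family only $A_0$ survives, so the algorithm may correctly answer ``$e\notin A$'' after $n-1$ queries. The fix is immediate: add $A_n:=A^\star$ to the family (zero operations, so all parameter bounds are met trivially), and let $j^\star$ range over $\{1,\dots,n\}$. Then in the boundary case $j^\star=n$, both $A_0$ and $A_n$ remain consistent and the dichotomy works as before. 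With this one-line adjustment the argument is complete.
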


We can obtain significantly improved bounds if the element~$e$ remains at its correct position relative to the sorted array.
Recall that we can interpret~\kfaults as a measure of disorder via $\kfaults(e) = |{i: (i<\pos(e) \wedge A[i] > e) \vee (i>\pos(e) \wedge A[i] < e)}|$.

\begin{lemma}
If $\mathrm{rank}(e)=\mathrm{pos}(e)$, then $\kfaults(e)\leq \min\{2\krep, 4\kswap, 2\kmov\}$.
\end{lemma}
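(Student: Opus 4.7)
The plan is to first observe a symmetry coming from $\pos(e)=\rank(e)$: exactly $\pos(e)-1$ elements of $A$ are smaller than $e$, and splitting these according to whether they lie before or after position $\pos(e)$ gives
\[
L:=|\{i<\pos(e):A[i]>e\}|=|\{i>\pos(e):A[i]<e\}|=:R,
\]
so that $\kfaults(e)=L+R=2L=2R$. With this in hand, each of the three inequalities reduces to bounding $\min\{L,R\}$.

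For $\kfaults(e)\leq 2\krep$, we would fix a sorted array $A^\star$ obtainable from $A$ by $\krep$ replacements, so $A$ and $A^\star$ differ on at most $\krep$ positions, and case-split on $v:=A^\star[\pos(e)]$. If $v\geq e$, then sortedness gives $A^\star[i]\geq e$ for every $i>\pos(e)$, so each right fault $i$ satisfies $A[i]<e\leq A^\star[i]$ and hence contributes to the Hamming distance; this yields $R\leq\krep$ and therefore $\kfaults(e)=2R\leq 2\krep$. The case $v<e$ is symmetric and bounds $L$ instead.

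For $\kfaults(e)\leq 2\kmov$, the crucial observation is that in any sequence of $\kmov$ moves sorting $A$, the elements that are never picked up retain their relative order throughout (each move only shifts a contiguous block of other elements) and are sorted at the end, hence form an increasing subsequence $I$ of $A$ of length at least $n-\kmov$. We would then show that $I$ cannot contain a left fault and a right fault simultaneously: if $A[p]>e$ with $p<\pos(e)$ and $A[q]<e$ with $q>\pos(e)$ were both in $I$, the relation $p<q$ together with monotonicity of $I$ would force $A[p]<A[q]$, contradicting $A[p]>e>A[q]$; using $e$ itself as the companion element from $I$ gives the analogous contradiction in the case $e\in I$, so $I$ then contains no fault at all. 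Either way, $\min\{L,R\}\leq n-|I|\leq\kmov$, which combined with $L=R$ gives $\kfaults(e)\leq 2\kmov$. This step is the main obstacle, because it requires translating the dynamic notion of moves, whose individual cost can cascade across many positions, into a clean static statement about increasing subsequences of $A$.

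For $\kfaults(e)\leq 4\kswap$, each swap alters the content of at most two positions, so any sorting sequence of $\kswap$ swaps changes at most $2\kswap$ entries of $A$; performing the same sequence entrywise as replacements shows $\krep\leq 2\kswap$, and chaining with the replacement bound above yields $\kfaults(e)\leq 2\krep\leq 4\kswap$.
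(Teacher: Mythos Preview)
Your proof is correct. Both you and the paper begin with the same key observation that $\rank(e)=\pos(e)$ forces $L=R$ (the paper calls this common value~$m$), so $\kfaults(e)=2L=2R$.

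From there the approaches diverge. The paper treats all three parameters uniformly: it passes to a three-valued array $A'$ (replacing every element by $e_<$ or $e_>$ according to its relation to $e$), argues that this does not increase $\krep$, $\kswap$, or $\kmov$, and then observes that the $m$ large-on-left elements can be matched one-to-one with the $m$ small-on-right elements into $m$ disjoint inverted pairs. Each replacement or move repairs at most one such pair, and each swap at most two, yielding the three bounds directly.

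You instead give three separate arguments. For $\krep$ you avoid the auxiliary array entirely by case-splitting on $A^\star[\pos(e)]$; this is clean and sidesteps the pairing combinatorics. For $\kmov$ you essentially rederive the direction $\kseq\leq\kmov$ (unmoved elements form an increasing subsequence) and then argue that such a subsequence cannot contain both a left and a right fault; this is a nice structural argument, though the paper's disjoint-pairs reasoning gets there more quickly once $A'$ is in place. For $\kswap$ you reduce to the replacement case via $\krep\leq 2\kswap$, which the paper proves separately (Proposition~\ref{prop:krep_kswap}) but does not invoke here.

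The paper's route is more unified and self-contained; yours is more direct for $\krep$ and leverages (or reproves) known parameter relations for the other two.
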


\begin{proof}
Consider an array $A$ with $\krep=k$. 
We get a modified array $A'$ with $\krep=k'$ by switching out all elements smaller than $e$ in $A$ with a common element $e_{<}<e$ and all larger elements by $e_{>}>e$.
Let $r_{j}=(i_{j},e_{j}),j\in\{1,\dots,k\}$ be $k$ replacements that transform $A$ into a sorted array. We define $r_{j}'=(i_{j},0)$ if $e_{j}<e$ and $r_{j}'=(i_{j},2e)$ otherwise. 
Clearly, the replacements $r_{j}'=(i_{j},e_{j}),j\in\{1,\dots,k\}$ transform $A'$ into a sorted array, and, hence, $k'\leq k$. 
Let $m$ be the number of entries left of $e$ that contain $2e$. 
Since $\mathrm{rank}(e)=\mathrm{pos}(e)$, we have that $m$ is also the number of entries equal $0$ right of $e$. 
It is clear that $k'\geq m$, since we have $m$ disjoint pairs of elements in the wrong relative order that need to be repaired by replacing at least one of the two. 
On the other hand, in both $A$ and $A'$, we have $\kfaults(e)=2m\leq2k'\leq2k$.

Now assume $A$ has $\kswap=k$ and let $s_{j}=(i_{j},i_{j}'),j\in\{1,\dots k\}$
be the $k$ swaps (given by the indices of the swapped elements) that
transform $A$ into a sorted array. Clearly, the same sequence of
swaps turns $A'$ into a sorted array, and, hence, $k'\leq k$, where
$A'$ has $\kswap=k'$. As before, we have $m$ disjoint pairs
of elements in the wrong relative order that need to be repaired by
switching at least one of the two elements. Each switch can repair
two such pairs, and we thus have $k'\geq m/2$. In both $A$ and $A'$,
we have $\kfaults(e)=2m\leq4k'\leq4k$.

Similarly, the transformation from~$A$ to~$A'$ does not increase~$\kmov$.
Again, we have~$m$ disjoint pairs of elements in the wrong relative order that need to be repaired by moving at least one of the two.
Hence, $\kfaults(e)=2m\leq2\kmov$.
\end{proof}

With this lemma, we can translate the upper bounds of any algorithm for~\klies.
Before we do, we introduce another measure of disorder, that turns out to be closely related to \kmov.
We define the parameter \kseq to be such that $n-\kseq$ is the length of a longest nondecreasing subsequence in~$A$.
It turns out that~$\kmov = \kseq$ (cf.~Proposition~\ref{prop:kseq_kmov}), and we can thus include this parameter in our upper bound.

\begin{theorem}
Let $f\colon\mathbb{N}^{2}\to\mathbb{N}$. If $\mathrm{rank}(e)=\mathrm{pos}(e)$,
and we can find $e$ with $f(n,\klies)$ queries, then we can find
$e$ obliviously with $\min\{f(n,2\krep),f(n,4\kswap), f(n,2\kmov), f(n,2\kseq)\}$ queries.
\label{thm:UB_krep}
\label{thm:UB_kswap}
\label{thm:UB_kmov}
\label{thm:UB_kseq}
\end{theorem}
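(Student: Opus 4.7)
My plan is to combine three ingredients that have already been set up in the paper, so the proof will be essentially a bookkeeping argument. First, I would apply Theorem~\ref{thm:UB_kfaults} to the assumed $f(n,\klies)$-query algorithm to transfer its guarantee to the parameter \kfaults, obtaining an oblivious algorithm that locates $e$ using $f(n,\kfaults)$ queries. Obliviousness is preserved because the reduction in Theorem~\ref{thm:UB_kfaults} simply redirects each repeated query to the next unqueried neighbour and requires no knowledge of the parameter.

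Next, I would reinterpret \kfaults as the measure of disorder $\kfaults(e)$ on the given unsorted array, using the equivalence of the adversarial and disorder interpretations spelled out at the beginning of Section~\ref{sec:imprecise_queries}. The hypothesis $\mathrm{rank}(e)=\mathrm{pos}(e)$ of the theorem is exactly the hypothesis needed to invoke the preceding lemma, which yields $\kfaults(e) \leq \min\{2\krep,\,4\kswap,\,2\kmov\}$. To also cover \kseq, I would appeal to Proposition~\ref{prop:kseq_kmov} (proved in the appendix), which states $\kmov=\kseq$; together with the lemma this gives $\kfaults(e)\leq 2\kseq$ as well.

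Under the mild convention (satisfied by every upper bound in the paper) that $f$ is monotonically non-decreasing in its second argument, we may substitute these four upper bounds for $\kfaults$ into $f(n,\kfaults)$ and take the minimum, yielding the claimed query complexity $\min\{f(n,2\krep),\,f(n,4\kswap),\,f(n,2\kmov),\,f(n,2\kseq)\}$. The resulting algorithm remains oblivious to each of the four parameters, because none of the three intermediate reductions (Theorem~\ref{thm:UB_kfaults}, the preceding lemma, and Proposition~\ref{prop:kseq_kmov}) needs to know a parameter value in order to simulate.

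I do not anticipate any substantive obstacle: the entire content lives in the preceding lemma and in Theorem~\ref{thm:UB_kfaults}. The one point that merits a sentence of care is confirming that the disorder-style bound on $\kfaults(e)$ supplied by the lemma can legitimately be fed into the adversarial-style guarantee produced by Theorem~\ref{thm:UB_kfaults}; this is immediate from the equivalence of the two formulations of \kfaults noted at the start of Section~\ref{sec:imprecise_queries}, where any inversion involving $e$ can be treated exactly as a faulty index by the adversary.
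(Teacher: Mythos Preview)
Your proposal is correct and matches the paper's own argument, which is given only in a single sentence (``With this lemma, we can translate the upper bounds of any algorithm for~\klies'') together with the remark that $\kmov=\kseq$. You have simply made explicit the two steps the paper leaves implicit: the passage from \klies to \kfaults via Theorem~\ref{thm:UB_kfaults}, and the monotonicity of $f$ needed to substitute the bounds from the preceding lemma.
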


We can also carry over the lower bound from Corollary~\ref{cor:LB_kfaults}.

\begin{corollary}
	For every $c\in\mathbb{N}$ and~$\pos(e) = \rank(e)$, no algorithm operating on the search tree can find $e$ with less than $\log n+ck$ queries in general, for~$k\in\{\krep, \kswap, \kmov, \kseq\}$.\footnoteref{foot:trivial_bound}
	\label{cor:LB_krep}
	\label{cor:LB_kswap}
	\label{cor:LB_kmov}
	\label{cor:LB_kseq}
\end{corollary}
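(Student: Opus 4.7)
The plan is to carry over the adversarial construction from Corollary~\ref{cor:LB_kfaults} (which is based on Theorem~\ref{thm:LB_klies}) into the edit-distance setting, by realizing the \kfaults-adversary's responses as an explicit array with $\pos(e) = \rank(e)$ and small edit-distance parameters.

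First, I would run the \kfaults-adversary with parameter $k_f$, which forces at least $\log n + c' k_f$ queries on any search-tree algorithm for every constant $c' \in \mathbb{N}$. At the conclusion of the interaction, the adversary is free to commit to values at the unqueried positions and to choose $\pos(e)$ within a flexible range of a subtree. I would exploit this freedom in two ways: (i)~fill unqueried positions with values consistent with a sorted skeleton, so that only the queried positions whose responses are inconsistent with the committed $\pos(e)$ contribute to disorder; and (ii)~symmetrize any imbalance between ``large elements on the left of $e$'' and ``small elements on the right of $e$'' by placing additional balancing elements at suitable unqueried positions on the opposite side. Since the \kfaults-adversary produces at most $k_f$ faulty positions and the algorithm queries only $O(\log n + k_f)$ of the $n$ positions, ample unqueried positions remain; this yields an explicit instance with $\pos(e) = \rank(e)$ and $\kfaults(e) \leq 2k_f$ that is consistent with the entire query transcript.

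Next, I would bound the edit-distance parameters on such an instance. Since the non-fault positions form a sorted skeleton and the faults come in matched left/right pairs after balancing, the instance can be sorted either by replacing each of its $\leq 2k_f$ faulty elements with an appropriate value, by swapping each left-fault with its paired right-fault, or by moving each faulty element to its rank-based position. Hence $\krep \leq 2k_f$, $\kswap \leq k_f$, $\kmov \leq 2k_f$, and $\kseq = \kmov \leq 2k_f$ via Proposition~\ref{prop:kseq_kmov}.

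Finally, I would reconcile the constants: given any target $c \in \mathbb{N}$ and parameter $k \in \{\krep, \kmov, \kseq\}$, set $k_f = \lceil k/2 \rceil$ and $c' = 2c$; for $k = \kswap$, set $k_f = k$ and $c' = c$. In both cases the parameter on the constructed instance is at most $k$ and the number of forced queries is at least $\log n + c' k_f \geq \log n + ck$, as claimed. The main obstacle is verifying that the adversary's commitment (including the balancing step) remains consistent with the transcript throughout; this reduces to the observation that only $O(\log n + k_f)$ positions have been queried, leaving more than enough room for the $\leq k_f$ balancing elements on each side of $\pos(e)$.
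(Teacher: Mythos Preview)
Your proposal is correct and follows essentially the same approach as the paper: transfer the $\kfaults$ lower bound (Corollary~\ref{cor:LB_kfaults}) by arguing that the adversary's committed instance has edit-distance parameters bounded by $\kfaults$, and absorb constant factors via the ``for every $c$'' quantifier. The paper's proof simply observes that, under $\pos(e)=\rank(e)$, the faulty positions pair up (one large-left with one small-right), so the array can be produced from a sorted one by $\kfaults/2$ swaps or $\kfaults$ moves/replacements, giving $\kfaults\geq\max\{\krep,\kswap,\kmov\}$; it then invokes Corollary~\ref{cor:LB_kfaults} and $\kmov=\kseq$.

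The only substantive difference is that you make explicit a step the paper leaves implicit: the original $\klies$/$\kfaults$ adversary of Theorem~\ref{thm:LB_klies} places all lies on one side of $e$, so the committed array does not automatically satisfy $\pos(e)=\rank(e)$; you fix this by inserting balancing faults at unqueried positions on the opposite side, at most doubling $\kfaults$. This is a genuine detail that the paper's two-line proof skips, and your handling of it (including the observation that only $O(\log n + k_f)$ positions are queried, leaving room for the $\leq k_f$ balancing elements) is sound. Two minor points: your claim that $\pos(e)$ can be chosen ``within a flexible range'' is not quite accurate---$\pos(e)$ is determined by the adversary's phase-by-phase strategy---but you do not actually use this freedom, only the freedom at unqueried positions, which suffices; and $2\lceil k/2\rceil$ may equal $k+1$ for odd $k$, but this is harmless since the statement is ``for every $c$'' and asymptotic in $n$.
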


\begin{proof}
Because of~$\pos(e) = \rank(e)$, we group all indices in~$A$ with at least one wrong query into~$m$ disjoint pairs with one element left of~$e$ and one element right of~$e$ in each pair.
Starting from a sorted array, we can clearly produce the array~$A$ by swapping each pair.
This requires~$\kfaults/2$ swaps, or $\kfaults$ element moves or replacements.
We get~$\kfaults \geq \max\{\krep, \kswap, \kmov\}$.
Corollary~\ref{cor:LB_kfaults} together with~$\kmov = \kseq$ (Proposition~\ref{prop:kseq_kmov}) thus implies the claim.
\end{proof}

Finally, we immediately obtain bounds for \kaswap from Corollary~\ref{cor:LB_kinv}, because $\kaswap = \kinv$ (cf.~Proposition~\ref{prop:kaswap_kinv}).

\begin{corollary}
	Every search algorithm needs at least $\log n/\kaswap + 2\kaswap + \Oh(1)$ queries\footnoteref{foot:trivial_bound}, and we can find~$e$ obliviously with~$\log n/\kaswap + 4\kaswap + \Oh(1)$ queries.
	\label{cor:LB_kaswap}
	\label{cor:UB_kaswap}
\end{corollary}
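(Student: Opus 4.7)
The plan is to reduce the corollary directly to the results already established for the parameter \kinv. The excerpt has already announced Proposition~\ref{prop:kaswap_kinv}, which states that $\kaswap = \kinv$, and Corollary~\ref{cor:LB_kinv}/\ref{cor:UB_kinv}, which give a lower bound of $\log n/\kinv + 2\kinv + \Oh(1)$ and an oblivious upper bound of $\log n/\kinv + 4\kinv + \Oh(1)$. Once the equality $\kaswap = \kinv$ is in place, both bounds transfer literally by substitution, so there is nothing further to prove algorithmically.

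Concretely, I would first recall why $\kaswap = \kinv$ (this is the content of Proposition~\ref{prop:kaswap_kinv}, which we may invoke): every swap of two adjacent entries changes the number of inversions $\kinv$ by exactly $\pm 1$, so sorting the array requires at least $\kinv$ adjacent swaps, giving $\kaswap \geq \kinv$; conversely, bubble sort resolves inversions one at a time using exactly $\kinv$ adjacent swaps, showing $\kaswap \leq \kinv$. Hence the two parameters coincide on every array.

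With this identity, the lower bound of Corollary~\ref{cor:LB_kinv} immediately reads as $\log n/\kaswap + 2\kaswap + \Oh(1)$, and the oblivious algorithm of Corollary~\ref{cor:UB_kinv} achieves $\log n/\kaswap + 4\kaswap + \Oh(1)$ queries on the same input. Since the algorithm for \kinv is oblivious to the value of the parameter, obliviousness with respect to \kaswap is automatic.

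The main (and only) obstacle is that this reduction depends on the equality $\kaswap = \kinv$, which is stated as Proposition~\ref{prop:kaswap_kinv} in the appendix; provided one is willing to cite that proposition, the corollary is a one-line consequence of the \kinv bounds and requires no new argument.
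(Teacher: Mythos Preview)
Your proposal is correct and matches the paper's approach exactly: the paper derives the corollary in one line by citing $\kaswap = \kinv$ (Proposition~\ref{prop:kaswap_kinv}) and invoking Corollary~\ref{cor:LB_kinv}. Your additional sketch of why $\kaswap = \kinv$ is also the argument given in the appendix.
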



\subsection{Block edit distances}\label{sub:block_distances}

In this section we consider the parameters $\kbswap$, $\krbswap$, and $\kbmov$, which bound the number of block edit operations needed to sort~$A$. A \emph{block} is defined to be a subarray~$A[i, i+1, \dots,j]$ of consecutive elements. A \emph{block swap} is the operation of exchanging a subarray~$A[i,\dots,j]$ with a subarray~$A[i',\dots,j']$ and vice versa, where~$i < j < i' < j'$. Note that a block swap may affect the positions of other elements in case that the two blocks are of different sizes. The parameter~$\kbswap$ bounds the number of block swaps needed to sort~$A$. For~$\krbswap$ we only allow block swaps restricted to pairs of blocks of equal sizes. Finally, for~$\kbmov$ one of the two blocks must be empty, i.e., only block moves are allowed. For all three parameters one can easily prove that, without further restrictions, search algorithms need to query all positions of an array to find the target.

\begin{proposition}
For $\kbswap \geq 1$, $\krbswap \geq 1$, or~$\kbmov \geq 1$, no algorithm can find $e$ with less than $n$ queries.
\end{proposition}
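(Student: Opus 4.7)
The plan is to mirror the adversary argument used in the proposition for $\kainv \geq 1$, showing that a single block operation already suffices to erase essentially all order information exploitable by the algorithm. Fix $e = n$ and start from the sorted array $[1, 2, \ldots, n]$. For each position $j \in \{1, \ldots, n\}$, I construct an array $A_j$ in which $e$ sits in position $j$ and every other entry is strictly smaller than $e$. For $\kbmov$ take $A_j$ to be the array obtained by moving the singleton block $\{n\}$ from position $n$ to position $j$; for $\kbswap$ and $\krbswap$ take $A_j$ to be the array obtained by swapping the singleton blocks at positions $j$ and $n$. The latter swap is admissible for $\krbswap$ as well since the two swapped blocks have equal size, and in every case exactly one block operation has been applied to the sorted array, so the relevant parameter is at most $1$ on each $A_j$.

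The adversary then answers $<$ to every query. Since every entry other than $e$ is strictly smaller than $e$, a query to position $i$ is consistent with every $A_j$ for which $j \neq i$, and therefore eliminates at most a single candidate from the family. Following the same reasoning as in the $\kainv$ argument, as long as fewer than $n$ queries have been posed, the adversary can maintain that $e$ could still lie in any unqueried position, so the algorithm cannot uniquely identify $\pos(e)$, forcing all $n$ positions to be queried in the worst case.

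The only point one really has to check is that each of the three constructions indeed fits within its single-operation budget, which is immediate from the definitions: a move of a size-$1$ block is a valid block move, and a swap of two size-$1$ blocks is simultaneously a valid block swap and a valid restricted block swap. Thus the same family and adversary strategy deliver the lower bound for all three parameters in one shot, and the main (minor) obstacle is simply spelling out the three essentially identical constructions cleanly enough that the $\kainv$-style adversary argument can be reused verbatim.
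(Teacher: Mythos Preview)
Your proposal is correct and follows essentially the same adversary argument as the paper: both fix $e=n$, build a family of arrays obtained from $[1,\ldots,n]$ by a single block operation that places $n$ at an arbitrary position, and have the adversary answer $<$ to maintain consistency with every unqueried placement. The only cosmetic difference is that the paper uses the move construction for both $\kbmov$ and $\kbswap$ and the single-element swap for $\krbswap$, whereas you use the move for $\kbmov$ and the single-element swap for both $\kbswap$ and $\krbswap$; either grouping works.
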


\begin{proof}
For $\kbmov$ and $\kbswap$ consider the family of arrays obtained from $[1,\ldots,n]$ by moving~$n$ to an arbitrary position; arrays of this form have $\kbmov=\kbswap\leq 1$. An adversary may answer the first $n-1$ queries by $<$ while maintaining that placing $e=n$ in any of the unqueried positions is consistent with all given answers. 

For $\krbswap$ consider the family of arrays obtained from $[1,\ldots,n]$ by swapping $n$ with any element (or keeping it in place; such arrays have $\krbswap\leq 1$. An adversary may again answer the first $n-1$ queries by $<$ while maintaining that placing $e=n$ in any unqueried position is consistent with the given answers.
\end{proof}

Complementing this lower bound, for all of three parameters, an upper bound of $\Oh(\sqrt{nk})$ for finding $e$ when $\pos(e)=\rank(e)$ follows immediately from the results for $\kainv$ of Section~\ref{sub:inversions} and the fact that $\kainv\leq 2 \kbswap$ (Proposition~\ref{prop:kainv_kbswap}) and $\kbswap \leq \min\{\krbswap, \kbmov\}$ (Propositions~\ref{prop:kbswap_krbswap} and \ref{prop:kbswap_kbmov}). By inspecting the upper and lower bounds proved for $\kainv$, and adapting the proofs, we are able to obtain tight leading constants in the upper and lower bounds for $\kbmov$ and $\kbswap$, and leading constants within a factor of $\sqrt{2}$ for $\krbswap$. First, we adapt the lower bound for $\kainv$ (Theorem~\ref{theorem:app:kaiv:lower}) to $\krbswap$ and $\kbswap$.

\begin{theorem}\label{theorem:kbsr:lower}
Every search algorithm needs at least $2\sqrt{2n\krbswap}-o(\sqrt{n\krbswap})$ queries to find $e$, even if $\pos(e)=\rank(e)$.\footnoteref{foot:trivial_bound}
\label{thm:LB_krbswap}
\end{theorem}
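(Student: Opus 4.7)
The plan is to reuse the adversarial strategy from the proof of Theorem~\ref{theorem:app:kaiv:lower} essentially verbatim, and only replace the step that bounds the number of adjacent inversions of the committed instantiation by an analogous bound on the number of restricted block swaps needed to sort it. Setting $k := \krbswap$ in the same calculation as there then yields the claimed lower bound of $2\sqrt{2n\krbswap}-o(\sqrt{n\krbswap})$ queries.

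Concretely, I would keep the adversary unchanged: its standard response is $<$ for positions in $1,\ldots,\frac{n}{2}$ and $>$ for positions in $\frac{n}{2}+1,\ldots,n$; it maintains the quantities $p,q,\ell,r$ with the same bookkeeping; and it plans to commit to placing $e$ either at position $\frac{n}{2}-p$ in the first half or at position $\frac{n}{2}+1+q$ in the second half, with $k$ hidden blocks of opposite-side elements in the committed half. The only new ingredient is to verify that every such committed instantiation admits a sorting with at most $k$ restricted block swaps.

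To see this in the first-half case, let the hidden blocks be $H_1,\ldots,H_k$ of sizes $b_1,\ldots,b_k$ summing to $p$, located at position ranges $[a_j,a_j+b_j-1]$, and partition the range $[\frac{n}{2}-p+1,\frac{n}{2}]$ (which carries exactly $p$ smaller-than-target slots due to queries and $\pos(e)=\rank(e)$) into $k$ consecutive subblocks $P_j=[c_j,c_j+b_j-1]$ of matching sizes. Fill in concrete values as follows: every position outside all $H_j$ and all $P_j$ receives value equal to its index; the $t$-th entry of $H_j$ receives value $c_j+t$; the $t$-th entry of $P_j$ receives value $a_j+t$; and position $\frac{n}{2}-p$ holds the target with value $\frac{n}{2}-p$. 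This assignment is consistent with the adversary's previous answers (hidden-block values exceed $e$, partner-block values fall below $e$), satisfies $\pos(e)=\rank(e)$, and its sorted version is $[1,2,\ldots,n]$. Since swapping $H_j$ with $P_j$ is a restricted block swap ($|H_j|=|P_j|=b_j$), performing one such swap for each $j$ simultaneously places every value at its correct index, giving $\krbswap\leq k$ for this instantiation. The symmetric argument handles second-half commitments.

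With this step in place, the adversary sets $k:=\krbswap$ and the calculation of Theorem~\ref{theorem:app:kaiv:lower} carries over without modification, yielding $p+\ell+q+r \geq 2\sqrt{4k^2+2kn}-4k = 2\sqrt{2n\krbswap}-o(\sqrt{n\krbswap})$. The main obstacle is the bookkeeping of the value assignment sketched above, which must be simultaneously compatible with the adversary's query answers, with $\pos(e)=\rank(e)$, and with the $k$ restricted block swaps producing a sorted array; all the quantitative work, including bounding $p+\ell$ and $q+r$ in terms of $k$ and $n$, is inherited unchanged from the $\kainv$ proof.
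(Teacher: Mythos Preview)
Your proposal is correct and follows essentially the same approach as the paper: reuse the adversary of Theorem~\ref{theorem:app:kaiv:lower} verbatim, set $k=\krbswap$ (not $\krbswap-1$, since no extra swap is spent on the target), and verify that each committed instantiation can be sorted by $k$ equal-size block swaps. Your explicit value assignment (pairing each hidden block $H_j$ with a matching-size partner $P_j\subset[\tfrac{n}{2}-p+1,\tfrac{n}{2}]$ and swapping index-values between them) is a concrete realization of what the paper states more abstractly, namely that the numerical values can be chosen so that the $k$ swaps yield a sorted array; modulo a harmless off-by-one in the indexing of ``$t$-th entry,'' the argument is sound.
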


\begin{proof}
We use the same adversary setup as in Theorem~\ref{theorem:app:kaiv:lower} but we need to now make sure that the adversary maintains existence of a suitable instantiation that is at most $\krbswap$ restricted block swaps away from being sorted. We will show that the same strategy can be used, but with $k=\krbswap$ hidden blocks. 

We only discuss the case that the adversary commits to an instantiation with the target $e$ in the first half of the array and having $k$ blocks of large elements between already queried positions. By the analysis from the proof of Theorem~\ref{theorem:app:kaiv:lower} we know that when the adversary commits to putting the target into position $\frac{n}{2}-p$ in the first half, there exist $k$ non-overlapping blocks of unqueried elements in the first half, and with total size at least $p$. The instantiation consists of (in this order) small elements in positions $1,\ldots,\frac{n}{2}-p-1$, interspersed with $k$ blocks of large elements of total length $p$; the target $e$ in position $\frac{n}{2}-p$; $p$ small elements in positions $\frac{n}{2}-p+1,\ldots,\frac{n}{2}$; and $\frac{n}{2}$ large elements in positions $\frac{n}{2}+1,\ldots,n$. Clearly, this can be turned by $k=\krbswap$ restricted block swaps into an array with small elements, followed by the target, and followed by large elements. Since the adversary does not need to report the numerical values, the actual numbers can be chosen with the instantiation in such a way that the latter interval is sorted.

The lower bound is thus obtained by plugging in $k=\krbswap$ into the lower bound of
\[
2\sqrt{4k^2+2kn}-4k=2\sqrt{2kn}-o(\sqrt{kn}).
\]
For $k=\krbswap$ this yields the claimed lower bound.
\end{proof}

A lower bound of $2\sqrt{2n\kbswap}-o(\sqrt{n\kbswap})$ now follows from $\kbswap\leq \krbswap$ (Proposition~\ref{prop:kbswap_krbswap}), but a better lower bound is obtained in the following theorem. 

\begin{theorem}
Every search algorithm needs at least $4\sqrt{n\kbswap}-o(\sqrt{n\kbswap})$ queries to find $e$ even if $\pos(e)=\rank(e)$.\footnoteref{foot:trivial_bound}
\label{thm:LB_kbswap}
\end{theorem}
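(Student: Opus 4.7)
The plan is to revisit the adversary of Theorem~\ref{theorem:app:kaiv:lower}/Theorem~\ref{theorem:kbsr:lower} and exploit the flexibility of unrestricted block swaps to replace the value $k=\krbswap$ used there by $k=2\kbswap$. Plugging $k=2\kbswap$ into the per-half quadratic $(p+\ell+1)^2+4k(p+\ell+1)-2kn>0$ that underlies those proofs yields $p+\ell\geq 2\sqrt{n\kbswap}-o(\sqrt{n\kbswap})$, and the symmetric bound $q+r\geq 2\sqrt{n\kbswap}-o(\sqrt{n\kbswap})$ from the other half then gives the claimed total of $4\sqrt{n\kbswap}-o(\sqrt{n\kbswap})$ queries.

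The key new ingredient is to exhibit, whenever the query pattern is still placement-feasible for $2\kbswap$ hidden blocks in the first half, an instantiation with $2\kbswap-1$ hidden large blocks of total size $p$ in the first half and the entire second half left sorted (so $p_R=0$) that is at most $\kbswap$ unrestricted block swaps away from sorted. I would build this instantiation inductively from the sorted array. A first swap exchanges $[\tfrac{n}{2}-p+1,\tfrac{n}{2}]$ (which holds exactly the $p$ values larger than $e$ in sorted order) with an equally-sized small sub-interval of $\{1,\dots,\tfrac{n}{2}-p-1\}$; this produces a single hidden large block of size $p$ and installs smalls in $[\tfrac{n}{2}-p+1,\tfrac{n}{2}]$ as demanded by the standard response~$<$. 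Each of the remaining $\kbswap-1$ swaps is a \emph{split} that exchanges a strict sub-interval of some existing hidden large block with a same-size sub-interval of smalls outside all current hidden blocks; each split raises the hidden-block count by exactly two (the targeted block breaks into two pieces and a new block appears at the destination), while preserving the total hidden-large size $p$, keeping $e$ in place, and leaving the second half unchanged.

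Because $p_R=0$, the feasibility of Option L becomes a first-half-only constraint, and the infeasibility analysis from Theorem~\ref{theorem:app:kaiv:lower} carries over verbatim with $k:=2\kbswap$. When Option L can no longer be maintained we get $(\ell+1)p>2\kbswap\cdot(\tfrac{n}{2}-p-1-\ell)$, and the AM--GM manipulation of Theorem~\ref{theorem:app:kaiv:lower} yields $p+\ell\geq 2\sqrt{n\kbswap}-o(\sqrt{n\kbswap})$; a symmetric Option R argument yields the matching contribution from $q+r$, and summing gives the claimed total lower bound.

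The main obstacle is showing that the split step can always be executed against an arbitrary adaptive query pattern: as long as the first-half feasibility condition still holds, we need at least one existing hidden large block of size at least three (to admit a strict sub-interval) and a same-size unqueried small interval disjoint from all current hidden blocks. In the regime $p+\ell=\Theta(\sqrt{n\kbswap})\gg\kbswap$ where the bound is meaningful, the $2\kbswap-1$ hidden blocks have average size on the order of $p/\kbswap\gg 1$, so greedily splitting the largest remaining block always produces a suitable strict sub-interval, and the AM--GM bound guarantees enough unqueried room elsewhere for the companion interval.
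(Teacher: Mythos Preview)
Your high-level plan---reuse the adversary of Theorem~\ref{theorem:app:kaiv:lower} with roughly $2\kbswap$ hidden blocks and then plug into the same quadratic---is exactly what the paper does. The gap is in your instantiation step. You must show that for \emph{any} prescribed pattern of hidden-block positions and sizes (the positions are dictated by the algorithm's queries, not by you), there is an array realising that pattern within $\kbswap$ block swaps of sorted. Your ``first swap then $\kbswap-1$ splits'' scheme does not achieve this: after the first (equal-size) swap the $p$ large elements sit in one contiguous window, and every split leaves two remnant blocks inside that same window separated by a gap equal to the size of the piece you moved out. Concretely, for $\kbswap=2$ you would end with three blocks of sizes $x,\,p-x-y,\,y$ where the first two are adjacent with gap exactly $y$; an adversarial target such as three size-$2$ blocks at positions $[1,2],[10,11],[20,21]$ (total $p=6$) is unreachable. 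Your final paragraph argues that blocks are large enough to split and that unqueried room exists, but it never addresses whether the resulting block \emph{positions} match those forced by the query pattern---and in general they do not. Hence you cannot invoke the infeasibility analysis with $k=2\kbswap$: your adversary's option collapses strictly earlier than the inequality $(\ell+1)p>k(\tfrac{n}{2}-p-1-\ell)$ would predict.

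There is also a sanity check you should have caught: every swap in your construction is equal-size, so if it worked you would in fact prove a $4\sqrt{n\krbswap}$ lower bound for \emph{restricted} block swaps, matching the upper bound of Corollary~\ref{cor:UB_krbswap} and closing the gap that the paper explicitly leaves open. The paper's construction avoids this by genuinely exploiting unrestricted swaps: starting from sorted, one swap installs the target and a single block, and then each further swap exchanges an $\alpha$-sized interval with a $\beta$-sized interval (generally of different sizes), thereby peeling off \emph{two} target blocks at once from opposite ends of the remaining subproblem. This recursion handles arbitrary $\alpha_i,\beta_j$ with $\lceil (k+1)/2\rceil$ swaps, which is precisely the missing lemma in your argument.
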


\begin{proof}[Proof sketch.]
Again we use the same adversary setup as in Theorem~\ref{theorem:app:kaiv:lower} and this time focus on (unrestricted) block swaps rather than restricted block swaps. To get a stronger lower bound, the adversary will use more hidden blocks, namely $2\kbswap-1$ hidden blocks. This of course requires using less than one block swap for each hidden block (unlike the previous case of restricted block swaps). The following claim shows an appropriate construction of $k$ hidden blocks with only $\frac{k+1}{2}$ block swaps. It is formulated for the case of placing $e$ in the second half of an array but the other case is symmetric. Positions are numbered $1$ through $n$.

\begin{claim}
Let $n\in\N$, and let $\alpha_1,\ldots,\alpha_k,\beta_0,\beta_1,\ldots,\beta_k\in\N$ with $\sum \alpha_i=q$ and $\sum \beta_i=\frac{n}{2}-q-1$. There is an array $A$ with $\kbswap(A)\leq \lceil\frac{k+1}{2}\rceil$ that contains (in order) the following elements: $\frac{n}{2}$ elements smaller than $e$, $q$ elements larger than $e$, element $e$, and an alternating sequence of blocks of larger and smaller elements of sizes $\beta_0, \alpha_1, \beta_1, \ldots, \alpha_k, \beta_k$ (with $\beta_i$ the sizes of blocks of larger elements).
\end{claim}

\begin{proof}
Start with any sorted array $A'$ that contains element $e$ in position $\frac{n}{2}+q+1$. Accordingly, with positions numbered $1$ to $n$, array $A'$ has exactly $\frac{n}{2}+q=\frac{n}{2}+\sum \alpha_i$ elements that are smaller than $e$ and exactly $\frac{n}{2}-q-1=\sum \beta_i$ elements that are larger than $e$. We will construct the desired array $A$ from $A'$ by a sequence of at most $\lceil\frac{k+1}{2}\rceil$ block swaps; we start with $A:=A'$.

As a first block swap, exchange $A[\frac{n}{2},\frac{n}{2}-q-1]$ with $A[\frac{n}{2}+q+\beta_0,\frac{n}{2}+q+\beta_0+q]$. In $A$ we now have the following structure: $\frac{n}{2}$ smaller elements, $q$ larger elements, element $e$, $\beta_0$ larger elements, $q$ smaller elements, and $\beta_1+\ldots+\beta_k$ larger elements. All further operations will only be among these final two groups of elements. For convenience, we discuss the remaining operations on the subarray $\hat{A}$ containing only the final $q=\alpha_1+\ldots+\alpha_k$ smaller elements followed by $\beta_1+\ldots+\beta_k$ larger elements. Clearly, operations turning $\hat{A}$ into an alternating sequence of smaller and larger blocks of sizes $\alpha_1,\beta_1,\alpha_2,\ldots,\alpha_k,\beta_k$ can also be applied to get $A$ into the required form. We show how to do this with $\lceil\frac{k-1}{2}\rceil$ block swaps.

If $k=1$ then $\hat{A}$ has already the required form and we use $0=\lceil\frac{k-1}{2}\rceil$ block swaps; getting $A$ into correct form thus used $1=\lceil\frac{k+1}{2}\rceil$ block swaps. If $k=2$ then we need to transform $\alpha_1+\alpha_2$ small elements followed by $\beta_1+\beta_2$ large ones into pattern $\alpha_1,\beta_1,\alpha_2,\beta_2$, which can be done by swapping the last $\alpha_2$ small elements with the first $\beta_1$ large ones; in total we use two swaps on $A$. 

For $k\geq 3$ we show how one block swap reduces the remaining subproblem to one with $k'=k-2$. We have $\alpha_1+\ldots+\alpha_k$ small elements followed by $\beta_1+\ldots+\beta_k$ large ones, and need to reach pattern $\alpha_1,\beta_1,\alpha_2,\ldots,\alpha_k,\beta_k$. We will swap $\hat{A}[\alpha_1+1,\ldots,\alpha_1+\alpha_k]$ with $\hat{A}[|\hat{A}|-\beta_k-\beta_1,|\hat{A}|-\beta_k-1]$, i.e., we are swapping $\alpha_k$ small elements with $\beta_1$ small ones. The result is that $\hat{A}$ now contains (in order) the following blocks: 
(1) $\alpha_1$ small elements (not moved this time), (2) $\beta_1$ large elements (just swapped), (3) $\alpha_2+\ldots+\alpha_{k-1}$ small elements (not swapped this time, but possibly shifted)\footnote{The shifting of elements will not be relevant here but we mention it once to point out that it is not overlooked.}, (4) $\beta_2+\ldots+\beta_{k-1}$ large elements (not swapped), (5) $\alpha_k$ small elements (just swapped), and (6) $\beta_k$ large elements (never moved).

Observe that the remaining problem now becomes to transform a subarray $\tilde{A}$ with $\alpha_2+\ldots+\alpha_{k-1}$ small elements followed by $\beta_2+\ldots+\beta_{k-1}$ large ones into one with pattern $\alpha_2,\beta_2,\alpha_3,\ldots,\alpha_{k-1},\beta_{k-1}$. This part is situated in $\hat{A}[\alpha_1+\beta_1+1,|\hat{A}|-\alpha_k-\beta_k-1]$ and hence also in $A$, and performing the block swaps on this part does not affect the already correctly placed elements. Thus, overall we need at most $\lceil\frac{k+1}{2}\rceil$ block swaps, as claimed.
\end{proof}

Thus, for a lower bound in terms of the number $\kbswap$ of block swaps the adversary can use $k=2(\kbswap-1)$ hidden blocks: An array instantiation with the $k$ hidden blocks in the required positions costs him only $\lceil\frac{k+1}{2}\rceil=\kbswap$ block swaps. Using this, we can plug $k=2(\kbswap-1)$ into the lower bound of
\[
2\sqrt{4k^2+2kn}-4k=2\sqrt{2kn}-o(\sqrt{kn})
\]
and obtain the claimed lower bound of $4\sqrt{n\kbswap}-o(\sqrt{n\kbswap})$.
\end{proof}

Now, we directly get a lower bound in terms of the number $\kbmov$ of block moves since, using $\kbmov\leq 2\kbswap$, a more efficient search would otherwise violate the lower bound for $\kbswap$.

\begin{corollary}\label{corollary:kbm:lower}
Every search algorithm needs at least $2\sqrt{2n\kbmov}-o(\sqrt{n\kbmov})$ queries to find $e$ even if $\pos(e)=\rank(e)$.\footnoteref{foot:trivial_bound}
\label{cor:LB_kbmov}
\end{corollary}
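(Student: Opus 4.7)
The plan is to derive this lower bound from the already-established lower bound for $\kbswap$ (Theorem~\ref{thm:LB_kbswap}) via a simple reduction, exactly as foreshadowed in the remark preceding the corollary. I assume the structural inequality $\kbmov \leq 2\kbswap$, which is recorded as a proposition in Appendix~\ref{sec:relations}: intuitively, a single block swap can be realized by two block moves (first move one of the two blocks out of the way to the position of the other, then move the second block into the vacated range).

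Concretely, I would argue by contradiction. Suppose there exists a search algorithm $\mathcal{A}$ that finds $e$ using strictly fewer than $2\sqrt{2n\kbmov} - o(\sqrt{n\kbmov})$ queries on every instance with $\pos(e)=\rank(e)$. I would then apply $\mathcal{A}$ to the family of hard instances constructed in the proof of Theorem~\ref{thm:LB_kbswap}. Those instances have block-swap parameter at most some value $k$, and the adversary forces any algorithm to use at least $4\sqrt{nk} - o(\sqrt{nk})$ queries. By the relation $\kbmov \leq 2\kbswap$, the same instances satisfy $\kbmov \leq 2k$. Hence $\mathcal{A}$, invoked with parameter $\kbmov$, would use at most
\[
2\sqrt{2n \cdot 2k} - o(\sqrt{nk}) \;=\; 4\sqrt{nk} - o(\sqrt{nk})
\]
queries, directly contradicting Theorem~\ref{thm:LB_kbswap}.

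There is no genuine obstacle here: the real work was done in constructing the adversary for $\kbswap$, and the arithmetic $2\sqrt{2n\cdot 2k}=4\sqrt{nk}$ is what makes the two lower bounds line up precisely through the factor-$2$ inequality between the parameters. The only point to be careful about is that the reduction quantifies correctly over the lower-order terms, so that the hypothetical improvement for $\kbmov$ really does translate into a strict violation of the $\kbswap$ bound; writing the strict inequality with an explicit $\Omega(\sqrt{n\kbmov})$ gap and propagating it through the substitution $\kbmov \mapsto 2\kbswap$ handles this cleanly.
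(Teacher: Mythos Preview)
Your proposal is correct and matches the paper's own argument essentially verbatim: the paper derives the corollary in a single sentence by invoking $\kbmov\leq 2\kbswap$ (Proposition~\ref{prop:kbswap_kbmov}) and observing that a faster algorithm for $\kbmov$ would violate the lower bound of Theorem~\ref{thm:LB_kbswap}. Your arithmetic check $2\sqrt{2n\cdot 2k}=4\sqrt{nk}$ is exactly what makes the constants line up.
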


A matching upper bound for $\kbswap$, in the sense of $4\sqrt{n\kbswap}+o(\sqrt{n\kbswap})$, follows immediately from the fact that $\kainv\leq 2\kbswap$ (Proposition~\ref{prop:kainv_kbswap}). The same bound can be obtained relative to the number of block moves, using $\kainv\leq2\kbmov$ or $\kbswap\leq \kbmov$, but a tight upper bound of $2\sqrt{2n\kbmov}+o(\sqrt{n\kbmov})$ is proved in Theorem~\ref{theorem:kbm:upper} below. For the number of restricted block swaps, i.e., swapping only blocks of the same size (never incurring any shifts), we get an upper bound of $4\sqrt{n\krbswap}+o(\sqrt{n\krbswap})$ from $\kbswap\leq\krbswap$ (Proposition~\ref{prop:kbswap_krbswap}) but this is not tight regarding the leading constant but asymptotically tight.

\begin{corollary}
We can find $e$ using $4\sqrt{n\kbswap}+o(\sqrt{n\kbswap})$ queries if $\pos(e)=\rank(e)$.
\label{cor:UB_kbswap}
\end{corollary}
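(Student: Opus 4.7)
My plan is to derive the corollary directly from Theorem~\ref{thm:UB_kainv} by substituting the relation $\kainv\leq 2\kbswap$ from Proposition~\ref{prop:kainv_kbswap}. The algorithm of Theorem~\ref{thm:UB_kainv} requires a bound on $\kainv$ in order to pick the block size $p=\tfrac{1}{\sqrt{2}}\sqrt{n/\kainv}$ and, later, to delimit the candidate window around the estimated value of $\rank(e)$. Given only $\kbswap$, I will run exactly that algorithm but with the surrogate value $k':=2\kbswap$ in place of $\kainv$.

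The key observation is that the analysis in Theorem~\ref{thm:UB_kainv} stays valid whenever the parameter plugged into the algorithm is an upper bound on the actual number of adjacent inversions: increasing $k'$ only widens the candidate window and shrinks the grid spacing, both of which are safe with respect to correctness. Since $\kainv\leq 2\kbswap=k'$ by Proposition~\ref{prop:kainv_kbswap}, every bound that Theorem~\ref{thm:UB_kainv} derives on the number of hidden adjacent inversions per block, and on the gap between the upper and lower estimates of the number of elements smaller than $e$, continues to hold with $k'$ in the role of $\kainv$. Hence the algorithm still finds $e$ correctly under the assumption $\pos(e)=\rank(e)$.

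For the query count I just carry the same optimization through with $k'$ instead of $\kainv$: the total cost is bounded by $\tfrac{n}{p}+2k'\cdot p+o(k' p)$, minimized at $p=\tfrac{1}{\sqrt{2}}\sqrt{n/k'}$, yielding $2\sqrt{2nk'}+o(\sqrt{nk'})$ queries. Plugging in $k'=2\kbswap$ gives
\[
2\sqrt{2n\cdot 2\kbswap}+o(\sqrt{n\kbswap})=4\sqrt{n\kbswap}+o(\sqrt{n\kbswap}),
\]
which is exactly the bound claimed in the corollary.

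There is no real obstacle here; the only minor point worth verifying is that every step of the proof of Theorem~\ref{thm:UB_kainv} (the block-count bounds, the refinement of $><$-blocks via binary search, and the computation of the candidate range) uses the parameter purely as a valid upper bound on the number of adjacent inversions, so the substitution is transparent. No new algorithmic idea or additional calculation beyond this substitution is required.
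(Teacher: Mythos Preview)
Your proposal is correct and matches the paper's own reasoning: the corollary is stated without a separate proof, with the preceding text noting that it ``follows immediately from the fact that $\kainv\leq 2\kbswap$ (Proposition~\ref{prop:kainv_kbswap}).'' Your additional remark that the algorithm of Theorem~\ref{thm:UB_kainv} remains correct when run with any upper bound $k'\geq\kainv$ in place of $\kainv$ is a useful clarification the paper leaves implicit.
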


\begin{corollary}
We can find $e$ using $4\sqrt{n\krbswap}+o(\sqrt{n\krbswap})$ queries if $\pos(e)=\rank(e)$.
\label{cor:UB_krbswap}
\end{corollary}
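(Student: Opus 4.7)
The plan is to derive the upper bound as a direct composition of two ingredients already available in the paper: the upper bound for $\kainv$ from Theorem~\ref{thm:UB_kainv}, and the parameter relations $\kainv \leq 2\kbswap$ (Proposition~\ref{prop:kainv_kbswap}) and $\kbswap \leq \krbswap$ (Proposition~\ref{prop:kbswap_krbswap}). Since $\pos(e)=\rank(e)$ is assumed, Theorem~\ref{thm:UB_kainv} applies and yields an algorithm using $2\sqrt{2n\kainv}+o(\sqrt{n\kainv})$ queries. Chaining the two inequalities gives $\kainv \leq 2\krbswap$, and substituting this upper estimate for $\kainv$ into the bound of Theorem~\ref{thm:UB_kainv} produces $2\sqrt{2n\cdot 2\krbswap}+o(\sqrt{n\krbswap}) = 4\sqrt{n\krbswap}+o(\sqrt{n\krbswap})$, which is the claimed bound.

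The one point that needs care is that the algorithm of Theorem~\ref{thm:UB_kainv} is not oblivious: it needs a value of the parameter in order to choose the block size $p = \frac{1}{\sqrt{2}}\sqrt{n/\kainv}$. However, the corollary is stated with knowledge of $\krbswap$ in hand, and from $\krbswap$ we can simply compute the a priori overestimate $2\krbswap \geq \kainv$ and feed it into the algorithm. The bound $(\frac{1}{c}+2c)\sqrt{n\kainv}$ derived in the proof of Theorem~\ref{thm:UB_kainv} is monotone in the parameter supplied, so plugging in $2\krbswap$ in place of the true value of $\kainv$ only inflates the query count by the expected factor of $\sqrt{2}$, which is exactly what accounts for the leading constant of $4$ (rather than $2\sqrt{2}$) in the final bound.

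I expect no real obstacle: the argument is essentially bookkeeping over the already-established results. The analogous step was already carried out once in Corollary~\ref{cor:UB_kbswap}, where $\kainv \leq 2\kbswap$ was used to transfer the $\kainv$ bound to $\kbswap$; here we simply take one further step through $\kbswap \leq \krbswap$. No new algorithmic idea is needed, and indeed the discussion immediately preceding the corollary explicitly anticipates this derivation, noting that the resulting leading constant is asymptotically tight (matching Theorem~\ref{thm:LB_krbswap} up to a factor of $\sqrt{2}$) but not tight in the constant itself.
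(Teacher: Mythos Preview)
Your proposal is correct and follows essentially the same route as the paper: the discussion preceding the corollary derives it from $\kbswap\leq\krbswap$ (Proposition~\ref{prop:kbswap_krbswap}) together with the already-obtained bound for $\kbswap$, which in turn came from $\kainv\leq 2\kbswap$ and Theorem~\ref{thm:UB_kainv}. Your extra care about feeding the overestimate $2\krbswap$ into the non-oblivious algorithm (and checking that both the chosen block size and the final search range remain valid and only inflate the cost) is a useful clarification that the paper leaves implicit.
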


\begin{theorem}\label{theorem:kbm:upper}
We can find $e$ using $2\sqrt{2n\kbmov}+o(\sqrt{n\kbmov})$ queries if $\pos(e)=\rank(e)$.
\label{thm:UB_kbmov}
\end{theorem}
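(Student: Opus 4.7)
The plan is to follow the template of the proof of Theorem~\ref{theorem:app:kaiv:upper}, replacing the role of $\kainv$ with a tighter invariant specific to $\kbmov$. The algorithm queries a grid of positions spaced $p+1$ apart for a parameter $p = \Theta(\sqrt{n/\kbmov})$; this partitions $A$ into unqueried blocks of size at most $p$, each classified by its pair of boundary outcomes as $<<$, $<>$, $><$, or $>>$. Because $\pos(e)=\rank(e)$, the position of $e$ equals the number $S$ of elements smaller than $e$, and the algorithm will compute upper and lower bounds on $S$ from the grid information and then scan the resulting candidate range for $\pos(e)=S$.

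The key structural ingredient I will prove is that any array $A$ sortable by $\kbmov$ block moves has at most $2\kbmov+1$ \emph{polarity changes}, where a polarity change is an index $i$ with $A[i]$ and $A[i+1]$ on opposite sides of $e$ (ignoring the single $=e$ position). I prove this by induction on the number of block moves used to reach $A$ from the sorted array. The sorted array has exactly one polarity change (at $e$), and a single block move affects only three adjacencies of the array (one where the block was removed, two where it was inserted). A short case analysis on the polarities of the six endpoint values involved shows that the polarity-change count shifts by at most $2$ per move: if the three removed adjacencies contained no polarity changes, then the value constraints on the six endpoints prevent all three new adjacencies from being polarity changes simultaneously, so at most two are gained; the symmetric bound on the decrement follows by running the same argument on the inverse move.

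Given $P(A)\leq 2\kbmov+1$, I convert this into a budget inequality on the grid blocks. Each $<>$ and each $><$ block must contain at least one polarity change, while each \emph{impure} $<<$ block (containing an internal element $>e$) or impure $>>$ block (containing an internal element $<e$) must contain at least two polarity changes, because within such a block the polarity must rise and fall again. Summing these contributions yields $\sharp(<>) + \sharp(><) + 2\sharp(<<,\text{imp.}) + 2\sharp(>>,\text{imp.}) \leq 2\kbmov+1$. The algorithm's remaining uncertainty in $S$ is due exclusively to impure blocks: in the worst case for the upper bound on $S$, every impurity falls on $>>$ blocks (each then contributing its full size rather than $0$); in the worst case for the lower bound, every impurity falls on $<<$ blocks (each then contributing $0$ rather than its full size). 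Since both worst-case scenarios share the single impurity budget of $2\kbmov+1$, the gap between upper and lower bound is at most $(2\kbmov+1)p$, rather than the $2\cdot(2\kbmov+1)p$ that one would naively add up; this sharing is exactly the factor-of-$\sqrt{2}$ improvement over the bound obtained by plugging $\kainv\leq 2\kbmov$ into Theorem~\ref{theorem:app:kaiv:upper}.

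The algorithm then scans the candidate range, at an additional cost of at most $(2\kbmov+1)p+1$ queries. The total is $n/p + (2\kbmov+1)p + O(1)$, minimized at $p=\sqrt{n/(2\kbmov+1)}$ to give $2\sqrt{(2\kbmov+1)n} = 2\sqrt{2n\kbmov} + o(\sqrt{n\kbmov})$ as desired. The main obstacle is the combinatorial lemma $P(A)\leq 2\kbmov+1$ itself: without the sharp constant $2$ in front of $\kbmov$ in the per-move increment, the final leading constant would be strictly larger than the target $2\sqrt{2}$, so the case analysis ruling out a $+3$ or $-3$ jump in polarity count per move is what ultimately powers the tightness of the bound.
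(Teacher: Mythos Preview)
Your structural lemma is exactly the paper's: a single block move creates three new adjacencies and destroys three old ones, and the parity contradiction you sketch (if all three new adjacencies were polarity changes and none of the three old ones were, the six endpoint polarities would force $a=\overline a$) is the paper's argument verbatim. The paper phrases the conclusion as ``at most $2\kbmov+2$ maximal small/large blocks'' and then extracts the sharper consequence that there are at most $\kbmov+1$ \emph{large-to-small} polarity changes; it then reuses the $\kainv$ algorithm of Theorem~\ref{theorem:app:kaiv:upper} unchanged (including its refinement step that binary-searches each $><$-block), observing that every use of $\kainv$ in that analysis really only bounds the number of large-to-small alternations relative to $e$. Your route is a genuine variant: you keep the total polarity count $P(A)\le 2\kbmov+1$, charge impure $<<$/$>>$ blocks two units each, and aim to skip the refinement step entirely. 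That is a pleasant simplification, and it does lead to the same leading constant.

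There is, however, a real gap in your bookkeeping. The sentence ``the algorithm's remaining uncertainty in $S$ is due exclusively to impure blocks'' is false: every $<>$- and every $><$-block contributes up to $p$ to the uncertainty in $S$ even when it contains only a single polarity change, since the change can sit anywhere in the block. Your own budget inequality $\sharp(<>)+\sharp(><)+2\sharp(<<,\mathrm{imp})+2\sharp(>>,\mathrm{imp})\le 2\kbmov+1$ accounts for them, but the ``sharing'' paragraph that follows does not, and it also does not say how the algorithm actually \emph{locates} the scan window (knowing only that its length is $(2\kbmov+1)p$ is not enough; you must produce computable $L$ and $U$). The fix is short: set $m:=\sharp(<>)+\sharp(><)$, which the algorithm observes from the grid, and let $B:=2\kbmov+1-m$. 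Your inequality gives $\sharp(<<,\mathrm{imp})\le B/2$ and $\sharp(>>,\mathrm{imp})\le B/2$, so with $S_0:=q(<)+N(<<)$ one may take $L=S_0-\tfrac{B}{2}p$ and $U=S_0+(\tfrac{B}{2}+m)p$, giving $U-L=(B+m)p=(2\kbmov+1)p$ as you claim. With this correction your refinement-free variant is correct; without it, an instance with $m\approx 2\kbmov$ and no impure $<<$/$>>$ blocks would leave $S$ outside any window chosen by the ``impure blocks only'' reasoning.
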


\begin{proof}
The idea for the proof is to revisit the upper bound for number $\kainv$ of adjacent inversions and observe that only the partition into blocks of elements smaller or greater than $e$ matter (in addition to the single element $e$). Such a partition can have at most $2\kainv+2$ blocks because every block of large elements is followed by an adjacent inversion. We can get a similar bound in terms of the number $\kbmov$ of block moves, which allows us to conclude the analog bound of $2\sqrt{2n\kbmov}+o(\sqrt{n\kbmov})$ for $\kbmov$. (Note that any sequence of $k$ block moves that sorts the given array, can be reversed into one turning a sorted array into the given one.)

\begin{claim}
Starting from a sorted array containing at most a single copy of $e$, any sequence of at most $k$ block moves gives a partition into at most $2k+2$ maximal blocks of elements smaller, respectively larger than $e$, and possibly a unit block for the target.
\end{claim}

\begin{proof}
There is nothing to prove if there are only small or only large elements (but we give the proof independent of presence of $e$). For convenience, assume that $A$ only contains elements $x$, $y$, and $e$ with $x<e$ and $y>e$ (i.e. possibly many copies of $x$ and $y$, and at most a single copy of $e$). We consider a single block move and show that it increases the number of maximal $x$- or $y$-blocks by at most $2$. Concretely, we show that the number of alternations between blocks increases by at most two (where we also count alternations between $x$- or $y$-blocks with $e$).

Say that we have $A=(\ldots,a,b,\ldots,c,d,\ldots,e,f,\ldots)$ and we move block $(d,\ldots,e)$ between $a$ and $b$, obtaining $A'=(\ldots,a,d,\ldots,e,b,\ldots,c,f,\ldots)$. Note that $a,b,c,d,e,f\in\{x,y,e\}$. We claim that this block move cannot increase the number of block alternations by three. Assume, for contradiction, that it indeed increases this number by three. Since there are only three new adjacencies, it follows that $(a,d)$, $(e,b)$, and $(c,f)$ must be alternations, i.e., $a\neq d$, $e\neq b$, and $c\neq f$. Similarly, we may not have removed alternations (or else the increase is at most two) so $a=b$, $c=d$, and $e=f$. It follows immediately that none of $a,\ldots,f$ is equal to $e$ since that would imply having at least two copies. Accordingly, $a,b,c,d,e,f\in\{x,y\}$ and we will use $\overline{x}=y$ and $\overline{y}=x$, which allows us to replace, e.g., $a\neq d$ by $a=\overline{d}$. Thus, we get six equalities that together yield $a=\overline{d}=\overline{c}=f=e=\overline{b}=\overline{a}$; a contradiction. It follows that no block move can create more than two additional block alternations, i.e., no block move can increase the number of maximal blocks by more than two.

If $e$ is present then the sorted array has small elements, followed by $e$, followed by large elements; a total of three blocks. This increases to at most $2k+3$ blocks after $k$ block moves, one of which is $e$. If $e$ is not present then we go from $2$ blocks to at most $2k+2$. This completes the proof of the claim.
\end{proof}

It follows that arrays that can be sorted with at most $\kbmov$ block moves have at most $\kbmov+1$ pairs consisting of a large element followed by a small element: Such pairs can only occur between different blocks, neither of which is the block containing $e$. There are at most $2\kbmov+2$ other blocks and hence at most $2\kbmov+1$ alternations between such blocks. Clearly, only every second block alternation can be from larger to smaller element, giving the claimed number of at most $\kbmov+1$ adjacent alternations between an element larger than $e$ and an element smaller than $e$. Alternations of this type are the defining quantity for the algorithm given in Theorem~\ref{theorem:app:kaiv:upper} for parameter $\kainv$; there we used that this number is at most $\kainv$ since they are a special case of adjacent alternations. This yields the claimed upper bound of $2\sqrt{2n\kbmov}+o(\sqrt{n\kbmov})$, completing the proof.
\end{proof}


\section{Conclusion}

We presented upper and lower bounds for the worst-case query complexity of comparison-based search algorithms that are robust to persistent and temporary read errors, or are adaptive to partially disordered input arrays. 
For many cases we gave algorithms that are optimal up to lower order terms.
In addition, many of the algorithms are oblivious to the value of the parameter quantifying errors/disorder, assuming the target element is present in the array. 
In most cases, for small values of~$k$, the dependence of our algorithms on the number $n$ of elements is close to~$\log n$, with only additive dependency on the number of imprecisions. 
In other words, these results smoothly interpolate beween parameter regimes where algorithms are as good as binary search and the unavoidable worst-case where linear search is best possible.

That said, why should one be interested in, e.g., almost tight bounds relative to the number of block moves that take $A$ to a sorted array, as the bounds are far from binary search? 
The point is that only the total number of comparisons matter, and having a worse function that depends on a (in this case) much smaller parameter value can be favorable to having a much better function of a large parameter value.
E.g., after a constant number of block swaps the parameters $\kmax$, $\ksum$ etc.\ may have value $\Omega(n)$ and the guaranteed bound becomes trivial, while running the search algorithm for the case of few block swaps guarantees $\Oh(\sqrt{n})$ comparisons. Similarly, having tight bounds for the various parameters gives us the exact (worst-case) regime for the chosen parameter (in terms of $n$) where a sophisticated algorithm can outperform linear search, or even be as good as binary search.

Despite having already asymptotic tightness, it would be interesting to close the gaps between coefficients of dominant terms in upper and lower bounds for some of the cases. 
Another question would be to find a different restriction than $\pos(e)=\rank(e)$, i.e., the target being in the correct position relative to sorted order, that avoids degenerate lower bounds of $\Omega(n)$ queries for several parameters. 
A relaxation to allowing a target displacement of $\ell$ and giving cost in terms of $n$, $k$, and $\ell$ seems doable in most cases, but is unlikely to be particularly insightful.
Finally, it seems interesting to study whether randomization could lead to improved algorithms for some of the cases. 
The analysis of randomized lower bounds requires entirely new adversarial strategies since the adversary must choose an instantiation without access to the random bits of the algorithm.

\subparagraph*{Acknowledgements.}
The authors are grateful to several reviewers for their helpful remarks regarding presentation and pertinent literature references.


\bibliographystyle{plain}
\bibliography{robustsearch}


\newpage
\appendix


\section{Relations between measures of array disorder}\label{section:relations:disorder}
\label{sec:relations}


\subsection{Equivalences}

\begin{proposition}\label{prop:kseq_kmov}\label{prop:kseq_krep}
$\kseq=\kmov=\krep$.
\end{proposition}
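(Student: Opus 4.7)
The plan is to prove each equality by establishing both inequalities in turn.

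For the easy directions $\kseq \leq \kmov$ and $\kseq \leq \krep$ I would fix an optimal sequence of $\kmov$ moves (respectively $\krep$ replacements) that transforms $A$ into a sorted array $\hat A$. The $n - \kmov$ elements that are never moved keep both their values and their pairwise relative order, so they appear in $\hat A$ in the same order as in $A$; since $\hat A$ is sorted, this order is nondecreasing. Hence they form a nondecreasing subsequence of $A$ of length $n - \kmov$, giving $\kseq \leq \kmov$. The argument for $\kseq \leq \krep$ is the same, using that the $n - \krep$ unreplaced elements keep both their values and their positions.

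For the converse inequalities, fix a longest nondecreasing subsequence $v_1 \leq \cdots \leq v_m$ at positions $i_1 < \cdots < i_m$ in $A$, where $m = n - \kseq$. To establish $\krep \leq \kseq$, I would replace each element at a non-kept position $j$ with any value from the interval $[v_k, v_{k+1}]$, where $v_k$ is the largest kept value occurring at a position $\leq j$ (using sentinels $v_0 = -\infty$ and $v_{m+1} = +\infty$); the resulting array is nondecreasing and only $\kseq$ replacements were used. To establish $\kmov \leq \kseq$, I would rely on the standard fact that the minimum number of moves needed to transform $A$ into another array $B$ with the same multiset equals $n$ minus the length of a longest common subsequence of $A$ and $B$ (matched element-by-element); applied to $B = \sort(A)$, a common subsequence is precisely a nondecreasing subsequence of $A$, so the minimum is $n - (n - \kseq) = \kseq$.

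The main obstacle is justifying the upper bound $\kmov \leq \kseq$ rigorously. The natural induction on $\kseq$ picks any non-kept element $x$ and moves it to its ``target position'' in $\sort(A)$, arguing that this produces an array sharing a common subsequence of length $n - \kseq + 1$ with $\sort(A)$. In the presence of duplicates, the target of $x$ is not canonical, and a move reshuffles intermediate positions. The clean remedy is to fix at the outset a bijection between occurrences of values in $A$ and positions of $\sort(A)$ (for instance, mapping the $t$-th occurrence of each value in $A$ to the $t$-th occurrence in $\sort(A)$). This bijection restricts to an order-preserving matching of the kept subsequence, and since a single move leaves the relative order of all non-moved elements untouched, after moving $x$ to its assigned target the matched subsequence has grown by exactly one, so the induction goes through.
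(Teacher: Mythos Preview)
Your proof is correct and, for three of the four inequalities ($\kseq\leq\kmov$, $\kseq\leq\krep$, $\krep\leq\kseq$), essentially identical to the paper's argument.

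For $\kmov\leq\kseq$ you take a somewhat more elaborate route: you invoke the LCS characterisation of move distance and then set up an induction, carefully handling duplicates via a fixed bijection between occurrences. The paper instead argues directly in one line: fix a longest nondecreasing subsequence, and move each of the remaining $\kseq$ elements one by one into its correct spot relative to the already-placed elements. Since a move does not disturb the relative order of the unmoved elements, the kept subsequence stays sorted throughout, and after $\kseq$ moves the array is sorted. Your concern about duplicates is legitimate in general but unnecessary here, as the paper globally assumes the entries of $A$ are distinct; under that assumption the ``target position'' of each moved element is unambiguous and the direct argument goes through without the LCS machinery or the occurrence bijection.
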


\begin{proof}
$\kseq\leq\kmov$: If $\kmov$ element moves lead to a sorted array then the at least $n-\kmov$ elements that are not moved must form a sorted subsequence.

$\kmov\leq\kseq$: The $\kseq$ elements that are not part of any fixed ordered subsequence of length $n-\kseq$ can be moved to the correct positions in that sequence by $\kseq$ element moves.

$\kseq\leq\krep$: Assume that $\krep$ element replacements suffice to reach a sorted array. It follows that at least $n-\krep$ elements are not replaced and their subsequence must be already sorted.

$\krep\leq \kseq$: Let $P$ a set of $\kseq$ positions such that the subsequence $S$ on the remaining $n-\kseq$ elements is sorted. One can replace the elements in $P$ such that the entire array is sorted.
\end{proof}


\begin{proposition}\label{prop:kaswap_kinv}
$\kaswap=\kinv$.
\end{proposition}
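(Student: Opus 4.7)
The plan is to prove the two inequalities $\kaswap \geq \kinv$ and $\kaswap \leq \kinv$ separately, both hinging on a single structural observation: swapping two adjacent positions $i$, $i+1$ changes the total number of inversions by exactly $\pm 1$. This is because for any pair of indices $\{j_1, j_2\}$ disjoint from $\{i, i+1\}$, the relative order of $A[j_1]$ and $A[j_2]$ is preserved by the swap, and similarly for any pair $\{j, i\}$ or $\{j, i+1\}$ with $j \notin \{i, i+1\}$ the multiset of values involved and their positions relative to $j$ are unchanged (just relabeled). The only pair whose ordering can flip is $(i, i+1)$ itself; under the assumption that entries are distinct (as noted in the Preliminaries), the swap either creates or destroys this single inversion.

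For the lower bound $\kaswap \geq \kinv$, I would observe that the sorted array has zero inversions, so any sequence of adjacent swaps sorting $A$ must reduce the inversion count from $\kinv$ to $0$; since each swap changes it by $\pm 1$, at least $\kinv$ swaps are required.

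For the upper bound $\kaswap \leq \kinv$, I would give an explicit procedure (essentially bubble sort): while the array is not sorted, there must exist some index $i$ with $A[i] > A[i+1]$, which is an adjacent inversion; swapping positions $i$ and $i+1$ strictly decreases the inversion count by $1$ (by the structural observation, using that this particular pair was counted as an inversion before and is not after). Iterating shows that we reach the sorted array in exactly $\kinv$ adjacent swaps. To justify that an unsorted array always contains an adjacent inversion, I would argue contrapositively: if $A[i] \leq A[i+1]$ for every $i$, then $A$ is (weakly) sorted.

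The main ``obstacle,'' such as it is, is just cleanly proving the $\pm 1$ claim for adjacent swaps; everything else follows immediately. Since the statement is a folklore equivalence, the proof is short, and no reference to other parameters in the excerpt is needed.
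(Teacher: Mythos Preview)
Your proof is correct and takes essentially the same approach as the paper: both establish that an adjacent swap changes the inversion count by at most one (giving $\kaswap\geq\kinv$) and that whenever the array is unsorted there is an adjacent inversion whose swap decreases the count by exactly one (giving $\kaswap\leq\kinv$). Your write-up is simply more detailed in justifying the $\pm 1$ structural observation.
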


\begin{proof}
If there are any inversions then there is an adjacent inversion, which can be removed by a single swap of the adjacent elements; this lowers the number of inversions by one. On the other hand, each swap of adjacent elements affects only the relative ordering of these elements and, hence, removes at most one inversion.
\end{proof}


\subsection{Relations by similarity of operations}

The following relations hold because some number of operations in terms of the first measure can be used to implement one operation of the second measure.

\begin{proposition}\label{prop:kswap_kaswap}
$\kswap\leq\kaswap$.
\end{proposition}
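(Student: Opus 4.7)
The plan is to observe that an adjacent swap is a special case of an arbitrary swap, namely the swap of the two elements at positions $i$ and $i+1$. Therefore any sorting sequence that uses $\kaswap$ adjacent swaps can be interpreted, verbatim, as a sorting sequence that uses $\kaswap$ general swaps. Hence $\kswap \leq \kaswap$.

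Concretely, I would fix an optimal sequence $s_1, s_2, \ldots, s_{\kaswap}$ of adjacent swaps transforming $A$ into a sorted array, where each $s_t$ exchanges the contents of some pair of positions $(i_t, i_t+1)$. Reinterpreting each $s_t$ as the general swap of positions $i_t$ and $i_t+1$ (which has identical effect on the array), we obtain a sequence of $\kaswap$ general swaps that sorts $A$. Since $\kswap$ is the minimum number of general swaps needed to sort $A$, this yields $\kswap \leq \kaswap$.

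There is no real obstacle here: the proof is immediate from the definitions, since the set of adjacent swaps is a subset of the set of general swaps, so the minimum over the larger set can only be smaller or equal. The argument is a one-liner and I do not expect any hidden subtlety, since both measures are defined with respect to the same notion of ``transform $A$ into a sorted array.''
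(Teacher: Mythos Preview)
Your proposal is correct and matches the paper's approach exactly: the paper states this proposition without proof under the heading ``Relations by similarity of operations,'' noting only that such relations hold because operations of the first kind can implement operations of the second kind. Your observation that every adjacent swap is in particular a general swap is precisely this reasoning.
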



\begin{proposition}\label{prop:krbswap_kswap}
$\krbswap\leq\kswap$.
\end{proposition}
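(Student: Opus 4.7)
The plan is to observe that an element swap is just a special case of a restricted block swap, namely the case where both blocks have size one. More formally, I would argue that any sequence of $\kswap$ element swaps that sorts $A$ can be reinterpreted, without modification, as a sequence of $\kswap$ restricted block swaps (each operating on two single-element blocks) that sorts $A$. Since by definition $\krbswap$ is the minimum number of restricted block swaps needed to sort $A$, this exhibited sequence gives $\krbswap \leq \kswap$.

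The only thing worth double-checking is that the definition of a restricted block swap as given in Section~\ref{sub:block_distances} does not implicitly forbid the degenerate case of size-one blocks. The definition requires exchanging $A[i,\ldots,j]$ with $A[i',\ldots,j']$ where $i<j<i'<j'$; taking $j=i$ and $j'=i'$ (and the induced positional constraint $i<i'$, which is always attainable up to symmetry) yields exactly the swap of two single positions, matching an element swap. If one prefers to read the inequalities strictly, the same argument goes through by observing that swapping equal-sized blocks of size one is the intended degenerate case and is standardly included; alternatively, one can describe the restricted block swap only at the level of contents exchanged, which makes the size-one case unambiguous.

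I expect no real obstacle here — the entire proof is a single sentence pointing out the inclusion of operation types — so the main care is simply to phrase the argument cleanly and to make the size-one convention explicit so the reader does not stumble over the strict inequality in the block-swap definition.
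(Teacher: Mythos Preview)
Your proposal is correct and matches the paper's approach exactly: the paper does not even spell out a proof for this proposition, grouping it with others under the remark that ``some number of operations in terms of the first measure can be used to implement one operation of the second measure,'' which is precisely your observation that an element swap is a restricted block swap on two size-one blocks. Your caveat about the strict inequalities $i<j<i'<j'$ in the block-swap definition is a fair textual nitpick, but the paper clearly intends the non-strict reading (it elsewhere asserts $\krbswap\leq 1$ for arrays obtained by swapping a single element), so there is no genuine obstacle.
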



\begin{proposition}\label{prop:kbswap_krbswap}
$\kbswap\leq\krbswap$. 
\end{proposition}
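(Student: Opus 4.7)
The plan is to observe that a restricted block swap is simply a special case of a block swap, namely one in which the two blocks happen to have equal size. Since $\krbswap$ is defined as the minimum number of operations from the more restrictive class needed to sort $A$, any optimal sequence witnessing $\krbswap$ is a valid sequence of (unrestricted) block swaps sorting $A$.

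Concretely, I would fix an optimal sequence $s_1, s_2, \dots, s_{\krbswap}$ of equal-size block swaps that transforms $A$ into a sorted array. Each $s_i$ exchanges two disjoint blocks $A[i_1,\dots,j_1]$ and $A[i_2,\dots,j_2]$ with $j_1 - i_1 = j_2 - i_2$; in particular, this fits the definition of a block swap given in Section~\ref{sub:block_distances}, where arbitrary (possibly unequal) block sizes are allowed. Thus the same sequence of operations is a valid witness for $\kbswap$, and by minimality of $\kbswap$ we conclude $\kbswap \leq \krbswap$.

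There is essentially no obstacle here: the statement follows directly from the inclusion of operation classes. The only thing to be careful about is to verify the definitions match (namely, that an equal-size block swap indeed satisfies the $i < j < i' < j'$ condition from the definition of $\kbswap$ in Section~\ref{sub:block_distances}), which is immediate.
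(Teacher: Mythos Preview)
Your proposal is correct and matches the paper's approach exactly: the paper groups this proposition under ``relations by similarity of operations'' with the blanket justification that one operation of the second measure can be implemented by (in this case, one) operation of the first, and gives no further proof. Your write-up simply spells this out.
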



\begin{proposition}\label{prop:kbmov_kmov}
$\kbmov\leq\kmov$. 
\end{proposition}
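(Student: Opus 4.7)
The plan is to observe that a single element move is a special case of a block move, namely a block move applied to a block of size one. Under the paper's definitions, a move removes an element from its current position and re-inserts it after some target position, shifting the intervening elements by one; this is precisely what a block move of a length-$1$ block does.

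Given any optimal sorting sequence of $\kmov$ element moves that transforms $A$ into a sorted array, I will reinterpret each element move as a block move on the singleton block consisting of the moved element. The resulting sequence of $\kmov$ block moves has exactly the same effect on $A$ at every step, since the semantics of the two operations coincide in the length-$1$ case. Therefore this sequence also sorts $A$, which witnesses $\kbmov \leq \kmov$.

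There is no real obstacle here; the proof is essentially a one-line observation, and it follows the pattern announced by the subsection heading ``Relations by similarity of operations,'' since one block move suffices to implement one element move. The statement and its proof are entirely analogous to Propositions~\ref{prop:kswap_kaswap}, \ref{prop:krbswap_kswap}, and \ref{prop:kbswap_krbswap}, which are likewise justified by noting that the more restricted operation (adjacent swap, equal-size block swap, element move) is a particular case of the more general operation (swap, block swap, block move).
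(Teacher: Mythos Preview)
Your proposal is correct and matches the paper's approach exactly: the paper gives no explicit proof for this proposition, relying on the subsection header's blanket justification that the more restricted operation (an element move) is a special case of the more general one (a block move), which is precisely your argument.
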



\begin{proposition}\label{prop:kbswap_kbmov}
$\kbswap\leq\kbmov\leq2\kbswap$.
\end{proposition}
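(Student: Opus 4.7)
The plan is to establish each of the two inequalities by a direct reduction between the corresponding operations.

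For $\kbswap \leq \kbmov$, I would appeal to the fact that a block move is exactly the special case of a block swap in which one of the two swapped blocks is empty, as stated in the definition of $\kbswap$ in Section~\ref{sub:block_distances}. Hence any sequence of $\kbmov$ block moves that sorts~$A$ is simultaneously a valid sequence of at most $\kbmov$ block swaps, giving $\kbswap \leq \kbmov$.

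For $\kbmov \leq 2\kbswap$, I would show that any single block swap can be simulated by two block moves applied to the same array, and then iterate: starting from~$A$ and applying the two simulating moves for each of the $\kbswap$ block swaps in an optimal sorting sequence, we reach the sorted array using at most $2\kbswap$ block moves. To simulate a swap of $A[i,\ldots,j]$ with $A[i',\ldots,j']$ (where $i<j<i'<j'$), I would first move the second block $A[i',\ldots,j']$ to the position immediately before index~$i$, which produces an array of the form $(\ldots, B_2, B_1, M, \ldots)$ where $B_1=A[i,\ldots,j]$, $B_2=A[i',\ldots,j']$, and $M$ is the middle portion; then I would move $B_1$ to the position immediately after~$M$, yielding $(\ldots, B_2, M, B_1, \ldots)$, which is exactly the array produced by the block swap. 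Thus two block moves simulate one block swap.

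Neither step presents a real obstacle; the only care needed is in keeping track of the shifted positions after the first block move when specifying the target position of the second block move, so that the final array matches the block swap exactly. This bookkeeping is routine and does not change the count of two moves per swap, so both inequalities follow.
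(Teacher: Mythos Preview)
Your proposal is correct and follows the same approach as the paper's proof: the paper simply states that any block move is a block swap with an empty block, and that any block swap can be implemented by two block moves. Your version spells out the two-move simulation explicitly, which is fine but not required.
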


\begin{proof}
Any block move can be implemented by swapping the block with an empty block. Any block swap can be implemented by two block moves.
\end{proof}


\begin{proposition}\label{prop:krep_kswap}
$\krep\leq2\kswap$.
\end{proposition}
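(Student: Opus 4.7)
The plan is to observe that a single swap is implementable by two replacements: if a swap exchanges the contents of positions $i$ and $j$, with $A[i]=a$ and $A[j]=b$, we can achieve the same effect by first replacing the element at position $i$ with $b$ and then replacing the element at position $j$ with $a$. Since replacements are allowed to insert an arbitrary new element (as per the definition of \krep in Section~\ref{sub:edit_distances}), both steps are legal replacement operations.

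Given an optimal sequence of $\kswap$ swap operations that sorts~$A$, I would simulate each swap in this sequence by the two corresponding replacements, in the same order. The resulting sequence has length $2\kswap$ and transforms $A$ into a sorted array, which establishes $\krep\leq 2\kswap$.

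There is really no obstacle here; the argument is a one-line simulation. The only thing to be careful about is to confirm that the definition of replacement permits inserting any value (not, e.g., only values already present in the array), so that the two replacements simulating a swap are indeed valid. This is the case in the paper's setup.
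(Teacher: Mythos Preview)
Your proof is correct and follows exactly the same approach as the paper: each swap is implemented by two replacements, yielding a sorting sequence of length $2\kswap$ and hence $\krep\leq 2\kswap$. The paper states this in a single sentence, while you spell out the simulation explicitly, but there is no substantive difference.
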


\begin{proof}
A swap of two positions in the array can be implemented by two replacements.
\end{proof}


\subsection{Further relations}

\begin{proposition}\label{prop:ksum_kinv}
$\kinv \leq \ksum \leq 2\kinv$.
\end{proposition}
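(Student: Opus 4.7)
The plan is to prove the two inequalities $\kinv \leq \ksum$ and $\ksum \leq 2\kinv$ separately, since they rest on quite different arguments.

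For $\ksum \leq 2\kinv$, I would use a short elementwise charging. For each $x \in A$, define $L_x := |\{i < \pos(x) : A[i] > x\}|$ and $R_x := |\{i > \pos(x) : A[i] < x\}|$. Counting the $\pos(x) - 1$ entries to the left of $x$ in two ways yields the identity $L_x - R_x = \pos(x) - \rank(x)$, so $|\pos(x) - \rank(x)| = |L_x - R_x| \leq L_x + R_x$. Summing over $x$ and observing that every inversion is counted exactly twice in $\sum_x (L_x + R_x)$ (once by its left endpoint via $R$ and once by its right endpoint via $L$) gives $\ksum \leq 2\kinv$.

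For the reverse $\kinv \leq \ksum$, which is the classical Diaconis--Graham inequality relating Kendall-$\tau$ to Spearman's footrule, the key observation is that for any inversion pair of values $a < b$ with $\pos(a) > \pos(b)$, the signed displacements $D_x := \pos(x) - \rank(x)$ obey $D_a - D_b = (\pos(a) - \pos(b)) + (\rank(b) - \rank(a)) \geq 2$. My plan is to leverage this $2$-unit gap to map each inversion into the displacement mass $\sum_x |D_x| = \ksum$. The main obstacle is that a purely local charging fails: e.g.\ in $A = [4,3,2,1,5]$ the element $2$ has $|D_2| = 1$ but is the smaller value in two inversions, so charging each inversion naively to its ``far'' endpoint overloads this element. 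I expect to sidestep this by an induction on $\kinv + \ksum$ that processes adjacent-inversion swaps one at a time (each such swap reduces $\kinv$ by $1$ and $\ksum$ by either $0$ or $2$), using the global balance $\sum_x D_x = 0$ to certify that ``$-2$-swaps'' must occur often enough along the sorting sequence to close the gap between $\kinv$ and $\ksum$.
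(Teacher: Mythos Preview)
Your argument for $\ksum\le 2\kinv$ is correct and in fact cleaner than the paper's: the paper invokes $\kinv=\kaswap$ and the fact that each adjacent swap changes $\ksum$ by at most~$2$, whereas your counting via $L_x-R_x=\pos(x)-\rank(x)$ and $\sum_x(L_x+R_x)=2\kinv$ is a direct one-line charging that avoids any operational detour.

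For $\kinv\le\ksum$, however, your plan has a real gap. The induction you describe does not close: after a $0$-swap you only get $\kinv-1\le\ksum$ from the inductive hypothesis, which is $\kinv\le\ksum+1$, not what you need. You cannot repair this by always choosing a $-2$-swap, because such a choice is not always available: in $A=[3,2,1]$ we have $\ksum=4$ and $\kinv=3$, but \emph{both} adjacent inversions are $0$-swaps (each swap leaves $\ksum=4$). Your fallback, ``the global balance $\sum_x D_x=0$ certifies that $-2$-swaps occur often enough,'' is not an argument: along any inversion-removing sorting sequence the number of $-2$-swaps is exactly $\ksum/2$, so ``often enough'' means precisely $\ksum/2\ge\kinv/2$, which is the statement you are trying to prove. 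The balance $\sum_x D_x=0$ holds for every permutation and carries no information about the ratio.

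The paper sidesteps this by \emph{batching} adjacent swaps. It finds a displaced pair $e,e'$ with $\pos(e)<\rank(e)$, $\pos(e')>\rank(e')$, $\pos(e)<\pos(e')$, and all intermediate elements already in place; this forces $\rank(e)\ge\pos(e')$ and $\rank(e')\le\pos(e)$. Swapping $e$ and $e'$ then drops $\ksum$ by exactly $2d$ where $d=\pos(e')-\pos(e)$, but costs only $2d-1$ adjacent swaps. Iterating gives a sorting sequence of total length $\sum(2d_i-1)\le\sum 2d_i=\ksum$, and since $\kinv=\kaswap$ lower-bounds the length of any such sequence, $\kinv\le\ksum$ follows. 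The key idea you are missing is this non-adjacent swap of a well-chosen pair; processing single adjacent swaps cannot guarantee the required per-step progress, as $[3,2,1]$ shows.
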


\begin{proof}
$\kinv\leq\ksum$: If at least one element is displaced then at least one element $e$ occurs before $\rank(e)$ and at least one element $e'$ occurs after $\rank(e')$. If any element $e$ occurs before $\rank(e)$ then there must be a subsequent element $e'$, i.e., with $\pos(e)<\pos(e')$, that occurs after $\rank(e')$. Consider a pair $e$ and $e'$ of elements with $\pos(e)<\rank(e)$ and $\pos(e')>\rank(e')$ such that all elements $e''$ between $\pos(e)$ and $\pos(e')$ have their correct position $\pos(e'')=\rank(e'')$. (Possibly there are no such elements $e''$, but it should be clear that $e_1$ and $e_2$ can always be found if $\ksum>0$.) It follows that $\rank(e)\geq \pos(e')$ since $\rank(e)>\pos(e)$ and the positions between $\pos(e)$ and $\pos(e')$ are already filled with non-displaced elements. Similarly, $\rank(e')\leq\pos(e)$. Consider the operation of swapping $e$ and $e'$: This would lower the total displacement by $2(\pos(e')-\pos(e))$ since $\rank(e')\leq\pos(e)<\pos(e')$ and $\rank(e)\geq\pos(e')>\pos(e)$. In terms of swaps of adjacent elements the swap of $e$ and $e'$ costs exactly $2(\pos(e')-\pos(e)-1$. The relation follows since the number of adjacent swaps is at most the decrease in terms of total displacement.

$\ksum\leq2\kinv$: Recall that $\kinv=\kaswap$. Swapping any two adjacent elements can lower the total displacement by at most two since the elements are moved a total of two positions.
\end{proof}

\begin{proposition}\label{prop:kmax_kinv}
$\kmax\leq\kinv$.
\end{proposition}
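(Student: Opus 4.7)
My plan is to prove $\kmax \leq \kinv$ by exhibiting, for an element $x$ achieving the maximum displacement, at least $\kmax$ distinct inversions in which $x$ participates. Then $\kinv$ is bounded below by $\kmax$ just by counting the inversions at $x$.

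Concretely, let $x$ be an element with $|\pos(x)-\rank(x)| = \kmax$. By symmetry (swap the roles of $<$ and $>$), I may assume $\rank(x) > \pos(x)$, i.e., $x$ sits earlier in $A$ than it should. Set $i = \pos(x)$ and $d = \rank(x) - \pos(x) = \kmax$. By definition of $\rank$, there are exactly $\rank(x) - 1 = i + d - 1$ elements of $A$ strictly smaller than $x$. At most $i - 1$ of them can sit in positions $1,\ldots,i-1$ (the positions strictly left of $x$), so at least
\[
(i + d - 1) - (i - 1) = d = \kmax
\]
elements smaller than $x$ must lie in positions $j > i$. For each such $j$, the pair $(i,j)$ satisfies $i < j$ and $A[i] = x > A[j]$, hence is an inversion. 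These are distinct pairs because they differ in their second coordinate, so we have produced $\kmax$ distinct inversions; thus $\kinv \geq \kmax$.

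The symmetric case $\pos(x) > \rank(x)$ is handled analogously, producing $\kmax$ inversions of the form $(j,i)$ with $j < i = \pos(x)$ and $A[j] > x$. I don't foresee any real obstacle here, since the argument is just careful accounting of how many smaller-than-$x$ (or larger-than-$x$) elements have to end up on the ``wrong side'' of $x$ given its rank; the only thing to watch is being consistent about strict vs.\ non-strict inequalities, which is fine under the paper's standing assumption that entries of $A$ are unique.
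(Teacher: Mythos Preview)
Your proof is correct, but it takes a genuinely different route from the paper's. The paper invokes the equivalence $\kinv=\kaswap$ (Proposition~\ref{prop:kaswap_kinv}) and then argues dynamically: each adjacent swap moves two elements by one position each, so it can reduce the maximum displacement by at most one; hence at least $\kmax$ adjacent swaps are needed to sort. You instead give a direct static counting argument at a single maximally displaced element~$x$, showing that~$x$ alone participates in at least $\kmax$ inversions. Your approach is more elementary in that it does not rely on $\kinv=\kaswap$, and it yields the slightly sharper statement that some single element witnesses $\kmax$ inversions. The paper's approach, on the other hand, reuses machinery already established in the appendix and is a one-line consequence of that equivalence.
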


\begin{proof}
Recall that $\kinv=\kaswap$. Every swap of adjacent elements moves the elements by exactly one position each. Thus, the maximum displacement is lowered by at most one.
\end{proof}


\begin{proposition}\label{prop:krep_kinv}
$\krep\leq\kinv$. 
\end{proposition}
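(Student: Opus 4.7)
The plan is to reduce the claim to $\kseq \leq \kinv$ via the identity $\krep = \kseq$ from Proposition~\ref{prop:kseq_krep}, and then prove $\kseq \leq \kinv$ by a vertex cover argument on the so-called inversion graph. Recall that $\kseq$ equals the minimum number of indices that must be removed from $A$ so that the remaining subsequence is non-decreasing.

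The key step is the following observation: define the inversion graph $G$ on vertex set $\{1,\dots,n\}$ whose edges are exactly the inversion pairs $\{i,j\}$ with $i<j$ and $A[i]>A[j]$; by definition this graph has $\kinv$ edges. A subset $I \subseteq \{1,\dots,n\}$ of indices induces a non-decreasing subsequence of $A$ if and only if $I$ contains no inversion pair, i.e., if and only if $I$ is an independent set in $G$. Consequently, $\kseq$ equals the minimum size of a vertex cover of $G$.

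It then remains to invoke the trivial bound that every graph admits a vertex cover of size at most its number of edges, obtained by selecting one endpoint from each edge. Applied to $G$, this yields a vertex cover of size at most $\kinv$, hence $\kseq \leq \kinv$, and combining with $\krep=\kseq$ gives $\krep\leq\kinv$ as claimed. I do not expect any real obstacle here: the only step to verify with care is the equivalence between non-decreasing subsequences and independent sets in the inversion graph, which is immediate from the definitions, and the vertex-cover-at-most-edge-count bound is textbook.
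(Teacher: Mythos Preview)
Your argument is correct. The equivalence between non-decreasing subsequences and independent sets in the inversion graph is immediate, and the bound ``vertex cover $\leq$ number of edges'' is standard, so $\kseq\leq\kinv$ follows; combined with Proposition~\ref{prop:kseq_krep} this yields $\krep\leq\kinv$.

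Your route is genuinely different from the paper's. The paper argues directly and constructively: it repeatedly locates the first element $x$ that participates in some inversion, observes that the element $y$ immediately preceding $x$ is a lower bound for everything to the right, and replaces $x$ by a value just above $y$ (or by $y$ itself if $y$ reoccurs later). Each such replacement kills at least one inversion, so at most $\kinv$ replacements are needed. Your proof instead leverages the already-established identity $\krep=\kseq$ and then gives a clean graph-theoretic bound. The advantage of your approach is conceptual transparency---once $\krep=\kseq$ is in hand, the inequality is a one-line vertex-cover fact---while the paper's proof is self-contained (it does not rely on Proposition~\ref{prop:kseq_krep}) and explicitly exhibits the replacements.
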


\begin{proof}
Consider the first element, say $x$, in the sequence that is in an inversion (with some later element). It follows that elements preceding $x$ are not larger than any later element. In particular, the directly preceding element, say $y$, must be strictly smaller than $x$ and not exceed any later element. Now, if some later element is equal to $y$ then replace $x$ by $y$; else, replace it by an arbitrary value that is larger than $y$ but smaller than any later element. Clearly, in both cases all inversions involving $x$ are handled (at least one), proving the bound. (Note that the replacement rules ensure that arrays of unique numbers will retain this property. Setting $x$ to the value of $y$ is only done if needed, i.e., if that value already occurs at least one more time.)
\end{proof}


\begin{proposition}\label{prop:kainv_kbswap}
$\kainv\leq2\kbswap$. 
\end{proposition}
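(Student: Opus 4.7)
The plan is to argue by reversal. By definition of $\kbswap$, some sequence of $\kbswap$ block swaps transforms $A$ into a sorted array; reversing it (each block swap being its own inverse) produces a sequence of $\kbswap$ block swaps that builds $A$ from a sorted array. Since the sorted array has $\kainv = 0$, it suffices to prove the key lemma that a single block swap increases the number of adjacent inversions by at most $2$. Telescoping along the $\kbswap$ swaps then yields $\kainv \leq 2\kbswap$.

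For the single-swap lemma, consider swapping $B_1 = A[i..j]$ with $B_2 = A[i'..j']$ where $j < i'$. Only the adjacencies at the (at most four) boundaries of the two blocks change; everything internal to each block or outside both blocks is preserved. With the shorthand $a = A[i-1]$, $b_1 = A[i]$, $c_1 = A[j]$, $d = A[j+1]$, $e = A[i'-1]$, $b_2 = A[i']$, $c_2 = A[j']$, $f = A[j'+1]$ and $\chi(x,y) := 1$ iff $x > y$, in the generic case ($i > 1$, $j+1 < i'$, $j' < n$) the change in the number of adjacent inversions is
\[
\Delta = \bigl[\chi(a,b_2) + \chi(e,b_1) - \chi(a,b_1) - \chi(e,b_2)\bigr] + \bigl[\chi(c_2,d) + \chi(c_1,f) - \chi(c_1,d) - \chi(c_2,f)\bigr].
\]
I would bound each bracket by $1$ separately. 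For the first, assume without loss of generality that $b_1 \leq b_2$; then $a > b_2$ forces $a > b_1$, so $\chi(a,b_2) - \chi(a,b_1) \leq 0$, while trivially $\chi(e,b_1) - \chi(e,b_2) \leq 1$. The symmetric argument on $c_1, c_2$ bounds the second bracket. Hence $\Delta \leq 2$.

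The main obstacle is the degenerate cases where some of the eight reference positions coincide or are undefined. The edge cases $i = 1$ or $j' = n$ simply drop a boundary pair, and the pairing above continues to yield $\Delta \leq 2$. The delicate case is adjacent blocks, $j + 1 = i'$, where $d = b_2$ and $e = c_1$, collapsing the two middle boundaries into one. The three remaining increments are then $\chi(a,b_2) - \chi(a,b_1)$, $\chi(c_2,b_1) - \chi(c_1,b_2)$, and $\chi(c_1,f) - \chi(c_2,f)$; I would rule out all three being $+1$ by observing that this would force $a > b_2$, $a \leq b_1$, $c_2 > b_1$, $c_1 \leq b_2$, $c_1 > f$, $c_2 \leq f$, yielding the impossible chain $c_2 > b_1 \geq a > b_2 \geq c_1 > f \geq c_2$. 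Thus $\Delta \leq 2$ in every case, completing the proof.
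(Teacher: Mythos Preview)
Your proof is correct and follows essentially the same approach as the paper's: both argue that a single block swap can raise the number of adjacent inversions by at most $2$, then telescope over the $\kbswap$ swaps (implicitly reversing the sorting sequence). Your bracket decomposition for the non-adjacent case is just a repackaging of the paper's observation that ``if both new boundary pairs are inversions then at least one of the old ones was''; and where the paper merely asserts that the adjacent-block case ``can be verified \ldots\ else leading to a simple contradiction'', you actually carry out that contradiction via the chain $c_2 > b_1 \geq a > b_2 \geq c_1 > f \geq c_2$. So your argument is the paper's argument, made more explicit at the degenerate boundaries.
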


\begin{proof}
This can be proved by analyzing the three different types of block swaps: (i) block moves, i.e., swapping a nontrivial block with an empty block, (ii) swapping two nontrivial, nonadjacent blocks, (iii) swapping two nontrivial, adjacent blocks. Cases (i) and (iii) can be verified to only increase the number of adjacent inversions by at most two, else leading to a simple contradiction.
For (ii), assume that we start with
\[
\ldots,a,b,\ldots,c,d,\ldots,a',b',\ldots,c',d',\ldots
\]
and swap $b,\ldots,c$ with $b',\ldots,c'$ to obtain
\[
\ldots,a,b',\ldots,c',d,\ldots,a',b,\ldots,c,d',\ldots.
\]
Note that only the eight pairs $(a,b)$, $(c,d)$, $(a',b')$, $(c',d')$, $(a,b')$, $(c',d)$, $(a',b)$, and $(c,d')$ matter for upper-bounding the increase in number of adjacent inversions. If $a>b'$ and $a'>b$ then $a>b$ or $a'>b'$ must hold (depending on $b\geq b'$ or $b<b'$); in other words, if there are adjacent inversions at both $(a,b')$ and $(a',b)$ after the block swap then among $(a,b)$ and $(a',b')$ there was at least one adjacent inversion. Similarly, if $c'>d$ and $c>d'$ then $c>d$ or $c'>d'$ must hold, i.e., if we have inversions at both $(c',d)$ and $(c,d')$ then we had at least one adjacent inversion among $(c,d)$ and $(c',d')$. Thus, the total of adjacent inversions increases by at most two with this block swap.
\end{proof}


\begin{proposition}\label{prop:kainv_kseq}
$\kainv\leq\kseq$ 
\end{proposition}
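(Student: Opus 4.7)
The plan is to fix a longest nondecreasing subsequence $S$ of $A$ and carefully account for adjacent inversions relative to $S$. Let the indices of $S$ in increasing order be $s_1 < s_2 < \cdots < s_m$, where $m = n - \kseq$, and introduce the sentinels $s_0 := 0$ and $s_{m+1} := n+1$. Define the gap sizes $g_j := s_{j+1} - s_j - 1$ for $0 \le j \le m$, so that $\sum_{j=0}^{m} g_j = n - m = \kseq$. The strategy is to partition the $n-1$ adjacent-inversion positions $\{1,\dots,n-1\}$ into $m+1$ contiguous segments aligned with the $s_j$'s, and to bound the number of inversions per segment by the corresponding gap.

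Concretely, I would assign the inversion position $i \in \{1,\dots,n-1\}$ to the segment $P_j$, where $j$ is the largest index with $s_j \le i$. This yields a prefix $P_0 = \{1,\dots,s_1-1\}$ of $g_0$ positions, internal segments $P_j = \{s_j,\dots,s_{j+1}-1\}$ of $g_j+1$ positions for $1 \le j \le m-1$, and a suffix $P_m = \{s_m,\dots,n-1\}$ of $g_m$ positions. The key observation is for the internal segments: since $A[s_j] \le A[s_{j+1}]$, the chain of $g_j+1$ consecutive comparisons $A[s_j]\,?\,A[s_j+1],\dots,A[s_{j+1}-1]\,?\,A[s_{j+1}]$ cannot consist entirely of strict inversions (otherwise the chain would force $A[s_j] > A[s_{j+1}]$, a contradiction). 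Hence at most $g_j$ of these positions are adjacent inversions. For the prefix and suffix segments, the trivial upper bound coming from the segment size gives at most $g_0$ respectively $g_m$ inversions.

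Summing over all segments yields
\[
\kainv \;\le\; g_0 + \sum_{j=1}^{m-1} g_j + g_m \;=\; \sum_{j=0}^{m} g_j \;=\; n - m \;=\; \kseq,
\]
which is the desired bound. The main subtlety is only the bookkeeping at the ends of the array (the cases $s_1 = 1$ and $s_m = n$, where a boundary segment becomes empty) and the fact that the internal segments must be closed on the left at $s_j$ but open on the right at $s_{j+1}$ to partition $\{1,\dots,n-1\}$ exactly; both are handled cleanly by the sentinel choice $s_0 = 0$, $s_{m+1} = n+1$.
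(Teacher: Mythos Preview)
Your proof is correct and follows essentially the same approach as the paper: both fix a longest nondecreasing subsequence, consider the maximal runs of non-subsequence elements between consecutive subsequence positions, and use the fact that in a chain of consecutive comparisons bridging two subsequence elements not all can be strict descents. Your version is a bit more explicit about the bookkeeping (sentinels, gap sizes, and the prefix/suffix segments), whereas the paper leaves the boundary runs implicit, but the underlying argument is identical.
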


\begin{proof}
Pick any $\kseq$ positions $P$ such that the subsequence $S$ of the remaining $n-\kseq$ positions is sorted. Clearly, any adjacent inversions must be between elements of $P$ or between an element of $P$ and an element of $S$. Consider any subsequence $x,y_1,\ldots,y_p,z$ where $p\geq 1$ and $y_1,\ldots,y_p\in P$ and $x,z\in S$. If there is the maximum of $p+1$ adjacent inversions then it follows that $x>y_1>\ldots>y_p>z$, violating that $x<z$ in the sorted subsequence $S$. Else, there are at most $p$ adjacent inversions incident with $y_1,\ldots,y_p\in P$. Thus, overall, have at most $\kseq$ adjacent inversions.
\end{proof}


\subsection{Unboundedness results}

Here we give pairs of measures such that the second can be unbounded, even if the first is constant. Each such relation is represented by a dashed red arc in Figure~\ref{fig:relations}. Note that for any pair of parameters not connected with a directed path in the figure, unboundedness follows, because any bound would produce some path that contradicts an unboundedness relation.

The first proposition of this type covers the comparison of all parameters other than $\kmax$ with $\kmax$ since we showed that $\kainv$ is bounded whenever any other parameter (except $\kmax$) is bounded.

\begin{proposition}
There exist arrays with $\kmax=1$ and $\kainv=\Omega(n)$.
\end{proposition}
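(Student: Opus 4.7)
My plan is to exhibit an explicit family of arrays achieving $\kmax = 1$ while forcing $\kainv = \Omega(n)$. The construction I have in mind is to take $n$ even and to swap every consecutive pair of elements in a sorted array; that is, consider
\[
A = [2,1,4,3,6,5,\ldots,n,n-1].
\]
This is easy to write down and analyze, which should make both claims immediate.

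First I would verify $\kmax = 1$. Each element $2i$ sits in position $2i-1$ and each element $2i-1$ sits in position $2i$, so every element is displaced by exactly $1$ from its rank. Hence the maximum displacement is $1$. Next I would count adjacent inversions. For each pair of indices $(2i-1, 2i)$ we have $A[2i-1] = 2i > 2i-1 = A[2i]$, contributing an inversion. For each pair $(2i, 2i+1)$ we have $A[2i] = 2i-1 < 2i+2 = A[2i+1]$, contributing no inversion. Thus $\kainv = n/2 = \Omega(n)$, which concludes the proof.

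There is no real obstacle here; the only thing to be slightly careful about is to handle the parity of $n$ (for odd $n$ one can append the element $n$ at the end, which changes none of the counts by more than an additive constant).
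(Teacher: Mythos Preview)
Your proposal is correct and uses exactly the same construction as the paper: the array $A=[2,1,4,3,\ldots,n,n-1]$ for even $n$, with the same verification that each element is displaced by exactly one and that there are $n/2$ adjacent inversions.
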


\begin{proof}
For given even integer $n$, consider the array $A=[2,1,4,3,\ldots,n,n-1]$. Clearly, each element $e$ is (exactly) one position away from $\rank(e)$. However, we find that the array has $\frac{n}{2}=\Omega(n)$ adjacent inversions.
\end{proof}


\begin{proposition}
There exist arrays with $\kswap=1$ and $\kmax=\Omega(n)$.
\end{proposition}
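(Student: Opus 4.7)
The plan is to exhibit an explicit single-parameter family of arrays witnessing the claim. Take the sorted array $[1,2,\ldots,n]$ and perform a single swap of the first and last entries to obtain
\[
A = [n, 2, 3, \ldots, n-1, 1].
\]
By construction $\kswap(A) \le 1$, and since $A$ is not sorted we have $\kswap(A) = 1$.

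For $\kmax$, observe that $\pos(1) = n$ while $\rank(1) = 1$, so element $1$ contributes displacement $n-1$ (element $n$ gives the same bound symmetrically). Hence $\kmax(A) \ge n-1 = \Omega(n)$, which establishes the proposition. I do not anticipate any obstacle; the only thing to double-check is that a single swap between non-adjacent positions is permitted under the definition of \kswap given in Section~\ref{sub:edit_distances}, which it is (a swap exchanges the contents of two array positions, not necessarily neighboring).
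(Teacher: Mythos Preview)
Your proof is correct and matches the paper's argument essentially verbatim: both use the array $[n,2,3,\ldots,n-1,1]$, observe that one swap sorts it, and note that the displacement of $1$ (or $n$) is $n-1$.
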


\begin{proof}
For given $n$, consider the array $A=[n,2,3,\ldots,n-2,n-1,1]$. A single element swap suffices to reach a sorted array, but the maximum displacement is $n-1=\Omega(n)$. 
\end{proof}


\begin{proposition}
There exist arrays with $\krep=1$ and $\kmax=\Omega(n)$.
\end{proposition}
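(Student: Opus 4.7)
The plan is simply to exhibit, for every sufficiently large $n$, a single array with the claimed properties. The guiding observation is that $\krep = 1$ says there is a set of $n-1$ positions on which the array is already sorted (in the induced subsequence), so the whole array is sorted except for one ``outlier'' entry whose value can be freely fixed. To force $\kmax$ to be large, I would place this outlier at one end of the array while giving it a value whose rank sits at the \emph{opposite} end.

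Concretely, I would take
\[
A \;=\; [\,1,\,2,\,3,\,\ldots,\,n-1,\,0\,].
\]
All entries are distinct, so the paper's standing uniqueness assumption is respected. The first $n-1$ entries already form an increasing sequence, and replacing $A[n] = 0$ by any value strictly larger than $n-1$ (say $n$) yields the sorted array $[1,2,\ldots,n]$. Hence $\krep \leq 1$; and since $A$ itself is not sorted, $\krep = 1$.

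For the displacement bound, note that $A[n] = 0$ is the smallest entry of $A$, so $\rank(A[n]) = 1$ while $\pos(A[n]) = n$. Therefore
\[
\kmax \;\geq\; |\pos(A[n]) - \rank(A[n])| \;=\; n-1 \;=\; \Omega(n),
\]
which is exactly the required lower bound.

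There is essentially no obstacle in this proof; the only point to be careful about is the uniqueness of entries, which is handled by choosing $0$ (or any value outside $\{1,\ldots,n-1\}$) as the outlier. One could equally well take the ``mirrored'' array $[\,n,\,1,\,2,\,\ldots,\,n-1\,]$ and replace the leading $n$ by $0$; the analysis is identical.
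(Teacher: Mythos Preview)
Your proof is correct and takes essentially the same approach as the paper: exhibit a sorted array with one outlier value placed at the opposite end from where its rank would put it. The paper uses the mirrored version $A=[n+1,2,3,\ldots,n]$ (outlier at the front), which you yourself mention as an equivalent alternative.
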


\begin{proof}
For given $n$, consider the array $A=[n+1,2,3,\ldots,n-2,n-1,n]$. A single element replacement, namely $n+1$ by $1$, suffices to reach a sorted array, but the maximum displacement is $n-1=\Omega(n)$. 
\end{proof}


\begin{proposition}
There exist arrays with $\krep=1$ and $\krbswap=\Omega(\log n)$.
\end{proposition}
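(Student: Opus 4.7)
I would take $A = [n+1, 2, 3, \ldots, n]$, i.e., the same array as in the preceding proposition. Replacing the value $n+1$ at position~$1$ by $1$ produces the sorted array $[1, 2, 3, \ldots, n]$, so $\krep(A) = 1$ is immediate. The content of the proposition is the lower bound $\krbswap(A) = \Omega(\log n)$, where the sorted permutation of the values of $A$ is $[2, 3, \ldots, n, n+1]$; sorting $A$ by restricted block swaps amounts to realising the cyclic shift that sends position~$1$ to position~$n$ and shifts all other entries left by one.

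For the lower bound I would argue about the equivalent reverse problem: how many restricted block swaps are required to produce $A$ from the sorted sequence $[2, 3, \ldots, n, n+1]$? A restricted block swap of size~$s$ between positions $[i, i{+}s{-}1]$ and $[j, j{+}s{-}1]$ acts as the permutation consisting of $s$ ``parallel'' disjoint transpositions $(i{+}t, j{+}t)$ for $t = 0, \ldots, s{-}1$, all with the same shift $j - i$. Any sorting sequence of length $k$ decomposes the cyclic shift into $k$ such parallel-transposition blocks, and the task is to lower-bound $k$. I would use a potential/doubling argument tracking the position of the distinguished value $n+1$ (which must migrate from position~$1$ to position~$n$) together with a measure of the collateral disorder it introduces: whenever $n+1$ is moved by a distance $d$ in a size-$s$ block, the other $s-1$ elements of that block are rigidly carried by $d$, creating fresh misalignments that later swaps have to repair. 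A careful accounting should yield a recurrence of the form $f(n) \geq 1 + f(\lceil n/2 \rceil)$ on the minimum number of restricted block swaps required, solving to $f(n) = \Omega(\log n)$.

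The hard part is designing a clean enough invariant. ``Position of $n+1$'' alone is too weak, since a single size-$1$ swap can place it anywhere, and ``number of misplaced elements'' can halve in a single swap that exchanges a long correctly-sorted block with a long disordered one. A workable potential should combine the remaining distance of $n+1$ from its target with a measure of committed disorder, such as the length of the longest contiguous correctly-sorted prefix or suffix. The crux will then be a case analysis on how the two swapped blocks intersect the currently sorted region, showing that no single restricted block swap can improve this combined potential by more than a bounded multiplicative factor; this case analysis is where the technical work of the proof concentrates.
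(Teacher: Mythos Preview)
Your array choice and the verification that $\krep=1$ are fine and coincide with the paper's. The problem is that your lower-bound argument is not a proof but an outline in which the decisive step is explicitly left open: you write that ``the hard part is designing a clean enough invariant'' and that the required case analysis ``is where the technical work of the proof concentrates'', yet you supply neither the invariant nor the case analysis. Your two candidate potentials are, as you yourself observe, individually too weak---a size-$1$ swap moves $n{+}1$ arbitrarily far, and a single swap of two long blocks can halve the number of misplaced elements---and you give no argument that any specific combination of them decreases by at most a constant factor per restricted block swap. As written, nothing in the proposal actually establishes $\krbswap=\Omega(\log n)$; the recurrence $f(n)\ge 1+f(\lceil n/2\rceil)$ is asserted, not derived.

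The paper avoids potentials altogether via a self-similarity induction. It proves that any array of size $n=4^d$ of the form ``one large element followed by an increasing run of smaller ones'' needs at least $d$ restricted block swaps. For the inductive step it looks at where the large element $A[1]$ lands after the \emph{first} swap of an optimal sequence. If it lands at a position $i\le 3n/4$, the window $A'[i,\dots,i+n/4-1]$ again has the same ``large element followed by increasing'' structure at scale $n/4$; if $i>3n/4$, the paper instead follows the largest index $j$ that was moved and finds a window of the same kind around its new location $j'$. Either way one obtains a subinstance of size $4^{d-1}$, and induction supplies the remaining $d-1$ swaps. This structural recursion is exactly the kind of ``clean invariant'' you were searching for; trying to synthesise it as a numerical potential is both harder and unnecessary here.
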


\begin{proof}
We prove by induction on~$d$ that an array~$A$ of size~$n = 4^d$ with a large element followed by an increasing sequence of smaller elements cannot be sorted with less than~$d$ restricted block swaps.
The claim trivially holds for~$d\in\{0,1\}$.
For~$d>1$, consider any shortest possible sequence of restricted block swaps to sort the array.
Let~$A'$ be the array after executing the first swap only and let $i$ be the index of~$A[1]$ in~$A'$.
If~$i \leq 3n/4$, we can apply induction on~$A'[i, i+1, \dots, i+n/4-1]$.
If~$i > 3n/4$, let~$j$ be the largest index with~$A[j] \neq A'[j]$ and let~$j'$ be the index of~$A[j]$ in~$A'$.
We can then apply induction on~$A'[j', j'+1, \dots, j'+n/4-1]$ to obtain the claimed bound.
\end{proof}


\begin{proposition}
There exist arrays with $\krbswap=1$ and $\kseq=\Omega(n)$.
\end{proposition}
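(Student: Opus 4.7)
The plan is to exhibit an explicit family of arrays. For any even $n$, let
\[
A = [n/2+1,\, n/2+2,\, \ldots,\, n,\, 1,\, 2,\, \ldots,\, n/2].
\]
This array is obtained from the sorted array $[1,\ldots,n]$ by a single restricted block swap exchanging the two halves (both of size $n/2$), so $\krbswap(A) \leq 1$. Since $A$ is not sorted, $\krbswap(A)=1$.

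It remains to show that the longest nondecreasing subsequence of $A$ has length at most $n/2$, which gives $\kseq(A)\geq n/2=\Omega(n)$. Given any nondecreasing subsequence $S$, let $S_1$ be its restriction to positions $1,\ldots,n/2$ and $S_2$ be its restriction to positions $n/2+1,\ldots,n$. If both $S_1$ and $S_2$ are nonempty, then the last element of $S_1$ is at least $n/2+1$ (since all entries in the first half exceed $n/2$) while the first element of $S_2$ is at most $n/2$, contradicting the nondecreasing property. Hence either $S \subseteq S_1$ or $S \subseteq S_2$, and in both cases $|S|\leq n/2$.

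The only subtlety is handling the case $|S|=n/2$ precisely, but this is immediate from the two halves being individually sorted. No further computation is needed, and there is no real obstacle: the construction decouples the two halves under any nondecreasing selection.
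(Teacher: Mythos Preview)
Your proof is correct and takes essentially the same approach as the paper: exhibit the array obtained by swapping the two halves of $[1,\ldots,n]$, note that $\krbswap=1$, and argue that any nondecreasing subsequence lies entirely in one half. Your version is in fact slightly cleaner than the paper's (which appears to have an off-by-one in the listed array), and you spell out the subsequence argument explicitly where the paper just asserts it.
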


\begin{proof}
For given even integer $n$, consider the array $A=[\frac{n}{2},\frac{n}{2}+1,\ldots,n,1,2,\ldots,\frac{n}{2}-1]$. Clearly, $A$ can be turned into a sorted array by a single restricted block swap, i.e., it has $\krbswap=1$. Its longest sorted subsequence, however, has length $\frac{n}{2}$, implying $\kseq=\frac{n}{2}=\Omega(n)$.
\end{proof}


\begin{proposition}\label{prop:kbswap_kainv}
There exist arrays with $\kainv=1$ and $\kbswap=\Omega(n)$.
\end{proposition}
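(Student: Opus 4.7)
The approach will be to exhibit an explicit family: for even $n$, take $A_n = [2, 4, 6, \ldots, n,\; 1, 3, 5, \ldots, n-1]$. First I verify that $\kainv(A_n) = 1$, since both halves are individually strictly increasing and the only adjacent inversion is between position $n/2$ (holding $n$) and position $n/2 + 1$ (holding $1$).

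For the lower bound on $\kbswap(A_n)$ I will use a potential argument based on \emph{breakpoints}: define $b(A)$ as the number of indices $i \in \{1,\ldots,n-1\}$ with $A[i+1] \neq A[i]+1$. Then $b([1,2,\ldots,n]) = 0$, while $b(A_n) = n-1$, since in $A_n$ no two adjacent entries are consecutive integers (the gap is always $\pm 2$ within a half and $n-1$ at the middle).

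The central claim is that a single block swap decreases $b$ by at most $4$. Swapping $A[i..j]$ with $A[i'..j']$ (say $j < i'$) only changes adjacencies at the boundary positions of the two moved blocks, namely the pairs $(i-1, i)$, $(j, j+1)$, $(i'-1, i')$, and $(j', j'+1)$, which in the new array become (respectively) $(A[i-1], A[i'])$, $(A[j'], A[j+1])$, $(A[i'-1], A[i])$, and $(A[j], A[j'+1])$. All adjacencies internal to either moved block, as well as those internal to the (possibly shifted) middle segment $A[j+1..i'-1]$ and the outer segments, are preserved verbatim. Hence at most $4$ adjacencies can switch breakpoint status per swap. Combining the initial value $n-1$ with the target value $0$ yields $\kbswap(A_n) \geq (n-1)/4 = \Omega(n)$.

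The main obstacle will be the bookkeeping for the effect of a block swap on boundary adjacencies in the corner cases, particularly when the two blocks have different sizes (the middle segment shifts, but its internal adjacencies, and thus its breakpoint contribution, are preserved) and when the blocks are directly adjacent, i.e., $j+1 = i'$ (the middle segment is empty and two of the four boundary adjacencies coincide, so even fewer breakpoints can change). Both cases only need a brief separate check and yield the same $4$-per-swap bound, at which point the proposition follows.
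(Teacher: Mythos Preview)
Your proof is correct and takes a genuinely different route from the paper's. The paper uses essentially the same array (odds followed by evens rather than evens followed by odds) but argues via the positions of consecutive odd (respectively even) values: initially each pair $(2i-1,2i+1)$ is adjacent in the array, finally they are at distance~$2$, and the paper claims each block swap can increase at most one such distance, yielding a lower bound of $n/2-1$. Your breakpoint potential is the standard tool from the genome-rearrangement literature and is cleaner to verify rigorously: the ``at most four boundaries change'' observation is immediate once one writes the five contiguous segments of the swapped array, and your handling of the degenerate cases (unequal block sizes, adjacent blocks, blocks touching the array ends) is routine. The trade-off is that your argument gives only $(n-1)/4$ rather than $n/2-1$, but since the proposition only asks for $\Omega(n)$ this is immaterial, and your version has the advantage of being self-contained and not relying on an ad hoc invariant.
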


\begin{proof}
For a given even integer $n$, consider the array $A=[1,3,5,\dots,n-1, 2, 4, \dots, n]$.
Clearly, $A$ has a single adjacent inversion.
To sort the array with block swaps, there needs to be a block swap that increases the distance between consecutive odd/even numbers.
Each block swap can increase at most one such distance, hence we need at least~$\frac{n}{2} - 1$ block swaps to sort the array.
\end{proof}


\end{document}